\newcommand{\myparagraph}[1]{\smallskip\noindent {\bf #1.}}
\newtheorem{theorem}{Theorem}[section]
\newtheorem{lemma}{Lemma}[section]
\newtheorem{definition}{Definition}[section]
\newtheorem{remark}{Remark}[section]
\algrenewcommand\algorithmicindent{1em}
\algnewcommand{\algcomment}[1]{\hfill{\color{purple!40!black}\emph{// #1}}}    
\algnewcommand{\alglinecomment}[1]{{\color{purple!40!black}\emph{// #1}}}      
\algnewcommand\alglocal{\textbf{local }}                            
\algnewcommand\algreturn{\textbf{return }}                          
\algnewcommand\algeach{\textbf{each }}                              
\algnewcommand\algretire{\textbf{retire process}}                   
\algnewcommand\algmodassign{\textbf{write}}                         
\algnewcommand\algwritemod[2]{\algmodassign(#1, #2)}          		
\algnewcommand\algto{\textbf{to }}									
\algnewcommand\algis{\textbf{is}}                                   
\algnewcommand\algnot{\textbf{not}}                                 
\algnewcommand\algin{\textbf{in}}                                   
\algnewcommand\algempty{\textbf{empty}}                             
\algnewcommand\algand{\textbf{ and }}                               
\algnewcommand\algor{\textbf{ or }}                                 
\algnewcommand\algassign{\ensuremath{\gets} }                       
\algnewcommand\algtrue{\textbf{true}}                               
\algnewcommand\algfalse{\textbf{false}}                             
\algnewcommand\algnull{\ensuremath{\perp}}                          
\algnewcommand\algarray[1]{\textnormal{array}\ensuremath{\langle}#1\ensuremath{\rangle}}    
\algnewcommand\algmod[1]{\textbf{mod}\ensuremath{\langle}#1\ensuremath{\rangle}}            
\algnewcommand\algorithmicwith{\textbf{with}}
\algnewcommand\algorithmicread{\textbf{read}}
\algnewcommand\algorithmicas{\textbf{as}}
\algnewcommand\Read[2]{\State \alglocal #2 \algassign \algorithmicread(#1)}
\algnewcommand\EndRead{}
\algnewcommand\algorithmicinparallel{\textbf{in parallel}}
\newcommand{\expct}[1]{\mathop{{}\mathrm{\bf E}}\left[ #1 \right]}
\newcommand{\probb}[1]{\mathop{{}\mathrm{\bf P}}\left[ #1 \right]}
\newcommand{\setsize}[1]{\left|#1\right|}
\newcommand{\bigO}[1]{O\left( #1 \right)}
\providecommand{\set}[1]{\ensuremath{\left\{#1\right\}}}
\newenvironment{proofsketch}{%
  \proof}{\endproof}
\newcommand{\mpram}{$\mathsf{TRAM}$}
\newcommand{\PRAM}[0]{\ensuremath{\mathsf{PRAM}}}
\newcommand{\crcwpram}{CRCW \ensuremath{\mathsf{PRAM}}}
\newcommand{\multipram}[0]{\emph{multiprefix} CRCW \ensuremath{\mathsf{PRAM}}\renewcommand{\multipram}[0]{multiprefix CRCW \ensuremath{\mathsf{PRAM}}}}
\newcommand{\Roundbased}{Round-synchronous}
\newcommand{\roundbased}{round-synchronous}
\newcommand{\roundBased}{round synchronous}
\newcolumntype{P}[1]{>{\centering\arraybackslash}p{#1}}
\begin{document}

  \title{Parallel Batch-dynamic Trees via Change Propagation}

  \author{\normalsize Umut A. Acar \\ \normalsize Carnegie Mellon University \\ \normalsize umut@cs.cmu.edu \and \normalsize Daniel Anderson \\ \normalsize Carnegie Mellon University \\ \normalsize dlanders@cs.cmu.edu \and \normalsize Guy E. Blelloch \\ \normalsize Carnegie Mellon University \\ \normalsize guyb@cs.cmu.edu \and \normalsize Laxman Dhulipala \\ \normalsize Carnegie Mellon University \\ \normalsize ldhulipa@cs.cmu.edu \and \normalsize Sam Westrick \\ \normalsize Carnegie Mellon University \\ \normalsize swestric@cs.cmu.edu}
  \date{}

  \maketitle
  \begin{abstract}
The dynamic trees problem is to maintain a forest subject to edge insertions and deletions while facilitating queries such as connectivity, path weights, and subtree weights. Dynamic trees are a fundamental building block of a large number of graph algorithms.
Although traditionally studied in the single-update setting, dynamic algorithms capable of supporting batches of updates are increasingly
relevant today due to the emergence of rapidly evolving dynamic
datasets.
Since processing updates on a single processor is often unrealistic
for large batches of updates, designing parallel batch-dynamic algorithms that
achieve provably low span is important for many applications.

In this work, we design the first work-efficient parallel batch-dynamic algorithm for dynamic trees that is capable of supporting both path queries and subtree queries, as well as a variety of non-local queries. Previous work-efficient dynamic trees of Tseng et al.\ were only capable of handling subtree queries under an invertible associative operation [\textit{ALENEX'19}, (2019), pp.~92--106]. To achieve this, we propose a framework for
algorithmically dynamizing static \roundbased{} algorithms that allows us to
obtain parallel batch-dynamic algorithms with good bounds on their
work and span. In our framework, the algorithm designer can apply the technique to any suitably defined static algorithm.
We then obtain theoretical guarantees for algorithms in our framework
by defining the notion of a computation distance between
two executions of the underlying algorithm.

Our dynamic trees algorithm is obtained by applying our dynamization framework to the parallel tree contraction algorithm of Miller and Reif [\textit{FOCS'85}, (1985), pp.~478--489], and then performing a novel analysis of the computation distance of this algorithm under batch updates. We show that $k$ updates can be performed in $O(k\log(1+n/k))$ work in expectation, which matches the algorithm of Tseng et al.~while providing support for a substantially larger number of queries and applications.
\end{abstract}
  \thispagestyle{empty}
  
  \clearpage
  \pagenumbering{arabic} 

  \clearpage

  \section{Introduction}

The dynamic trees problem, first posed by Sleator and Tarjan~\cite{sleatorta83} is to maintain a forest of trees subject to the insertion and deletion of edges, also known as \emph{links} and \emph{cuts}. Dynamic trees are used as a building block in a multitude of applications, including maximum flows~\cite{sleatorta83}, dynamic connectivity and minimum spanning trees~\cite{frederickson1985data}, and minimum cuts~\cite{karger2000minimum}, making them a fruitful line of work with a rich history. There are a number of established sequential dynamic tree algorithms, including link-cut trees~\cite{sleatorta83}, top trees~\cite{tarjan2005self}, Euler-tour trees~\cite{henzingerki99}, and rake-compress trees~\cite{acar2005experimental}, all of which achieve $O(\log(n))$ time per operation.

Since they already perform such little work, there is often little to gain by processing single updates in parallel, hence parallel applications typically process \emph{batches} of updates. We are therefore concerned with the design of parallel \emph{batch-dynamic} algorithms.

The batch-dynamic setting extends classic dynamic algorithms to accept
batches of updates.  By applying batches it is often possible to
obtain significant parallelism while preserving work efficiency.
However, designing and implementing dynamic algorithms for problems is
difficult even in the sequential setting, and arguably even more so in
the parallel setting.

The goals of this paper are twofold. First and foremost, we are interested in designing a parallel batch-dynamic algorithm for dynamic trees that supports a wide range of applications. On another level, based on the observation that parallel dynamic algorithms are usually quite complex and difficult to design, analyze, and implement, we are also interested in easing the design process of parallel batch-dynamic algorithms as a whole.
To this end, we propose a framework for algorithmically dynamizing static parallel algorithms to obtain
efficient parallel batch-dynamic algorithms. The framework takes any
algorithm implemented in the \roundBased{} parallel model, and
automatically dynamizes it. We then define a cost model that captures the \emph{computation distance} between two executions of the static algorithm which allows us to bound the runtime of dynamic updates. There are several benefits of using algorithmic dynamization, some more
theoretical some practical:
\begin{enumerate}[leftmargin=12pt]
  \item
  Proving correctness of a batch dynamized algorithm relies simply on the correctness of the parallel
  algorithm, which presumably has already been proven.
  \item It is easy to implement different classes of updates.  For
  example, for dynamic trees, in addition to links and cuts, it is
  very easy to update edge weights or vertex weights for supporting
  queries such as path length, subtree sums, or weighted diameter.
  One need only change the values of the weights and propagate.
  \item Due to the simplicity of our approach, we believe it is likely
  to make it easier to program parallel batch-dynamic algorithms, and
  also result in practical implementations.
\end{enumerate}

\noindent Using our algorithmic dynamization framework, we obtain a parallel batch-dynamic algorithm for rake-compress trees that generalizes the sequential data structure work efficiently without loss of generality. Specifically, our main contribution is the following theorem.

\begin{theorem}  \label{theorem:rctree}
  The following operations can be supported on a bounded-degree dynamic tree of size $n$ using
  the \crcwpram: Batch insertions and deletions of $k$ edges in
  $O(k\log(1+n/k))$ work in expectation and $O(\log(n) \log^*(n))$
  span  w.h.p. Independent parallel connectivity, subtree-sum, path-sum, diameter, lowest common ancestor, center, and median queries in $O(\log n)$ time per query w.h.p.
\end{theorem}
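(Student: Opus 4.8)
The plan is to establish Theorem~\ref{theorem:rctree} by instantiating the general dynamization framework on a specific static parallel algorithm, and then bounding the computation distance under batch updates. Concretely, I would first recast the Miller--Reif parallel tree contraction algorithm as a \roundbased{} static algorithm in the form the framework requires. Tree contraction repeatedly applies \emph{rake} (removing leaves) and \emph{compress} (splicing out degree-two vertices) until the tree collapses to a single vertex, and over $O(\log n)$ rounds w.h.p.\ it builds a hierarchical clustering---the rake-compress (RC) tree. The randomization comes from independent coin flips at each vertex that decide, in each round, whether a degree-two vertex compresses; this keeps the contraction balanced with high probability. The first step is therefore to verify that this algorithm, including its random choices, fits the interface of the dynamization framework, so that the generic work and span bounds apply and correctness of the dynamic structure follows immediately from correctness of static tree contraction (benefit~1 in the introduction).

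The heart of the argument is the computation-distance analysis. For the framework to yield $O(k\log(1+n/k))$ work in expectation, I would show that changing the input forest by a batch of $k$ links and cuts perturbs the contraction trace by only $O(k\log(1+n/k))$ ``units'' of computation in expectation. The key structural fact is \emph{locality of contraction}: whether a vertex rakes or compresses in a given round depends only on its current constant-size neighborhood and its coin flip, so an edge insertion or deletion can only cause recomputation to propagate along the contraction hierarchy rooted at its endpoints. I would bound, for a single update, the expected number of clusters whose contraction decisions change across all rounds, and argue this is $O(\log n)$ in expectation using the geometric decay of the number of affected vertices per level (each surviving affected vertex contracts with constant probability independent of the perturbation). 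Summing naively over $k$ updates gives $O(k\log n)$; the improvement to $O(k\log(1+n/k))$ comes from the fact that when $k$ is large the affected neighborhoods overlap and share recomputation, so the affected regions merge after $O(\log(n/k))$ levels into what is essentially a single contraction of a size-$O(k)$ structure. Formalizing this sharing---showing the total expected affected work telescopes to $\sum_i O(\log(1+n/k)) = O(k\log(1+n/k))$ rather than $O(k\log n)$---is the step I expect to be the main obstacle, since it requires carefully accounting for correlated recomputation across overlapping update regions rather than treating updates independently.

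Having bounded the work, I would handle the span separately. Each contraction round can be recomputed in parallel, and since the contraction has $O(\log n)$ rounds w.h.p., the dynamic update re-executes affected rounds with $O(\log^*(n))$ span per round from the parallel primitives (e.g.\ resolving concurrent writes and maintaining the clustering on the \crcwpram), giving the claimed $O(\log(n)\log^*(n))$ span w.h.p. The per-round $\log^* n$ factor is the standard cost of the constant-time-per-operation symmetry-breaking and compaction steps on the concurrent-write model.

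Finally, I would address the query bounds. The RC tree produced by the contraction is a balanced hierarchical decomposition of depth $O(\log n)$ w.h.p., and each supported query reduces to aggregating associative (or suitably augmented) cluster values along a root-to-leaf path or across a logarithmic number of clusters. For connectivity, one compares cluster roots; for subtree-sum and path-sum, one stores partial sums at each cluster and combines the $O(\log n)$ clusters on the relevant path; diameter, center, median, and lowest common ancestor follow from augmenting clusters with the appropriate non-local information (e.g.\ longest path within a cluster, or weighted medians) that composes across the hierarchy. Each such query thus touches $O(\log n)$ clusters and runs in $O(\log n)$ time w.h.p., and since queries only read the structure they run independently in parallel without conflict. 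The routine part here is exhibiting, for each query type, the constant-size augmentation that composes correctly across rake and compress clusters; the interesting guarantees (work and span of updates) rest entirely on the computation-distance bound established above.
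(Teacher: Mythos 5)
Your overall route is the paper's: dynamize Miller--Reif tree contraction via change propagation, bound the computation distance by $O(k\log(1+n/k))$, obtain the span from the restricted \roundbased{} bounds with $O(\log^*(W))$ compaction overhead per round on the \crcwpram{}, and read the queries off the resulting RC tree. The span and query portions of your sketch match the paper. The gap is in the computation-distance analysis, and you have located the difficulty in the wrong place. The step you flag as the main obstacle --- accounting for shared recomputation across overlapping update regions to improve $O(k\log n)$ to $O(k\log(1+n/k))$ --- is actually the easy part: one splits the rounds at $r=\log_{1/\beta}(1+n/k)$; before round $r$ the per-round affected count is $O(k)$ in expectation, and after round $r$ the \emph{entire} surviving forest has expected total size $\sum_{i\ge r}\beta^i n = O(k)$, so one simply charges every remaining vertex in every remaining round as affected. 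No argument about overlap or correlation between updates is needed.

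The real obstacle, which your sketch does not address, is proving that the number of affected vertices per round stays $O(k)$ in expectation at all. Affectedness \emph{spreads}: when an affected vertex is deleted in one forest but not the other, or its leaf status differs between the two forests at the next round, its neighbours become affected. A priori this could grow the affected set geometrically. Your claim that the per-level count decays geometrically because ``each surviving affected vertex contracts with constant probability'' ignores this growth entirely, and is also internally inconsistent --- geometric decay from $O(1)$ initially affected vertices would sum to $O(1)$, not the $O(\log n)$ you state. The paper's resolution (Lemmas \ref{lem:become-affected}--\ref{lem:affected-size}) is to partition the affected set into $s \le 3k$ components $A^i_j$, show that each induces a tree containing at most two \emph{frontier} vertices (affected vertices with an unaffected neighbour), that only frontiers can spread affectedness so each component gains at most two new vertices per round, while each component simultaneously shrinks by a factor $\beta$ per round under contraction; the additive-growth-versus-multiplicative-shrinkage recurrence then bounds each component at expected constant size, giving $\expct{|A^i|} = O(k)$ for every round. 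Without some version of this containment argument, your computation-distance bound does not go through.
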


\noindent This algorithm is obtained by dynamizing the parallel tree contraction algorithm of Miller and Reif~\cite{miller1985parallel} and performing a novel analysis of the computation distance of the algorithm with respect to any given input.

As some evidence of the general applicability of algorithmic dynamization, in addition to dynamic trees we consider some other applications of the technique.  We consider map-reduce based computations in Appendix~\ref{sec:map-reduce}, and dynamic sequences with
cutting and joining in Appendix~\ref{sec:list-contraction}.    This leads to another solution to the batch-dynamic sequences
problem considered in~\cite{tseng2018batch}.  We
believe the solution here is much simpler, while maintaining the same
work bounds.
To summarize, the main contributions of this paper are:
\begin{enumerate}[leftmargin=12pt]
  \item An algorithmic framework for dynamizing \roundbased{} parallel algorithms, and a cost model for analyzing the performance of algorithms resulting from the framework
  \item An analysis of the computation distance of Miller and Reif's tree contraction algorithm under batch edge insertions and deletions in this framework, which shows that it can be efficiently dynamized
  \item The first work-efficient parallel algorithm for batch-dynamic trees that supports subtree queries, path queries, and non-local queries such as centers and medians.
\end{enumerate}

\paragraph{Technical overview.}
A \emph{\roundbased} algorithm consists
of a sequence in rounds, where a round executes in parallel across a
set of processes, and each process runs a sequencial \emph{round
  computation} reading and writing from shared memory and doing local
computation. The \roundBased{} model is similar to Valiant's
well-known Bulk Synchronous Parallel (BSP) model~\cite{valiant90},
except that communication is done via shared memory.

 The algorithmic dynamization works by running the \roundbased{}
algorithm while tracking all write-read dependences---i.e., a
dependence from a write in one round to a read in a later round.
Then, whenever a batch of changes are made to the input, a
\emph{change propagation} algorithm propagates the changes through the
original computation only rerunning round computations if the values
they read have changed.  This can be repeated to handle multiple batch
changes. We note that depending on the algorithm, changes to the input
could drastically change the underlying computation, introducing new
dependencies, or invalidating existing ones. Part of the novelty of
this paper is bounding both the work and span of this update process.

The idea of change
propagation has been applied in the sequential setting and used to
generate efficient dynamic
algorithms~\cite{acar2002adaptive,acar2004dynamizing}.  The general
idea of parallel change propagation has also been used in various practical
systems~\cite{Condie+10,Gunda+10,BhatotiaWiRoAcPa11,Murray+2013,PengDa10} but none of them have been analyzed
theoretically.
%
To capture the cost of running the change propagation algorithm for a
particular parallel algorithm and class of updates we define a
\emph{computational distance} between two computations, which
corresponds to the total work of the round computations that differ in
the two computations.    The \emph{input configuration} for a
computation consists of the input $I$, stored in shared memory, and an
initial set of processes $P$.
 We show the following bounds, where the
\emph{work} is the sum of the time of all round computations, and
\emph{span} is the sum over rounds of the maximum time of any
round computation in that round.

\begin{theorem}\label{thm:introchangeprop}
  Given a \roundbased{} algorithm $A$ that with input configuration $(I,P)$ 
  does $W$ work in $R$ rounds and $S$ span, then
\begin{enumerate}[leftmargin=12pt]
\item the initial run of the algorithm with tracking takes $O(W)$ work in expectation and $O(S +
  R \log W)$ time w.h.p.,
\item running change propagation from input configuration $(I,P)$ to 
  configuration $(I',P')$ takes $O(W_{\Delta} + R')$ work in expectation and
  $O(S'+ R' \log W')$ time w.h.p., where $W_{\Delta}$ is
  the computation distance between the two configurations, and $S'$,$R'$,$W'$ are the maximum span,
  rounds and work for the two configurations,
\end{enumerate}
all on the \crcwpram{} model.
\end{theorem}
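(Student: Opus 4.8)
The plan is to maintain, alongside the execution, a \emph{dependence dictionary} that maps each shared-memory location to the set of round computations that read it, together with a per-computation record of the values it read and the values it wrote. The argument then splits into the cost of building these structures during the initial run (Part~1) and the cost of repairing them during an update (Part~2). Throughout, the central accounting observation is that the number of read and write operations performed is at most the work, so any bookkeeping that spends $\bigO{1}$ work per memory operation and groups operations by location using a parallel semisort or hash table costs $\bigO{W}$ work overall and adds only an $\bigO{\log W}$ factor to the span of each round on the \crcwpram{}.

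For Part~1 I would run $A$ round by round exactly as specified, so the underlying computation contributes its stated $\bigO{W}$ work and $\bigO{S}$ span. While each round runs I record, for every read, the (location, reader) pair and the value read, and for every write the (location, value) pair. After a round's round computations finish, I insert the new (location, reader) pairs into the dependence dictionary using a parallel hash table (or semisort), which takes work linear in the number of pairs and $\bigO{\log W}$ span w.h.p. Summing the linear insertion work over all rounds gives $\bigO{W}$ work in expectation (the expectation arising from the hashing), and summing $\bigO{\log W}$ over the $R$ rounds on top of the $\bigO{S}$ span of the computation itself yields the claimed $\bigO{S + R\log W}$ time.

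For Part~2 I would drive a frontier-based change-propagation loop. Starting from the symmetric difference between $(I,P)$ and $(I',P')$, I initialize, for each round, the set of round computations whose inputs are directly affected (those reading a changed input location, together with computations created or destroyed by the change in $P$). I then sweep the rounds in order; in round $r$ I re-execute exactly the computations currently marked in round $r$, recording their new reads and writes, and then (i) for each write whose value \emph{changes} relative to the recorded run, look up its readers in the dependence dictionary and mark them in their later rounds, and (ii) update the dictionary to insert newly occurring dependences and delete ones that no longer occur. Crucially, a computation is re-executed only when one of its read values actually changed, and a write is propagated only when its value actually changed; this stabilization is what pins the re-execution to the computations that genuinely differ between the two runs. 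Re-execution therefore costs $\bigO{W_\Delta}$ work, the dictionary repairs are proportional to the number of changed dependences and hence also $\bigO{W_\Delta}$, and the per-round cost of examining and advancing the frontier contributes $\bigO{R'}$, giving $\bigO{W_\Delta + R'}$ work in expectation. For the span, each round's re-execution has span at most the maximum round-computation time in that round, so these sum to $\bigO{S'}$, while each round's dictionary repair and frontier manipulation add $\bigO{\log W'}$, for $\bigO{S' + R'\log W'}$ time w.h.p.

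The hard part is the second bullet of Part~2: proving that the set of computations the loop re-executes coincides, up to constant factors, with the set the computation distance charges for. I expect this to need an inductive invariant over rounds stating that at the start of round $r$ the recorded state (values written before round $r$) agrees with a true execution on $(I',P')$ on exactly the locations that no computation has been forced to recompute, and that a computation in round $r$ is marked if and only if at least one value it reads differs between the two runs. Establishing this forces care around three points: writes that are re-executed but produce \emph{unchanged} values, which must halt propagation so that no work beyond $W_\Delta$ is charged; the dynamic creation and destruction of round computations via \forkins{}/\insend{} when $P$ changes; and concurrent writers to a single location under \crcwpram{} semantics, which must be resolved consistently so that the ``value changed'' test is well defined. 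Once this invariant is in place, the stated work and span bounds follow by summing the per-round costs as above.
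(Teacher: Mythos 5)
Your proposal follows essentially the same route as the paper's proof of this theorem (given as Theorem~\ref{thm:generic_change_prop_bounds}): track read/write dependences per shared location, group the bookkeeping by location with a semisort or hash table so that its work is chargeable to the round computations (yielding the expected-work bound and an $O(\log W)$ per-round span overhead), and sweep the rounds with a frontier of computations to re-execute, charging re-execution and dictionary repair to the affected computations and the possibly empty rounds to the $O(R')$ term. The one substantive point where you are more explicit than the paper is the value-comparison test: the paper's pseudocode collects \emph{all} locations written by re-executed processes into the update set $U$, whereas you propagate only writes whose values actually changed, and it is precisely this stabilization (together with determinism of the round computations) that keeps the re-executed set aligned with the affected computations and hence the work within $O(W_\Delta + R')$. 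Your concern about processes that retire earlier or later than before is handled in the paper by the auxiliary sets $L$ and $D$ of processes that outlive or predecease their previous selves, which is the same bookkeeping you sketch for created and destroyed computations.
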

\noindent
We show that the work can be reduced to $O(W_\Delta)$, and that the $\log W$ and $\log W'$ terms can be reduced to $\log^* W$ when the
\roundbased{} algorithms have certain restrictions that are satisfied by all of our example algorithms, including our main result on dynamic trees.
We also present similar results in other parallel models of computation.

Once we have our dynamization framework and cost model, we use it to develop an algorithm for dynamic trees that support a broad set of
queries including subtree sums, path queries, lowest common ancestors, diameter, center, and
median queries.  This significantly improves over previous work on
batch-dynamic Euler tour trees~\cite{tseng2018batch}, which only support subtree sums, and only when the ``summing'' function has an inverse
(e.g. Euler-tour trees cannot be used to take the maximum over a
subtree).

Our batch-dynamic trees algorithm is based on the simple and
elegant tree contraction algorithm of Miller and Reif
(MR)~\cite{miller1989parallel}.
Previous work showed that in the sequential setting,
this process can be used to generate a \emph{rake-compress (RC) tree} (or
forest), which supports the wide collection of queries mentioned above, all in logarithmic time, w.h.p.~\cite{acar2005experimental}.
Our approach generalizes this sequential algorithm to allow for
batches of edge insertions or deletions, work efficiently in parallel.
The challenge is
in analyzing the computational distance implied by batch updates
in the parallel batch-dynamic setting. In Section~\ref{sec:tree-contraction} we do just that, and obtain the following result:

\begin{theorem}\label{thm:introMR}
In the \roundBased{} model the MR algorithm does $O(n)$ work in
expectation and has $O(\log n)$ rounds and span w.h.p.
Furthermore, given forests $T$ with $n$ vertices, and $T'$ with $k$
modifications to the edge list of $T$, the computational distance of the
\roundbased{} MR algorithm on the two inputs is $O(k\log(1 + n/k))$ in expectation.
\end{theorem}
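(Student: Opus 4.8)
The plan is to prove the two halves of the statement separately: the work, round, and span bounds are essentially the classical Miller--Reif (MR) analysis, while the computation-distance bound is the genuinely new part. For the first half I would establish the standard geometric-decrease property: in each round every surviving vertex is removed with constant probability---raked if it is a leaf, or compressed if it lies on a degree-two chain and its coin flip together with that of its neighbor matches the compression pattern---so that the expected number of live vertices entering round $i$ satisfies $\expct{n_i} \le \alpha^i n$ for a constant $\alpha < 1$. Summing the $\bigO{n_i}$ per-round work gives $\bigO{n}$ work in expectation, and a standard high-probability bound on the number of rounds until the forest is exhausted gives $\bigO{\log n}$ rounds w.h.p.; since each round computation is $\bigO{1}$ and runs in parallel across live vertices, the span is $\bigO{\log n}$ w.h.p.

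For the computation distance I would fix one coin/priority sequence per vertex and \emph{couple} the two executions so that each vertex uses the same coins in both. Under this coupling the round computation of a live vertex $v$ in round $i$ is identical across the two executions whenever $v$'s local view---its surviving neighbors, their coins, and the incident weights---agrees, so the computation distance equals the number of \emph{affected} vertex--round pairs, those where the local view differs. I would organize these pairs by the contraction hierarchy (the RC-tree): each modified edge traces a single root-to-leaf path of clusters up the hierarchy, and I would classify a cluster at level $i$ as affected only if it contains a modified edge or is adjacent to one that does.

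The heart of the argument is a \emph{locality} lemma: because every cluster produced by MR has $\bigO{1}$ boundary vertices, a change confined to one cluster can only alter the round computations of that cluster and its $\bigO{1}$ same-level neighbors, so the affected frontier does \emph{not} spread beyond a constant-radius neighborhood of the modified-edge ancestors as contraction proceeds. Making this confinement rigorous---showing that a cluster whose interior is unmodified and whose boundary interface is unchanged recontracts identically, so the affected set grows by only a constant factor per level rather than geometrically---is the main obstacle, and it is where the bounded-degree assumption and the precise rake/compress semantics enter. I also have to control the fact that the two hierarchies can diverge structurally past an affected cluster, which I would handle by charging each divergence to an ancestor of some modified edge.

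Granting the locality lemma, the final accounting is combinatorial. Let $V_i$ be the number of live clusters at level $i$; the number of affected clusters at level $i$ is at most a constant times $\min(k, V_i)$, since the $k$ modified edges contribute at most $k$ distinct ancestors at level $i$ (edges sharing a cluster are counted once) and there are only $V_i$ clusters in all. Using $\expct{V_i} \le \alpha^i n$ from the first half together with the concavity of $x \mapsto \min(k,x)$ (so $\expct{\min(k,V_i)} \le \min(k,\expct{V_i})$ by Jensen), the expected computation distance is
\begin{equation}
\bigO{\sum_{i \ge 0} \min(k,\, \alpha^i n)} = \bigO{\sum_{\alpha^i n > k} k \;+\; \sum_{\alpha^i n \le k} \alpha^i n} = \bigO{k \log(1 + n/k)},
\end{equation}
where the first sum has $\bigO{\log(1+n/k)}$ terms and the second is a geometric series bounded by $\bigO{k}$. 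This telescoping is exactly what converts the naive $\bigO{k\log n}$ estimate (one $\bigO{\log n}$ path per modification) into the tighter bound, by exploiting that the $k$ ancestor paths must share clusters once there are fewer than $k$ clusters left near the root.
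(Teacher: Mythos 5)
Your first half (geometric decay of the live vertex count, hence $O(n)$ expected work and $O(\log n)$ rounds and span w.h.p.) matches the paper's analysis, and your final summation $\sum_i \min(k,\beta^i n) = O(k\log(1+n/k))$ is exactly the paper's device of splitting the rounds at $r = \log_{1/\beta}(1+n/k)$. The gap is the middle step, which you yourself flag as ``the main obstacle'': the locality lemma as you state it --- that the affected set stays within a constant-radius neighborhood of the modified-edge ancestors --- is not true deterministically, so the per-level bound ``at most a constant times $\min(k,V_i)$ affected clusters'' does not hold pointwise. Affectedness spreads transitively: when an affected vertex is deleted in one forest but not the other, or its leaf status diverges, its neighbors become affected in the next round; those neighbors can in turn spread affectedness when they contract, and so on, so after $i$ rounds the affected region around a single modification can deterministically contain $\Theta(i)$ vertices. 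Moreover, once the two contractions diverge structurally, ``the ancestor of a modified edge at level $i$'' is not simultaneously well-defined in both executions, so organizing the count around root-to-leaf paths in the RC tree does not directly apply to updates (it does apply to the separate batch-query theorem, which is a statement about a single fixed contraction).

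What is actually true, and what the paper proves, is an \emph{expected}-size confinement, and obtaining it requires machinery your sketch does not supply. The affected set at round $i$ is partitioned into $s = |A^0| \le 3k$ pieces $A^i_j$, one per initially affected vertex; each piece is shown to induce a tree in both forests, to contain at most two \emph{frontier} vertices (affected vertices with an unaffected neighbor --- only these can spread affectedness), and hence to gain at most two new vertices per round. Combining this additive growth with the geometric contraction of the induced subtree gives the recurrence $\expct{\setsize{A^{i+1}_{F,j}}} \le \beta\,\expct{\setsize{A^i_{F,j}}} + 6$, whence $\expct{\setsize{A^i_{F,j}}} = O(1)$ and $\expct{\setsize{A^i}} = O(k)$ for every round $i$. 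This balance between arithmetic spreading and geometric shrinking is the real content of the second half of the theorem; your proposal identifies the right difficulty but replaces it with a claim that is false as stated, so the argument does not go through without essentially reconstructing the paper's frontier/partition analysis. (A minor point: the Jensen step is unnecessary --- one needs $\expct{\setsize{A^i}} \le \min(Ck,\,2\beta^i n)$, which follows from taking the minimum of the two separate expectation bounds, the $O(k)$ bound above and the trivial bound $\setsize{A^i} \le \setsize{V^i_F} + \setsize{V^i_{F'}}$, rather than from concavity.)
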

\noindent
The first sentence follows directly from the original MR analysis.
The second is one of the contributions of this paper.   The bounds can
then be plugged into Theorem~\ref{thm:introchangeprop} to show
that a set of  $k$ edges can be inserted or deleted in a batch in
$O(k\log(1 + n/k))$ work in expectation and $O(\log^2 n)$ span w.h.p.
The span can be improved to $O(\log n \log^* n)$ w.h.p. on the \crcwpram{} model. The last step in obtaining our dynamic trees framework is to plug the dynamized tree contraction algorithm into the RC trees framework~\cite{acar2005experimental} (see Section~\ref{sec:rc-trees}).

  \paragraph{Related Work.}
%
In order to better exploit parallelism, work-efficient \emph{parallel batch-dynamic} algorithms, i.e.\ algorithms that perform a batch of updates work efficiently and in low span, have gained recent attention and have been developed for several specific problems including computation of Voronoi diagrams~\cite{acarcohutu11pd}, incremental
connectivity~\cite{simsiri2016work}, Euler-Tour
trees~\cite{tseng2018batch}, and for fully dynamic
connectivity~\cite{acar2019parallel}. Parallel batch-dynamic algorithms have also been recently studied in the MPC model~\cite{italiano2019dynamic,dhulipala2020parallel}.
These works show that batch-dynamic algorithms can achieve tight work-efficiency bounds without sacrificing parallelism.

A sequential version of change propagation was initially developed in
2004~\cite{acar2004dynamizing} and has lead to the development of a sequential dynamic-tree data structure capable of supporting both
subtree and path queries~\cite{acar2005experimental}.
Our results on parallel batch-dynamic trees generalize the sequential results~\cite{acar2004dynamizing,acar2005experimental} to handle batches of changes work efficiently and in parallel without leading to any loss of generality.

Standalone algorithms for dynamic tree contraction have previously been proposed, but are inefficient and not fully general. In particular, Reif and Tate~\cite{reif1994dynamic} give an algorithm for parallel dynamic tree contraction that can process a batch of $k$ leaf insertions or deletions in $O(k\log(n))$ work. Unlike our algorithm, theirs is not work efficient, as it performs $\Omega(n\log(n))$ for batches of size $\Omega(n)$, and it can only modify the tree at the leaves.

  \section{Preliminaries}

\subsection{Parallel Models}\label{subsec:model}

\myparagraph{Parallel Random Access Machine ($\bm{\PRAM{}}$)}
The parallel random access machine (\PRAM{}) model is a classic
parallel model with $p$ processors that work in lock-step, connected
by a parallel shared-memory~\cite{jaja1992introduction}. In this paper
we primarily consider the Concurrent-Read Concurrent-Write model
(\crcwpram{}), where memory locations are allowed to be concurrently
read and concurrently written to. If multiple writers write to the
same location concurrently, we assume that an arbitrary writer wins.
We analyze algorithms on the \crcwpram{} in terms of their \emph{work}
and \emph{span}. The span of an algorithm is the minimum running time
achievable when arbitrarily many processors are available. The work is
the product of the span and the number of processors.


\myparagraph{Binary Forking Threaded Random Access Machine ($\bm{\mpram{}}$)}
The threaded random access machine (\mpram{}) is closely related to the \PRAM{}, but more closely models current machines and programming paradigms. In the binary forking \mpram{} (binary forking model for short), a process
can \emph{fork} another process to run in parallel, and can \emph{join} to wait for all forked calls to complete. In the binary forking model, the \emph{work} of an algorithm is the total number of instructions it performs,
and the \emph{span} is the longest chain of sequentially dependent
instructions. This model can work-efficiently cross-simulate a
\crcwpram{} equipped with the same atomic instructions, and hence all
work bounds stated are valid in both. Additionally, an algorithm with
$W$ work and $S$ span on the \mpram{} can be executed on a $p$
processor \PRAM{} in time $O(W/p + S)$ \cite{brent74}.

\subsection{Parallel Primitives}\label{subsec:parallelprimitives}
The following parallel procedures are used throughout the paper.
\emph{Scan} takes as input an array $A$ of length $n$, an associative
binary operator $\oplus$, and an identity element $\bot$ such that
$\bot \oplus x = x$ for any $x$, and returns the array
$(\bot, \bot \oplus A[0], \bot \oplus A[0] \oplus A[1], \ldots, \bot \oplus_{i=0}^{n-2} A[i])$
as well as the overall sum, $\bot \oplus_{i=0}^{n-1} A[i]$.
Scan can be done in $O(n)$ work and $O(\log n)$ span (assuming $\oplus$
takes $O(1)$ work)~\cite{jaja1992introduction} on the \crcwpram{}, $O(\log(n))$ span in the binary forking model.

\emph{Filter} takes an array $A$ and a predicate $f$ and returns a new
array containing $a \in A$ for which $f(a)$ is true, in the same order
as in $A$.  Filter can both be done in $O(n)$ work and $O(\log n)$
span on the \crcwpram{} (assuming $f$ takes $O(1)$ work)~\cite{jaja1992introduction}, $O(\log(n))$ span in the binary forking model.
The \emph{Approximate Compaction} problem is similar to a Filter. It
takes an array $A$ and a predicate $f$ and returns a new array
containing $a \in A$ for which $f(a)$ is true where some of the
entries in the returned array can have a null value. The total size of
the returned array is at most a constant factor larger than the number
of non-null elements.  Gil et al.~\cite{gil1991towards} describe a parallel
approximate compaction algorithm that uses linear space and achieves
$O(n)$ work and $O(\log^* (n))$ span w.h.p.\ on the \crcwpram{}.

A \emph{semisort} takes an input array of
elements, where each element has an associated key and reorders the
elements so that elements with equal keys are contiguous, but elements
with different keys are not necessarily ordered.  The purpose is to
collect equal keys together, rather than sort them. Semisorting a
sequence of length $n$ can be performed in $O(n)$ expected work and
$O(\log n)$ depth w.h.p.\ on the \crcwpram{} and in the binary forking
model assuming access to a uniformly random hash function mapping
keys to integers in the range $[1,n^{O(1)}]$~\cite{gu2015top}.

  \section{Dynamization Framework}

\subsection{\Roundbased{} algorithms}

In this framework, we consider dynamizing algorithms that are \emph{\roundBased{}}. The \roundBased{} framework encompasses a range of classic BSP~\cite{valiant90} and \PRAM{} algorithms. A \roundbased{} algorithm consists of $M$ \emph{processes}, with process IDs bounded by $O(M)$. The algorithm performs sequential rounds in which each active process executes, in parallel, a \emph{round computation}.  At the end of a round,
any processes can decide to \emph{retire}, in which case they will no longer execute in any future round. The algorithm terminates once there are no remaining active processes---i.e., they have all retired. Given a fixed input, \roundbased{} algorithms must perform deterministically. Note that this does not preclude us from implementing randomized algorithms (indeed, our dynamic trees algorithm is randomized), it just requires that we provide the source of randomness as an input
to the algorithm, so that its behavior is identical if re-executed.
An algorithm in the \roundBased{} framework is defined in terms of a procedure \textproc{ComputeRound}$(r,p)$, which performs the computation of process $p$ in round $r$. The initial run of a \roundbased{} algorithm must specify the set $P$ of initial process IDs.

\myparagraph{Memory model} Processes in a \roundbased{} algorithm may read and write to local memory that is not persisted across rounds. They also have access to a \emph{shared memory}. The input to a \roundbased{} is the initial contents of the shared memory. Round computations can read and write to shared memory with the condition that writes do not become visible until the end of the round. We require that reads only access shared locations that have been written to, and that locations are only written to once, hence concurrent writes are not permitted. The contents of the shared memory at termination of an algorithm is considered to be the algorithm's output. Change propagation is driven by tracking all reads and writes to shared memory.

\myparagraph{Pseudocode} We describe \roundbased{} algorithms using the following primitives:
\begin{enumerate}[leftmargin=12pt]
  \item The \algorithmicread{} instruction reads the given shared memory locations and returns their values,
  
  \item The \algmodassign\ instruction writes the given value to the given shared memory location.
  
  \item Processes may retire by invoking the \algretire{} instruction.
  
\end{enumerate}

\myparagraph{Measures} The following measures will help us to analyse the efficiency of \roundbased{} algorithms. For convenience, we define the \emph{input configuration}
of a \roundbased{} algorithm as the pair $(I, P)$, where $I$ is the input to the algorithm (i.e.\ the initial state of shared memory) and $P$ is the set of initial process IDs.

\begin{definition}[Initial work, Round complexity, and Span]
  The \emph{initial work} of a \roundbased{} algorithm on some input configuration $(I, P)$ is the sum of the work performed by all of the computations of each processes over all rounds when given that input. Its \emph{round complexity} is the number of rounds that it performs, and its \emph{span} is the sum of the maximum costs per round of the computations performed by each process.
\end{definition}

\subsection{Change propagation}

Given a \roundbased{} algorithm, a \emph{dynamic update} consists of a change to the input configuration, i.e.\ changing some of the input values in shared memory, and optionally, adding or deleting processes. The initial run and change propagation algorithms maintain the following data:
\begin{enumerate}[leftmargin=12pt]
  \item $R_{r,p}$, the memory locations read by process $p$ in round $r$
  \item $W_{r,p}$, the memory locations written by process $p$ in round $r$
  \item $S_{m}$, the set of round, process pairs that read memory location $m$
  \item $X_{r,p}$, which is \algtrue\ if process $p$ retired in round $r$
\end{enumerate}
Algorithm~\ref{alg:round-based-execute} depicts the procedure
for executing the initial run of a \roundbased{} algorithm before making any dynamic updates.

\begin{algorithm}[h]
  \caption{Initial run}
  \label{alg:round-based-execute}
  \scriptsize
  \begin{algorithmic}[1]
    \Procedure{Run}{$P$}
      \State \alglocal $r \algassign 0$
      \While{$P \neq \emptyset$}
        \ParallelFor{\algeach process $p \in P$}
          \State \textsc{ComputeRound}($r, p$)
          \State\label{code:run-R_r,p} $R_{r,p}$ \algassign $\{$memory locations read by $p$ in round $r\}$
          \State\label{code:run-W_r,p} $W_{r,p}$ \algassign $\{$memory locations written to by $p$ in round $r\}$
          \State\label{code:run-X_r,p} $X_{r,p} \algassign$ (\algtrue\ \algorithmicif\ $p$ retired in round $r$ else \algfalse{})
        \EndParallelFor

        \ParallelFor{\algeach{} $m \in \cup_{p \in P} R_{r,p}$}
          \State\label{code:run-subscribe} $S_m$ \algassign{} $S_m \cup \{ (r, p)\ |\ m \in R_{r,p} \land\ p \in P \}$
        \EndParallelFor

        \State\label{code:run-filter-P} $P \algassign P \setminus \{ p \in P : X_{r,p} = \algtrue \}$
        \State $r \algassign r + 1$
      \EndWhile
    \EndProcedure
  \end{algorithmic}
\end{algorithm}

\noindent To help formalize change propagation, we define the notion of an \emph{affected computation}. The task of change propagation is to identify the affected computations and rerun them. 

\begin{definition}[Affected computation]
  Given a \roundbased{} algorithm $A$ and two input configurations $(I, P)$ and $(I', P')$, the \emph{affected computations} are the round and process pairs $(r,p)$ such that either:
  \begin{enumerate}[leftmargin=12pt]
    \item process $p$ runs in round $r$ on one input configuration but not the other
    \item process $p$ runs in round $r$ on both input configurations, but reads a variable from shared memory that has a different value in one configuration than the other
  \end{enumerate}
\end{definition}

\noindent The change propagation algorithm is depicted in Algorithm~\ref{alg:round-based-change-prop}.

\begin{algorithm}[h]
  \caption{Change propagation}
  \label{alg:round-based-change-prop}
  \scriptsize
  \begin{algorithmic}[1]
    \State \alglinecomment{$U$ = sequence of memory locations that have been modified}
    \State \alglinecomment{$P^+$ = sequence of new process IDs to create}
    \State \alglinecomment{$P^{-}$ = sequence of process IDs to remove}
    \Procedure{Propagate}{$U$, $P^{+}$, $P^{-}$}
    \State \alglocal $D \algassign P^{-}$ \algcomment{Processes that died earlier than before}
    \State \alglocal $L \algassign P^{+}$ \algcomment{Processes that lived longer than before}
    \State \alglocal{} $A$ \algassign{} $\emptyset$ \algcomment{Affected computations at each round}
    \State \alglocal $r \algassign 0$
    \While{$U \neq \emptyset \lor D \neq \emptyset \lor L \neq \emptyset \lor \exists r' \geq r:\ (A_{r
    '} \neq \emptyset)$}
      \State \alglinecomment{Determine the computations that become affected}
      \State \alglinecomment{due to the newly updated memory locations $U$}
      \State\label{code:change-A} $\alglocal{} A' \algassign{} \cup_{m \in U} S_m$
      \ParallelFor{\algeach $r' \in \cup_{(r',p) \in A'} \{ r' \} $}
        \State\label{code:change-bucket-affected-computations} $A_{r'} \algassign{} A_{r'} \cup \{ p\ |\ (r', p) \in A' \}$
      \EndParallelFor
      \State\label{code:change-P} \alglocal{} $P$ \algassign{} $A_r \setminus D$ \algcomment{Processes to rerun}

      \State \alglinecomment{Forget the prior reads of all processes that are}
      \State \alglinecomment{now dead or will be rerun on this round}
      \ParallelFor{\algeach{} $m \in \cup_{p \in P \cup D} R_{r,p}$}\label{code:change-delete-Sm-loop}
        \State\label{code:change-delete-Sm} $S_m$ \algassign{} $S_m \setminus \{ (r, p)\ |\ m \in R_{r,p} \land\ p \in P \cup D \}$
      \EndParallelFor

      \State\label{code:change-remember-X} \alglocal $X^\text{prev} = \{ p \mapsto X_{r,p}\ |\ p \in P \}$
      
      \State \alglinecomment{(Re)run all changed or newly live processes}
      \ParallelFor{\algeach process $p$ \algin\ $P \cup L$}\label{code:change-rerun-loop}
        \State\label{code:change-reexecute} \textsc{ComputeRound}($r, p$)
        \State $R_{r,p}$ \algassign $\{$memory locations read by $p$ in round $r\}$
        \State $W_{r,p}$ \algassign $\{$memory locations written to by $p$ in round $r\}$
        \State\label{code:change-set-X} $X_{r,p} \algassign$ (\algtrue{} \algorithmicif\ $p$ retired in round $r$ else \algfalse{})
      \EndParallelFor

      \State \alglinecomment{Remember the reads performed by processes on this round}
      \ParallelFor{\algeach{} $m \in \cup_{p \in P \cup L} R_{r,p}$}\label{code:change-insert-Sm-loop}
        \State\label{code:change-insert-Sm} $S_m$ \algassign{} $S_m \cup \{ (r, p)\ |\ m \in R_{r,p} \land\ p \in P \cup L \}$
      \EndParallelFor

      \State \alglinecomment{Update the sets of changed memory locations,}
      \State \alglinecomment{newly live processes, and newly dead processes}
      \State\label{code:change-U} $U \algassign \cup_{p \in (P \cup L)} W_{r,p} $
      \State\label{code:change-L'} $L' \algassign \{p \in P\ |\ X^\text{prev}_{p} = \algtrue
      \land X_{r,p} = \algfalse\}$
      \State $L \algassign L\ \cup L' \setminus \{p \in L\ |\ X_{r,p} = \algtrue \}$
      \State $D' \algassign \{p \in P\ |\ X^\text{prev}_{p} = \algfalse \land X_{r,p} = \algtrue\}$
      \State\label{code:change-D} $D \algassign D \cup D' \setminus \{p \in D\ |\ X_{r,p} = \algtrue\}$
      \State $r \algassign r + 1$
    \EndWhile
    \EndProcedure
  \end{algorithmic}
\end{algorithm}

Algorithm~\ref{alg:round-based-change-prop} works by maintaining the affected computations as three disjoint sets, $P$, the set of processes that read a memory location that was rewritten, $L$, processes that outlived their previous self, i.e.\ that retired the last time they ran, but did not retire when re-executed, and $D$, processes that retired earlier than their previous self. First, at each round, the algorithm determines the set of computations that should become affected because of shared memory locations that
were rewritten in the previous round (Lines~\ref{code:change-A}--\ref{code:change-bucket-affected-computations}). These are used to determine $P$, the set of affected computations to rerun this round (Line~\ref{code:change-P}). To ensure correctness, the algorithm must then reset the reads that were performed by the computations that are
no longer alive, or that will be reran, since the set of locations that they read may differ from
last time (Lines~\ref{code:change-delete-Sm-loop}--\ref{code:change-delete-Sm}). Lines~\ref{code:change-rerun-loop}--\ref{code:change-set-X} perform the re-execution of all
processes that read a changed memory location, or that lived longer (did not retire) than
in the previous configuration. The algorithm then subscribes the reads of these computations to
the memory locations that they read (Lines~\ref{code:change-insert-Sm-loop}--\ref{code:change-insert-Sm}).
Finally, on Lines~\ref{code:change-U}--\ref{code:change-D}, the algorithm updates the set of changed memory locations ($U$), the set of computations
that lived longer than their previous self ($L$) and the set of computations that retired earlier
then their previous self ($D$).

  \subsection{Correctness}

In this section, we sketch a proof of correctness of the change propagation algorithm (Algorithm~\ref{alg:round-based-change-prop}).
Intuitively, correctness is assured because
of the write-once condition on global shared memory, which ensures that computations
can not have their output overwritten, and hence do not need to be re-executed unless
data that they depend on is modified.

\begin{lemma}\label{lem:reexecuting-affected-computations}
  Given a dynamic update, re-executing only the affected computations for each round will result in the same output as re-executing all computations on the new input.
\end{lemma}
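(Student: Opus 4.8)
The plan is to prove the lemma by induction on the round number, comparing two executions on the new input configuration $(I', P')$: the \emph{fresh} execution $F$, which re-runs every computation from scratch, and the \emph{hybrid} execution $H$, which re-runs exactly the affected computations while reusing the cached reads, writes, written values, and retirement flags recorded during the original run on $(I,P)$ for every unaffected computation. Since the affected set is defined by comparing the original run to $F$, the content of the proof is to show that reusing cached results for unaffected computations never corrupts shared memory, so that $H$ and $F$ produce identical shared memory at termination, which is by definition the output.

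The key enabling observation is that, because the model requires each \textsc{ComputeRound}$(r,p)$ to be deterministic given its inputs, and because a round computation's inputs are only the fixed pair $(r,p)$ together with the values it reads from shared memory, two executions of $(r,p)$ that read the same locations with the same values must perform identical reads, perform identical writes (to identical locations, with identical values), and make the identical retirement decision. Two structural properties of the memory model make this usable round by round: writes become visible only at the end of a round, so every read in round $r$ sees only values produced in rounds strictly less than $r$; and the write-once condition guarantees that each such location has a single, unambiguous value. Consequently, the state of shared memory relevant to round $r$ is completely determined by rounds $0,\ldots,r-1$.

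I would carry the induction with the hypothesis that, after rounds $0,\ldots,r-1$, the runs $H$ and $F$ agree on the value of every shared location written so far and on the set of processes active in round $r$. For the base case at round $0$, the affected computations of type (1) are exactly the processes in the symmetric difference of $P$ and $P'$: those in $P' \setminus P$ are re-run by $H$ and those in $P \setminus P'$ are dropped, so the executed set at round $0$ matches $P'$ as in $F$; processes in $P \cap P'$ whose round-$0$ reads see a value on which $I$ and $I'$ differ are affected of type (2) and re-run, while the rest keep valid cached results. For the inductive step, I fix a process $p$ active in round $r$ and split on whether $(r,p)$ is affected. If it is, $H$ re-executes it against the current shared memory, which by the inductive hypothesis equals the fresh state, so it behaves exactly as in $F$. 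If it is not, then by the negation of the definition $p$ is active in round $r$ in both the original run and $F$ and reads no location whose value differs between them, so by the determinism observation its cached behavior from the original run is precisely its behavior in $F$ and reusing the cache is correct. In either case the reads, writes, written values, and retirement flag of $(r,p)$ in $H$ equal those in $F$; taking the union over all active $p$ re-establishes the hypothesis for round $r+1$, and at termination the shared memory---the output---coincides.

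The step I expect to be the main obstacle is making the phrase ``reads a variable with a different value'' fully rigorous in the presence of data-dependent reads, since the \emph{set} of locations a computation reads may itself depend on the values it reads, creating an apparent circularity in the definition of \emph{affected}. The induction is what breaks this circularity: determinism lets me argue that if every value read so far agrees between the two runs, the computation selects the same next location to read, so ``reads the same values'' and ``reads the same locations'' are established simultaneously and consistently, and the end-of-round visibility rule ensures this reasoning never has to account for writes performed in the same round.
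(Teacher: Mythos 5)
Your proposal is correct and rests on the same two ingredients as the paper's own (much terser) proof: determinism of round computations implies unaffected computations reproduce their cached behavior, and the write-once discipline ensures their preserved writes cannot conflict with or be invalidated by re-executed ones. Your round-by-round induction and the explicit handling of data-dependent read sets simply make rigorous what the paper states in three sentences, so the approach is essentially identical.
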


\begin{proof}
  Since by definition they read the same values, computations that are not affected, if re-executed, would produce the same output as they did the first time. Since all shared memory locations can only be written to once, values written by processes that are not re-executed can not have been overwritten, and hence it is safe to not re-execute them, as their output is preserved. Therefore re-executing only the affected computations will produce the same output as re-executing all computations.
\end{proof}

\begin{theorem}[Consistency]\label{thm:consistency}
  Given a dynamic update, change propagation correctly updates the output of the algorithm.
\end{theorem}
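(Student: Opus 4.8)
The plan is to reduce the theorem to Lemma~\ref{lem:reexecuting-affected-computations} and then prove, by induction on the round index $r$, that Algorithm~\ref{alg:round-based-change-prop} re-executes every affected computation of each round, in round order, while leaving every unaffected computation's output intact. By the lemma this suffices: unaffected computations read identical values and so would reproduce their old output, and because shared memory is write-once their old writes are never clobbered, so their output stays valid without re-execution. Hence it is enough to show that when the loop terminates the shared memory agrees, location by location, with a from-scratch re-execution of the algorithm on the new input configuration $(I', P')$.

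First I would state the loop invariant maintained just before the body for round $r$ executes. It has three parts. (i) \emph{Consistency of the past:} for every round $r'' < r$ and every process $p$ active in round $r''$ under $(I', P')$, the recorded sets $R_{r'',p}$, $W_{r'',p}$ and the flag $X_{r'',p}$ equal their values in the new execution, and every shared location written in a round $< r$ holds its new-execution value. (ii) \emph{Bookkeeping of births and deaths:} $D$ is exactly the set of processes that retire in the new execution before round $r$ but were still active at round $r$ in the old execution (``died earlier''), and $L$ is exactly the set of processes active at round $r$ in the new execution that had already retired by round $r$ in the old one (``lived longer''). (iii) \emph{Correct subscriptions:} for every location $m$, $S_m$ holds the round-$r''$ readers of $m$ from the \emph{new} execution for all $r'' < r$ and from the \emph{old} execution for all $r'' \ge r$; and $U$ holds precisely the locations written in round $r-1$ whose value changed relative to the old execution.

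Next I would verify the inductive step. Forming $A' = \cup_{m\in U} S_m$ and bucketing by round gathers every computation that reads a location whose value just changed; by part (iii) these are exactly the case-2 affected computations for the upcoming rounds, so $P = A_r \setminus D$ is the set of case-2 affected computations at round $r$ with the already-dead ones removed (note $P$ and $L$ are disjoint, since an $L$-process has no old round-$r$ reads and hence lies in no old $S_m$). Deleting the old round-$r$ subscriptions of all $p \in P \cup D$ touches only $(r,p)$ entries, by uniqueness of computation identifiers, and so removes exactly the reads about to become stale without disturbing subscriptions of rounds $\ge r$ that the invariant still relies on. Re-running $P \cup L$ calls \textsc{ComputeRound}$(r,p)$ against memory that, by part (i), already holds the correct new values for all prior rounds, so each produces the correct $R_{r,p}$, $W_{r,p}$, $X_{r,p}$; re-subscribing their reads restores part (iii) at round $r$. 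Finally, setting $U$ to the writes of $P \cup L$ and recomputing $L'$, $D'$ from $X^{\text{prev}}$ versus the new flags propagates the case-1 differences to later rounds, re-establishing parts (ii) and (iii) at round $r+1$. With the base case (at round $0$, $U$ is the set of modified locations, $L = P^{+}$, $D = P^{-}$, encoding exactly the input difference) and the while-condition, which halts only once $U$, $L$, $D$, and all $A_{r'}$ are empty, this closes the induction and yields the claimed consistency.

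The step I expect to be the main obstacle is part (ii): reasoning precisely about how retirement decisions change and cascade. A process may retire earlier or later than before, and such a flip both adds or removes a computation at the boundary round (a case-1 difference) and shifts which later rounds the process participates in. I must argue that $L$ and $D$ are updated correctly at exactly these boundary rounds, that a process in $D$ is never erroneously re-run while a process in $L$ is run in every subsequent round it should be, and that the removal clauses which drop from $L$ and $D$ any process that retires on the current round retire these tracked processes at the right moment. A secondary care point, already flagged above, is that $S_m$ is mutated in place across rounds, so I must confirm that each round's insertions and deletions affect only that round's $(r,p)$ entries and thus never corrupt the subscriptions of rounds $\ge r$ on which the invariant depends.
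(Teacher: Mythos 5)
Your proposal takes essentially the same route as the paper: reduce to Lemma~\ref{lem:reexecuting-affected-computations} and argue that the tracking of reads, writes, and retirements identifies every affected computation round by round; the paper gives only a two-sentence sketch of this second part, whereas you supply the explicit loop invariant and induction that make it rigorous. The one small adjustment you should make is in invariant (iii): Line~\ref{code:change-U} sets $U$ to \emph{all} locations written by re-executed computations in round $r-1$, not only those whose values changed, so the algorithm re-runs a superset of the affected computations --- which is still harmless by the same lemma.
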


\begin{proofsketch}
  Follows from Lemma~\ref{lem:reexecuting-affected-computations} and the fact that all reads and writes to global shared memory are tracked in Algorithm~\ref{alg:round-based-change-prop}, and since global shared memory is the only method by which processes communicate, all affected computations are identified.
\end{proofsketch}

\subsection{Cost analysis}
\label{sec:analysis}

To analyze the work of change propagation, we need to formalize a notion of \emph{computation distance}. Intuitively, the computation distance between two computations is the work performed by one and not the other. We then show that change propagation can efficiently re-execute the affected computations in work proportional to the computation distance.

\begin{definition}[Computation distance]
  Given a \roundbased{} algorithm $A$ and two input configurations, the \emph{computation distance} $W_\Delta$ between them is the sum of the work performed by all of the affected computations with respect to both input configurations.
\end{definition}

\newcommand{\costcompact}{\ensuremath{\mathcal{C}}}
\newcommand{\overhead}{\ensuremath{\mathcal{C}}}

\begin{theorem}\label{thm:generic_change_prop_bounds}
  Given a \roundbased{} algorithm $A$ with input configuration $(I, P)$ that does $W$ work in $R$ rounds and $S$ span, then
  \begin{enumerate}[leftmargin=12pt]
  \item the initial run of the algorithm with tracking takes $O(W)$ work in expectation and $O(S + R \cdot \log(W))$ span w.h.p.,
  \item running change propagation on a dynamic update to the input configuration $(I', P')$ takes $O(W_\Delta + R')$ work in expectation and $O(S' + R' \log(W'))$
  span w.h.p., where $S', R', W'$ are the maximum span, rounds, and work of the algorithm on the two input configurations,
  \end{enumerate}
  These bounds hold on the \crcwpram{} and in the binary forking \mpram{} model.
\end{theorem}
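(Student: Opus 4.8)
The plan is to show that the dependence-tracking data structures ($R_{r,p}$, $W_{r,p}$, $X_{r,p}$, and the subscription sets $S_m$) can be maintained with only constant-factor overhead in expectation per round, and that the change-propagation loop spends work proportional to the affected computations plus a fixed cost per round. The crucial observation is that all bookkeeping reduces to \emph{grouping} reads by their associated key---a memory location when updating $S_m$, or a round index when bucketing affected computations---which is exactly what \emph{semisort} does in linear expected work and $\bigO{\log}$ span w.h.p. Since semisort, scan, and filter all carry the same bounds on the \crcwpram{} and in the binary forking \mpram{}, the two models can be handled uniformly by invoking these primitives as black boxes.

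For part (1), I would analyze a single round $r$ in isolation. Running \textproc{ComputeRound} for every active process does exactly the round's work, and recording $R_{r,p}$, $W_{r,p}$, $X_{r,p}$ costs $\bigO{1}$ per read, write, and retirement and is therefore subsumed by the round's work. The only nontrivial step is updating the subscriptions: the round emits one pair $(m,(r,p))$ per read, and these must be grouped by $m$ to append to each $S_m$; a single semisort accomplishes this in work linear in the number of reads (in expectation) and span $\bigO{\log W_r}$ w.h.p., where $W_r$ is the work of round $r$. Summing over rounds gives total work $\sum_r \bigO{W_r} = \bigO{W}$ in expectation and total span $\sum_r (S_r + \bigO{\log W_r}) = \bigO{S + R\log W}$ w.h.p., where the last step uses $\log W_r \le \log W$ and a union bound over the $R \le W$ rounds so that every per-round semisort meets its bound simultaneously.

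For part (2), I would first bound the number of loop iterations by $R'$: every affected computation sits in a round at most $R'$, because no process is active beyond round $R'$ in either configuration, so $A_{r'}$ can be nonempty only for $r' \le R'$, and once $r$ passes this bound with $U$, $D$, and $L$ empty the guard fails and the loop halts. To evaluate the guard $\exists r' \ge r : A_{r'} \neq \emptyset$ in $\bigO{1}$, I would maintain the maximum round index ever inserted into any $A_{r'}$ via an atomic max. Within a round the work splits into: (i) gathering $A' = \bigcup_{m \in U} S_m$, whose size equals the number of subscriptions to freshly modified locations, each of which is a read whose value changed and hence an affected computation; (ii) bucketing $A'$ by round via semisort; (iii) deleting and reinserting the subscriptions of the processes in $P \cup L \cup D$, proportional to their reads; and (iv) rerunning \textproc{ComputeRound} for $P \cup L$, which is exactly the affected work. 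Every term except the $\bigO{1}$ loop overhead is bounded by affected-computation work, so it telescopes to $\bigO{W_\Delta}$ in expectation, and adding the per-round overhead across $R'$ rounds yields $\bigO{W_\Delta + R'}$. The per-round span is the round's own span plus $\bigO{\log W'}$ for the semisort and set operations, summing to $\bigO{S' + R'\log W'}$ w.h.p.

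The main obstacle is the accounting in part (2): showing that each auxiliary operation (set lookup, bucketing, subscription insert and delete) can be injectively charged to a distinct affected computation, so that the grand total is $\bigO{W_\Delta}$ rather than something like $\bigO{W_\Delta \cdot R'}$. This requires carefully matching the definition of affected computation---covering both processes that newly (dis)appear and processes that read a changed value---against the sets $P$, $L$, $D$ maintained by Algorithm~\ref{alg:round-based-change-prop}, and verifying that a process is touched in a round only when it is genuinely affected there. A secondary technical point is the high-probability analysis: each semisort's span bound holds individually w.h.p., so I must union-bound over all $\bigO{R + R'}$ invocations and argue the failure probability stays polynomially small, which holds since the round counts are polynomially bounded in the work.
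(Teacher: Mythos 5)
Your proposal is correct and follows essentially the same route as the paper's proof: charge all bookkeeping to the round computations (initial run) or to the affected computations (propagation), use a linear-expected-work semisort for the grouping steps, and absorb an $O(\log W)$ (resp.\ $O(\log W')$) span overhead per round plus an $O(R')$ work term for rounds with no affected computations. The only detail the paper makes explicit that you elide is how subscription deletions from $S_m$ run in constant time (dynamic arrays with lazy deletion plus back-pointers stored in $R_{r,p}$), but this does not change the structure of the argument.
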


\begin{proof}
  We begin by analyzing the initial run. By definition, all executions of the round computations,
  \textproc{ComputeRound}, take $O(W)$ work and $O(S)$ span in total, with at most an additional
  $O(\log(n)) = O(\log(W))$ span to perform the parallel for loop. We will show that all additional
  work can be charged to the round computations, and that at most an additional $O(\log(W))$ span
  overhead is incurred.
  
  We observe that $R_{r,p}, W_{r,p}$ and $X_{r,p}$ are at most the size of the work performed by the corresponding
  computations, hence the cost of Lines \ref{code:run-R_r,p} -- \ref{code:run-X_r,p} can be charged to
  the computation. The reader sets $S_m$ can be implemented as dynamic arrays with lazy deletion (this will be discussed
  during change propagation). To append new elements to $S_m$ (Line~\ref{code:run-subscribe}), we can
  use a semisort performing linear work in expectation to first bucket the shared memory locations in $\cup_{p \in P} R_{r,p}$, whose
  work can be charged to the corresponding computations that performed the reads. This adds an additional
  $O(\log(W))$ span w.h.p.\ since the number of reads is no more than $W$ in total.
  Finally, removing retired computations from $P$ (Line~\ref{code:run-filter-P} requires a compaction operation.
  Since compaction takes linear work, it can be charged to the execution of the corresponding
  processes. The span of compaction is at most $O(\log(W))$ in all models.
  
  Summing up the above, we showed that all additional work can be charged to the round computations,
  and the algorithm incurs at most $O(\log(W))$ additional span per round w.h.p. Hence the cost of the
  initial run is $O(W)$ work in expectation and $O(S + R \cdot \log(W))$ span w.h.p.

  We now analyze the change propagation procedure (Algorithm~\ref{alg:round-based-change-prop}).
  The core of the work is the re-execution of the affected readers on Line~\ref{code:change-reexecute},
  which, by definition takes $O(W_\Delta)$ work, and $O(S')$ span, with at most $O(\log(W'))$ additional
  span to perform the parallel for loop. Since some rounds may have no affected computations, the algorithm could perform up to $O(R')$ additional work to process these rounds. We will show that all additional work can be charged to the affected computations, and that no operation incurs more than an additional $O(\log(W'))$ span.

  Lines~\ref{code:change-A} -- \ref{code:change-bucket-affected-computations} bucket the newly
  affected computations by round. This can be achieved with an expected linear work semisort
  and by maintaining the $A_r$ sets as dynamic arrays. The work is chargeable to the affected computations
  and the span is at most $O(\log(W'))$ w.h.p. Computing the current set of affected computations (Line~\ref{code:change-P})
  requires a filter/compaction operation, whose work is charged to the affected computations and span
  is at most $O(\log(W'))$.
  
  Updating the reader sets $S_m$ (Line~\ref{code:change-delete-Sm}) can be done as follows. We maintain $S_m$ as dynamic arrays with lazy
  deletion, meaning that we delete by marking the corresponding slot as empty. When more than half
  of the slots have been marked empty, we perform compaction, whose work is charged to the updates
  and whose span is at most $O(\log(W'))$. In order to perform deletions in constant time, we
  augment the set $R_{r,p}$ so that it remembers, for each entry $m$, the location of $(r,p)$ in $S_m$.
  Therefore these updates take constant amortized work each (using a dynamic array), charged to the
  corresponding affected computations, and at most $O(\log(W'))$ span if a resize/compaction is triggered.
  
  $X^\text{prev}$ can be implemented as an array of size $|P|$, with work charged to the affected
  computations in $P$. As in the initial run, the cost of updating $R_{r,p}, W_{r,p}$ and $X_{r,p}$ can also be
  charged to the work performed by the affected computations.
  
  Updating the reader sets $S_m$ (Line~\ref{code:change-insert-Sm}) is a matter of appending to dynamic
  arrays, and, as mentioned earlier, remembering for each $m \in R_{r,p}$ $m$, the location of $(r,p)$ in $S_m$.
  The work performed can be charged to the affected computations, and the additional span is at most $O(\log(W'))$.

  Collecting the updated locations $U$ (Line~\ref{code:change-U}) can similarly be charged to the affected computations, and incurs no more than $O(\log(W'))$ span.
  On Lines \ref{code:change-L'} -- \ref{code:change-D}, the sets $L'$ and $D'$ are computed by a compaction over $P$, whose work is charged to the affected computations in $P$.
  Updating $L$ and $D$ correspondingly requires a compaction operation, whose work is charged to the affected computations in
  $L$ and $D$ respectively. Each of these compactions costs $O(\log(W'))$ span.

  We can finally conclude that all additional work performed by change propagation can be charged to the affected computations,
  and hence to the computation distance $W_\Delta$, while incurring at most $O(\log(W'))$ additional span per round w.h.p.
  Therefore the total work performed by change propagation is $O(W_\Delta + R')$ in expectation and the span is $O(S' + R' \cdot \log(W'))$ w.h.p.
\end{proof}

We now show that for a special class of \roundbased{} algorithms, the span overhead can be reduced. Our dynamic trees algorithm and our other examples all fall into this special case.

\begin{definition}
  A \emph{restricted \roundbased{}} algorithm is a \roundbased{} algorithm such that each round computation performs only
  a constant number of reads and writes, and each shared memory location is read only by a constant number of computations,
  and only in the round directly after it was written.
\end{definition}

\begin{theorem}\label{thm:restricted-bounds}
  Given a restricted \roundbased{} algorithm $A$ with input configuration $(I, P)$ that does $W$ work in $R$ rounds and $S$ span, then
  \begin{enumerate}[leftmargin=12pt]
  \item the initial run of the algorithm with tracking takes $O(W)$ work and $O(S + R \cdot \overhead{}(W))$ span,
  \item running change propagation on a dynamic update to the input configuration $(I', P')$ takes $O(W_\Delta)$ work and $O(D' + R' \overhead{}(W'))$
  span, where $D', R', W'$ are the maximum span, rounds, and work of the algorithm on the two input configurations,
  \end{enumerate}
  where $\overhead{}(W)$ is the cost of compaction, which is at most
  \begin{enumerate}[leftmargin=12pt]
    \item $O(\log^*(W))$ w.h.p.\ on the \crcwpram{},
    \item $O(\log(W))$ on the binary forking \mpram{}.
  \end{enumerate}
  Furthermore, the work bounds are only randomized (in expectation) on the \crcwpram{}.
\end{theorem}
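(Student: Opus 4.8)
The plan is to revisit the proof of Theorem~\ref{thm:generic_change_prop_bounds} and use the three restrictions to eliminate its two sources of overhead: the semisorts (which caused the $O(\log W)$ span and the randomization even in the binary forking model) and the per-round bookkeeping that produced the additive $O(R')$ work term. First I would observe that restrictions (1) and (2) force every set $R_{r,p}$, $W_{r,p}$, and $S_m$ to have $O(1)$ size, so each can be stored in a fixed-size array. Consequently, subscribing and unsubscribing reads (Lines~\ref{code:run-subscribe}, \ref{code:change-delete-Sm}, and \ref{code:change-insert-Sm}) become $O(1)$-work, $O(1)$-span operations performed independently per read, with no semisort required; this is exactly what removes the randomization on the binary forking model and drops the span contribution of these steps from $O(\log W)$ to $O(1)$. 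The only remaining primitive with superconstant span is the compaction used to keep $P$ dense and to compute $L', D'$ and update $L, D$ (Lines~\ref{code:run-filter-P}, \ref{code:change-P}, and \ref{code:change-L'}--\ref{code:change-D}); each such compaction costs $\overhead(W)$ span, which, summed over the at most $R$ (resp.\ $R'$) rounds, yields the span terms $O(R\,\overhead(W))$ and $O(R'\,\overhead(W'))$.

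The crucial step, and the main obstacle, is removing the additive $O(R')$ work term to obtain $O(W_\Delta)$. Here I would exploit restriction (3): since a location is read only in the round immediately following its write, the writes produced when re-executing round $r$ can affect computations only in round $r+1$. Hence the newly affected set is exactly $\cup_{m\in U}S_m$, obtainable by a direct $O(|U|)$-work lookup with no bucketing over future rounds, and there are never affected computations more than one round ahead. I would then argue that the change-propagation frontier is \emph{contiguous}: whenever any of $U$, $L$, or $D$ is nonempty after processing round $r$, the very next round $r+1$ necessarily contains work. For $U$ this is immediate; for $L$, a process that lived longer at round $r$ (did not retire although it previously did) must run again at round $r+1$; and for $D$, a process that died earlier at round $r$ had, in the previous execution, run at round $r+1$, so its stale subscriptions must be removed there. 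In all three cases the pending action falls in the immediately succeeding round, so the while loop never advances over an empty round.

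Given contiguity, I would conclude that every round the algorithm actually processes contains at least one affected computation, each doing $O(1)$ work. Since each such $(r,p)$ pair is an affected computation and is therefore counted in $W_\Delta$ (processes in $L$ and $D$ run in one configuration but not the other, and processes in $P$ read a changed value), the number of processed rounds is $O(W_\Delta)$ and, because all the per-round bookkeeping above is $O(1)$ work per affected computation, the total work charges entirely to $W_\Delta$ with no additive $R'$ term. The same charging argument applied to the initial run gives $O(W)$ work, where the tracking of $R_{r,p}, W_{r,p}, X_{r,p}$ and the $O(1)$-sized $S_m$ updates are all absorbed into the round computations.

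Finally, I would separate the two machine models to pin down $\overhead$ and the randomization. On the binary forking \mpram{}, compaction is realized by deterministic prefix sums in $O(W)$ work and $O(\log W)$ span, so $\overhead(W)=O(\log W)$, the $O(\log W')$ cost of forking the per-round parallel loops is itself within $\overhead(W')$, and every bound is deterministic. On the \crcwpram{}, I would instead invoke the approximate compaction of Gil et al.~\cite{gil1991towards}, which achieves $\overhead(W)=O(\log^*(W))$ span w.h.p.; since its output array may contain null slots, the parallel-for loops simply treat nulls as no-ops, and its $O(W)$ work guarantee is randomized, which is the sole reason the work bound is stated in expectation on the \crcwpram{}. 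Summing the $O(1)$-span bookkeeping and the $O(\overhead(W'))$-span compactions over the processed rounds, together with the inherent $O(D')$ (resp.\ $O(S)$) re-execution span, yields the stated $O(D' + R'\,\overhead(W'))$ span for change propagation and $O(S + R\,\overhead(W))$ for the initial run, completing the argument.
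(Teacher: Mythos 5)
Your proposal is correct and follows essentially the same route as the paper's own proof sketch: use the constant-size read/write sets to eliminate the semisorts, rely solely on (approximate) compaction for the remaining collection steps, and use the read-only-in-the-next-round restriction to argue that the affected rounds are contiguous and thereby drop the additive $R'$ work term. Your contiguity argument via $U$, $L$, and $D$ is in fact spelled out in more detail than the paper's one-sentence claim; the only small omission is that assembling $U$ and the iteration domains $\cup_{p} R_{r,p}$ as dense arrays also requires the sparse-array-plus-compaction step the paper highlights, but since those are compactions within the same $\overhead{}(W')$ span budget the stated bounds are unaffected.
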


\begin{proofsketch}
Rather than recreate the entirety of the proof of Theorem~\ref{thm:generic_change_prop_bounds}, we will simply sketch the differences.
In essence, we obtain the result by removing the uses of scans, and semisorts, which were the main cause of the $O(\log(W'))$ span overhead and the randomized work. Instead, we rely only on (possibly approximate) compaction, which is only randomized on the \crcwpram{}, and takes $O(\overhead{}(W'))$ span. We also lose the $R'$ term in the
work since computations can only read from locations written in the previous round,
and hence the set of rounds on which there exists an affected computation must be
contiguous.

The main technique that we will make use of is the sparse array plus compaction technique. In situations where we wish to collect a set
of items from each executed process, we would, in the unrestricted model, require a scan, which costs $O(\log(W'))$ span on the \crcwpram{}.
If each executed process, however, only produces a constant number of these items, we can allocate an array that is a constant size larger
than the number of processes, and each process can write its set of items to a designated offset. We can then perform (possibly approximate)
compaction on this array to obtain the desired set, with at most a constant factor additional blank entries. This takes $O(\overhead{}(W'))$
span.

Maintaining $S_m$ in the initial run and during change propagation is the first bottleneck, originally requiring a semisort. Since each
computation performs a constant number of writes, we can collect the writes using the sparse array plus compaction technique. Since, in
the restricted model, each modifiable will only be read by a constant number of readers, we can update $S_m$ in constant time.

To compute the affected computations $A_r$ also originally required a semisort, but in the restricted model, since all reads happen on the round directly after the write,
no semisort is needed, since they will all have the same value of $r$. Collecting the affected computations from the written modifiables can also be achieved using the
sparse array and compaction technique, using the fact that each computation wrote to a constant number of modifiables, and each modifiable is subsequently read by a constant
number of computations. Additionally, $A_r$ will be empty at the beginning of round $r$, so computing $P$ requires only a compaction operation.

Lastly, collecting the updated locations $U$ can also be performed using the sparse array and compaction technique. In summary, we can replace all
originally $O(\log(W'))$ span operations with $O(\overhead{}(W'))$ equivalents in the restricted setting, and hence we obtain a span bound of
$O(S' + R' \cdot \overhead{}(W'))$ for both the initial run and change propagation.
\end{proofsketch}

\begin{remark}[Space usage]
  We do not formally specify an implementation of the memory model, but one simple way to achieve good space bounds is to use hashtables to implement global shared memory. Each write to a particular global shared memory location maps to an entry in the hashtable. When a round computation is invalidated during a dynamic update, its writes can be purged from the hashtable to free up space, preventing unbounded space blow up. Since the algorithm must also track the reads of each global shared memory location, using this implementation, the space usage is proportional to the number of shared memory reads and writes. In the restricted \roundbased{} model, the number of reads must be proportional to the number of writes, and hence the space usage is optimal, since any strategy for storing shared memory must use at least this much.
\end{remark}

  \newcommand{\tcbuild}{\textbf{$\textproc{Build}$}}
\newcommand{\tcaddvertices}{\textbf{$\textproc{AddVertices}$}}
\newcommand{\tcremovevertices}{\textbf{$\textproc{RemoveVertices}$}}
\newcommand{\tcbatchinsert}{\textbf{$\textproc{BatchLink}$}}
\newcommand{\tcbatchdelete}{\textbf{$\textproc{BatchCut}$}}
\newcommand{\tcbatchisconnected}{\textbf{$\textproc{BatchConnected}$}}
\newcommand{\tcfindrep}{\textbf{$\textproc{FindRepr}$}}
\newcommand{\tcisconnected}{\textbf{$\textproc{IsConnected}$}}

\section{Dynamizing Tree Contraction}\label{sec:tree-contraction}

In this section, we show how to obtain a dynamic tree contraction algorithm by
applying our automatic dynamization technique to the static tree contraction algorithm of Miller and Reif \cite{miller1985parallel}. We will then, in Section~\ref{sec:rc-trees},
describe how to use this to obtain a powerful parallel batch-dynamic trees framework.

\myparagraph{Tree contraction} \emph{Tree contraction} is the process of shrinking a tree down to a single vertex by repeatedly performing local contractions. Each local contraction deletes a vertex and merges its adjacent edges if it had degree two. Tree contraction has a number of useful applications, studied
extensively in~\cite{miller1989parallel,miller1991parallel,acar2005experimental}.
It can be used to perform various computations by associating data
with edges and vertices and defining how data is accumulated during
local contractions.

Various versions of tree contraction have been proposed depending on the specifics of local contractions.
We consider an undirected variant of the randomized version proposed by Miller and Reif \cite{miller1985parallel}, which makes use of two operations: \emph{rake} and \emph{compress}. 
The former removes all nodes of degree one from the tree, except in the case of a pair of adjacent degree one vertices, in which case only one of them is removed by tiebreaking on the vertex IDs. The latter operation, compress, removes an independent set of vertices of degree two that are not adjacent to any vertex of degree one. Compressions are randomized with coin flips to break symmetry. Miller and Reif showed
that it takes $O(\log n)$ rounds w.h.p. to fully contract
a tree of $n$ vertices in this manner.

\myparagraph{Input forests} The algorithms described here operate on undirected forests $F = (V,E)$, where $V$
is a set of vertices, and $E$ is a set of undirected edges. If $(u,v) \in E$, we say that $u$ and $v$ are
\emph{adjacent}, or that they are \emph{neighbors}. A vertex with no neighbors is said
to be \emph{isolated}, and a vertex with one neighbour is called a \emph{leaf}.

We assume that the forests given as input have bounded
degree. That is, there exists some constant $t$ such that each vertex
has at most $t$ neighbors. We will explain how to handle arbitrary-degree trees
momentarily.

\myparagraph{The static algorithm} The static tree contraction algorithm (Algorithm~\ref{alg:forest_contraction-process-inverted}) works in rounds, each of which takes a forest from
the previous round as input and produces a new forest for the next round.
On each round, some vertices may be \emph{deleted}, in which case they
are removed from the forest and are not present in all remaining rounds.
Let $F^i = (V^i, E^i)$ be the forest after $i$ rounds of contraction, and thus $F^0 = F$ is the input forest.
We say that a vertex
$v$ is \emph{alive} at round $i$ if $v \in V^i$, and is \emph{dead} at round $i$
if $v \not\in V^i$. If $v \in V^i$ but $v \not\in V^{i+1}$ then 
$v$ was deleted in round $i$. There are three ways for a vertex to
be deleted: it either \emph{finalizes}, \emph{rakes}, or \emph{compresses}.
Finalization removes isolated vertices. Rake removes all leaves from the tree, with one special exception. If two leaves are adjacent, then to break symmetry and ensure that only one of them rakes, the one with the lower identifier rakes into the other. Finally, compression removes an independent set of degree two vertices that are not adjacent to any degree one vertices, as in Miller and Reif's algorithm.
The choice of which vertices are deleted in each round is made locally for each vertex based upon its own degree, the degrees of its neighbors, and coin flips for itself and its neighbors.
As in the list contraction, for coin flips, we assume a function $\textsc{Heads}(i,v)$ which
indicates whether or not vertex $v$ flipped heaps on round $i$. 
It is important that $\textsc{Heads}(i,v)$ is a function of both the vertex and the round number, as coin flips must be repeatable for change propagation to be correct.

The algorithm produces a \emph{contraction data structure} which serves as a record of the contraction process.
The contraction data structure is a tuple, $(A, D)$, where $A[i][u]$ is a list of pairs containing the vertices adjacent to $u$ in round $i$, and the positions of $u$ in the adjacency lists of the adjacent vertices.  $D[u]$ stores the round on which vertex $u$ contracted. The algorithm also records leaf$[i][u]$, which is true if vertex $u$ is a leaf at round $i$.
%




\myparagraph{Updates}
We consider update operations that implement the interface of a batch-dynamic tree
data structure. This requires supporting batches of links and cuts. A
\emph{link (insertion)} connects two trees in the forest by a newly inserted edge. A
\emph{cut (deletion)} deletes an edge from the forest, separating a
single tree into two trees. We formally specify the interface
for batch-dynamic trees and give a sample implementation of their operations
in terms of the tree contraction data structure in Appendix~\ref{appendix:tree-ops}.

\myparagraph{Handling trees of arbitrary degree} To handle trees of arbitrary degree,
we can split each vertex into a path of vertices, one for each of its neighbors. This technique is
standard and has been described in \cite{johnson1992optimal}, for example. This results in an underlying tree of degree $3$, with at most $O(n+m)$
vertices and $O(m)$ edges for an initial tree of $n$ vertices and $m$ edges. For edge-weighted trees,
the additional edges can be given a suitable identity or null weight to ensure that query values
remain correct.  It is
simple to maintain such a transformation dynamically. When performing
a batch insertion, a work-efficient semisort can be used to group each new neighbour by their endpoints, and
then for each vertex, an appropriate number of new vertices can be added to the path. Batch deletion
can be handled similarly.

  \subsection{Algorithm}

An implementation of the tree contraction algorithm in our framework is shown in Algorithm~\ref{alg:forest_contraction-process-inverted}.

\begin{algorithm}[h!]
  \caption{Tree contraction algorithm}
  \label{alg:forest_contraction-process-inverted}
  \scriptsize
  \begin{algorithmic}[1]		
   \Procedure{ComputeRound}{$i, u$}
     \Read{$A[i][u]$, leaf$[i][u]$}{$((v_1, p_1), ..., (v_t, p_t))$, $\ell$}
       \If{$v_i = \algnull \forall i$}
         \State \textsc{DoFinalize}($i, u$)
       \ElsIf{$\ell$}
         \State \alglocal $(v, p)$ \algassign{} $(v_i, p_i)$ such that $v_i \neq \algnull{}$
         \Read{leaf$[i][v]$}{$\ell'$}
         \If{$\lnot \ell' \lor\ u < v$}
         \textsc{DoRake}($i, u, (v, p)$)
         \Else\ \textsc{DoAlive}($i, u, ((v_1, p_1), ..., (v_t, p_t))$)
         \EndIf
       \Else
         \If{$\exists (v, p), (v', p') : \{ v_1, ..., v_t \} \setminus \{ \algnull \} = \{ v, v' \}$}
         \Read{leaf$[i][v]$, leaf$[i][v']$}{$\ell', \ell''$}
           \State \alglocal $c$ \algassign $\textsc{Heads}(i, u)\ \land\ \lnot\textsc{Heads}(i, v)\ \land\ \lnot\textsc{Heads}(i, v')$
           \If{($\lnot \ell' \land\ \lnot \ell'' \land\ c$)}
             \State \textsc{DoCompress}($i, u, (v, p), (v', p')$)
           \Else\ 
             \State \textsc{DoAlive}($i, u, ((v_1, p_1), ..., (v_t, p_t))$)
           \EndIf
         \EndRead
         \Else
         \State \textsc{DoAlive}($i, u, ((v_1, p_1), ..., (v_t, p_t))$)
         \EndIf
       \EndIf
     \EndRead
   \EndProcedure
    \State
    \Procedure{DoRake}{$i, u, (v, p)$}
      \State \algwritemod{$A[i+1][v][p]$}{\algnull}
      \State \algwritemod{$D[u]$}{$i$}
      \State \algretire
    \EndProcedure
    \State
    \Procedure{DoFinalize}{$i, u$}
      \State \algwritemod{$D[u]$}{$i$}
      \State \algretire
    \EndProcedure
    \State
    \Procedure{DoCompress}{$i, u, (v, p), (v', p')$}
      \State \algwritemod{$A[i+1][v][p]$}{$(v', p')$}
      \State \algwritemod{$A[i+1][v'][p']$}{$(v, p)$}
      \State \algretire
    \EndProcedure
    \State
    \Procedure{DoAlive}{$i, u, ((v_1, p_1), ..., (v_t, p_t))$}
      \State \alglocal nonleaves \algassign $0$
      \For{$j \algassign 1\ \algto t$}
        \If{$v_j \neq \algnull$}
          \State \algwritemod{$A[i+1][v_j][p_j]$}{$(u, j)$}
          \State nonleaves \texttt{+=} 1 - \algorithmicread(leaf$[i][v_j]$)
        \Else\ 
          \State \algwritemod{$A[i+1][u][j]$}{\algnull}
        \EndIf
      \EndFor
      \State \algwritemod{leaf$[i+1][u]$}{nonleaves $= 1$}
    \EndProcedure
  \end{algorithmic}
\end{algorithm}

  \subsection{Analysis}

We analyse the initial work, round, complexity, span, and computation distance of the tree contraction algorithm to obtain bounds for building and updating a parallel batch-dynamic trees data structure. Proofs are given in Appendix~\ref{appendix:tree-contraction-proofs}.

\begin{theorem}\label{thm:tc-costs}
  Given a forest of $n$ vertices, the initial work of tree contraction is $O(n)$ in expectation, the round complexity
  and the span is $O(\log(n))$ w.h.p.\ and the computation distance induced by updating $k$ edges is $O(k\log(1 + n/k))$ in expectation.
\end{theorem}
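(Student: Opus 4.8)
The first three quantities---initial work, round complexity, and span---follow essentially from the original Miller--Reif analysis, as already noted for Theorem~\ref{thm:introMR}. Because the forest has bounded degree, each invocation of \textproc{ComputeRound} reads and writes only $O(1)$ locations, so it performs $O(1)$ work and the span of a single round is $O(1)$. Miller and Reif show that a tree of $m$ vertices contracts in $O(\log m)$ rounds w.h.p.\ and that a constant fraction of the alive vertices are deleted each round in expectation; letting $n_i$ denote the number of alive vertices after $i$ rounds, this gives $\expct{n_i} = O(n/c^i)$ for some constant $c>1$. Summing the $O(1)$ per-vertex cost over all alive vertices across all rounds yields $\sum_i O(\expct{n_i}) = O(n)$ expected work, and the $O(\log n)$ round bound immediately gives $O(\log n)$ span w.h.p.

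The substance of the theorem is the computation-distance bound, and here the plan is to run the contraction on $T$ and on $T'$ using the \emph{same} coin flips $\textsc{Heads}(i,\cdot)$, so that the two executions are forced to differ only near the $k$ modified edges, and then bound the affected computations round by round. Let $\Phi_i$ be the set of affected computations at round $i$, i.e.\ vertices alive in only one execution, or alive in both but reading a different adjacency entry or neighbor leaf-flag. Since each affected computation costs $O(1)$, it suffices to establish the per-round estimate $\expct{|\Phi_i|} = O(\min(k,\, n/c^i))$ and sum it.

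This estimate has two ingredients. The easy one is $|\Phi_i| = O(n_i)$, giving $\expct{|\Phi_i|} \le O(\expct{n_i}) = O(n/c^i)$. The real work is the locality bound $\expct{|\Phi_i|} = O(k)$, i.e.\ that \emph{each} modification is responsible for only $O(1)$ affected computations per round in expectation. I would prove this by tracking how one edge modification propagates: at round $0$ only the modified endpoints and their $O(1)$ neighbors are affected, and inductively a vertex $w$ becomes newly affected at round $i+1$ only if some round-$i$ computation writing into $A[i+1][w]$ (namely $w$ or a current neighbor of $w$) was itself affected at round $i$. Hence the affected region is connected in the contracted forest and can only expand through vertices adjacent to a previously affected one. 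The crucial point is that the region does not merely grow: every alive vertex is raked, compressed, or finalized with constant probability per round, and since compressions are governed by independent coin flips, the boundary of the affected region is absorbed at a constant rate. A charging/potential argument over these coin flips then shows the expected size of the influence region of a single modification is $O(1)$ at every round, uniformly in $i$; charging each affected computation to a nearby modified endpoint gives $\expct{|\Phi_i|} = O(k)$. Combining the two bounds yields $\expct{|\Phi_i|} = O(\min(k, n/c^i))$.

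Finally, summing over rounds,
\[
  \expct{W_\Delta} \;=\; \sum_{i \ge 0} O\!\left(\expct{|\Phi_i|}\right) \;=\; \sum_{i \ge 0} O(\min(k,\, n/c^i)) \;=\; O(k\log(1+n/k)),
\]
where the geometric sum contributes $k$ for each of the $\Theta(\log(n/k))$ rounds with $n/c^i \ge k$ and a further $O(k)$ from the geometrically decreasing tail; the contribution of the low-probability event that contraction runs beyond $O(\log n)$ rounds is negligible, since even then the total work is polynomially bounded while the probability is inverse-polynomial. I expect the locality bound to be the main obstacle: in a purely deterministic contraction a single change could propagate along an entire path, so rigorously confining the expected reach of a modification is exactly where the randomization of compress must be exploited.
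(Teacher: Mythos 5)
Your overall architecture matches the paper's: run both executions with the same coin flips, bound the expected number of affected computations per round by $O(\min(k, n\beta^i))$, and sum by splitting the rounds at the threshold $\log_{1/\beta}(1+n/k)$. The first half (initial work, rounds, span) is fine and agrees with the paper's Lemmas on geometric shrinkage of $|V^i|$.

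The gap is exactly where you predicted it, and it is a genuine one: the per-round locality bound $\expct{|\Phi_i|} = O(k)$ does not follow from what you have written. You argue that the affected region around each modification is connected and ``can only expand through vertices adjacent to a previously affected one,'' and that it is ``absorbed at a constant rate.'' But connectivity plus bounded degree does not bound the \emph{growth rate}: a connected affected subtree of size $m$ in a degree-$3$ forest can have $\Theta(m)$ edges to unaffected vertices, so if every affected vertex that is deleted (or whose leaf status changes) could spread affectedness outward, the region could in principle grow by a constant factor per round, and the geometric shrinkage would not save you. The missing ingredient is the paper's \emph{frontier} argument: only an affected vertex with an unaffected neighbor (a frontier) can spread affectedness to a new vertex, and one shows by induction that each of the $O(k)$ initial affected components maintains at most $2$ frontier vertices throughout --- a raking frontier is replaced by at most one new frontier (its parent), a compressing frontier by at most two, and two existing frontiers cannot multiply. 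This yields \emph{arithmetic} growth $|A^{i+1}_j \setminus A^i_j| \le 2$ per component per round, which combined with the geometric shrinkage of the induced subtree (minus $O(1)$ protected vertices near the frontiers) gives the recurrence
\begin{equation}
  \expct{\setsize{A^{i+1}_j}} \le \beta\,\expct{\setsize{A^i_j}} + O(1) = O(1),
\end{equation}
and hence $\expct{|\Phi_i|} = O(k)$. Your ``charging/potential argument over these coin flips'' is a placeholder for precisely this inductive invariant; without establishing that the number of spreading sites per component stays $O(1)$, the claim that each modification's influence region has expected size $O(1)$ uniformly in $i$ is unsupported. Once that lemma is in place, the rest of your summation is correct and identical to the paper's.
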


  \section{Parallel Rake-compress Trees}\label{sec:rc-trees}

Dynamic trees typically provide support for
dynamic connectivity queries.
Most dynamic tree data structures also support some form of
augmented value query. For example, Link-cut trees~\cite{sleatorta83}
support root-to-vertex path queries, and Euler-tour
trees~\cite{henzingerki99} support subtree sum queries.
Top trees~\cite{tarjan2005self,alstrup2005maintaining} support
both path and subtree queries, as well as non-local queries such
as centers and medians, but are not amenable to parallelism.
The only existing parallel batch-dynamic tree data structure is
that of Tseng et al.~\cite{tseng2018batch}, which is based on Euler-tour trees, and
hence only handles subtree queries, and only when the associative operation
is invertible.

Rake-compress trees~\cite{acar2005experimental} (RC trees) are another sequential
dynamic trees data structure, based on tree contraction, and
have also been shown to be capable of handling both path and subtree
queries, as well as non-local queries, all in $O(\log(n))$ time.
In this section, we will explain how our parallel batch-dynamic
algorithm for tree contraction can be used to derive a parallel
batch-dynamic version of RC trees, leading to the first work-efficient algorithm for
batch-dynamic trees that can handle this wide range of queries. We use a slightly different
set of definitions than the original presentation of RC trees in \cite{acar2005experimental},
which correct some subtle corner cases and simplify the exposition,
although the resulting data structure is the same, and hence all of
the query algorithms for sequential RC trees work on our parallel
version.

\myparagraph{Contraction and clusters} RC trees are based
on the idea that the tree contraction process can be interpreted as
a recursive clustering of the original tree. Formally, a
\emph{cluster} is a connected subset of vertices and edges of the original
tree. Note, importantly, that a cluster may contain an edge without containing
both of its endpoints. The \emph{boundary} vertices of a cluster $C$ are the
vertices $v \notin C$ that are adjacent to an edge $e \in C$. The \emph{degree}
of a cluster is the number of boundary vertices of that cluster.
The vertices and edges of the original tree form the base clusters.
Clusters are merged using the following simple rule:
Whenever a vertex $v$ is deleted, all of the clusters that have
$v$ as a boundary vertex are merged with the base cluster containing $v$.
We can therefore see that all clusters formed will have degree at most two.
A cluster of degree zero is called a \emph{nullary} cluster, a cluster of degree one a 
\emph{unary} cluster, and a cluster of degree two a \emph{binary} cluster.
All non-base clusters have a unique \emph{representative vertex},
which corresponds to the vertex that was deleted to form it. For additional clarity, we
provide figures in Appendix~\ref{appendix:rc-trees} that explain what each kind of formed
cluster looks like in more detail.

\subsection{Building and maintaining RC trees}

Given a tree and an execution of the tree contraction algorithm, the RC tree consists
of \emph{nodes} which correspond to the clusters formed by the contraction process.
The children of a node are the nodes corresponding to the clusters that merged together to form it. An example tree, a clustering, and the
corresponding RC tree are depicted in Figure~\ref{fig:rc-tree}. Note that in the case of a disconnected forest, the RC tree will have multiple roots.

\newcommand{\datatype}[1]{\textit{\textbf{#1}}}

We will sketch here how to maintain an RC tree subject to batch-dynamic updates in parallel
using our algorithm for parallel batch-dynamic tree contraction. This requires just two simple
augmentations to the tree contraction algorithm. Recall that tree contraction
(Algorithm~\ref{alg:forest_contraction-process-inverted}) maintains an adjacency list for
each vertex at each round. Whenever a neighbour $u$ of a vertex $v$ rakes into $v$, the
process $u$ writes a null value into the corresponding position in $v$'s adjacency list.
This process can be augmented to also write, in addition to the null value, the identity
of the vertex that just raked. We make one additional augmentation. When storing the data for a neighboring
edge in a vertex's adjacency list, we additionally write the name of the representative vertex if that
edge corresponds to a compression, or null if the edge is an edge of the original tree.
The RC tree can then be inferred using this augmented data as follows.
\begin{enumerate}[leftmargin=12pt]
  \item Given any cluster $C$ with representative $v$, its unary children can be determined
  by looking at the vertices that raked into $v$. The children are precisely the unary clusters
  represented by these vertices. For the final cluster, these are its only children.
  \item Given a binary or unary cluster $C$ with representative $v$, its binary children can be determined
  by inspecting $v$'s adjacency list at the moment it was deleted. The binary clusters corresponding
  to the edges adjacent to $v$ at its time of death are the binary children of the cluster $C$.
\end{enumerate}
It then suffices to observe that this information about the clusters can be recorded during the
contraction process. By employing change propagation, the RC tree can therefore be maintained
subject to batch-dynamic updates. Since each cluster consists of a constant amount of information,
this can be done in the same work and span bounds as the tree contraction algorithm. We therefore
have the following result.
\begin{theorem}
  We can maintain a rake-compress tree of a tree on $n$ vertices subject to batch insertions and batch deletions of
  size $k$ in $O(k\log(1+n/k))$ work in expectation and $O(\log^2(n))$ span per update w.h.p. The span can be improved
  to $O(\log(n) \log^*(n))$ w.h.p. on the \crcwpram{}.
\end{theorem}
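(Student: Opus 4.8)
The plan is to reduce the entire claim to the cost of change propagation on the augmented tree contraction algorithm, so that the result follows by combining Theorem~\ref{thm:tc-costs} with the pre-established cost bounds of Theorems~\ref{thm:generic_change_prop_bounds} and~\ref{thm:restricted-bounds}. As described above, maintaining the RC tree requires only that each round computation of Algorithm~\ref{alg:forest_contraction-process-inverted} additionally record, for each deletion, the identity of the vertex that raked and the representative of any compressed edge. Since each cluster carries a constant amount of information and each round computation already touches only a constant number of adjacency entries, this augmentation increases the reads and writes of each \textsc{ComputeRound} call by at most a constant factor. The first step, then, is to argue that the augmented algorithm has the same asymptotic initial work, round complexity, span, and computation distance as the unaugmented one, namely those given in Theorem~\ref{thm:tc-costs}: initial work $W = O(n)$ in expectation, round count and span $R = S = O(\log n)$ w.h.p., and computation distance $W_\Delta = O(k\log(1+n/k))$ in expectation for a batch of $k$ edge updates.

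Second, I would verify that the augmented tree contraction algorithm is a \emph{restricted} \roundbased{} algorithm, so that Theorem~\ref{thm:restricted-bounds} is applicable. Because the input forest has bounded degree, each call to \textsc{ComputeRound}$(i,u)$ reads only $A[i][u]$ together with the leaf flags of $u$ and of its constantly many neighbors, and writes only a constant number of entries into $A[i+1][\cdot]$, $D[\cdot]$, and leaf$[i+1][\cdot]$; moreover every shared location written in round $i$ is read only in round $i+1$ and only by a constant number of processes. This matches the definition of a restricted algorithm, and the RC-tree augmentation preserves it since it adds only a constant number of additional constant-sized writes per deletion.

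Finally, I would substitute these quantities into the two cost theorems. For the work and the sharper span on the \crcwpram{}, Theorem~\ref{thm:restricted-bounds} gives change-propagation work $O(W_\Delta) = O(k\log(1+n/k))$ in expectation and span $O(S' + R'\,\overhead{}(W'))$, where $S' = O(\log n)$, $R' = O(\log n)$, and $\overhead{}(W') = O(\log^*(W')) = O(\log^* n)$ w.h.p.\ on the \crcwpram{} (here $W' = O(n)$ since edge links and cuts leave the vertex count unchanged); this yields the claimed $O(\log n \log^*(n))$ span w.h.p. For the model-independent $O(\log^2 n)$ bound, I would instead invoke Theorem~\ref{thm:generic_change_prop_bounds}, whose change-propagation span is $O(S' + R'\log W')$; since $\log W' = O(\log n)$, this is $O(\log n + \log n \cdot \log n) = O(\log^2 n)$ w.h.p., with work $O(W_\Delta + R') = O(k\log(1+n/k))$ in expectation (the $R' = O(\log n)$ term is dominated by $W_\Delta$ for every $k \ge 1$).

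The main obstacle, and the only genuinely non-mechanical part, is the first step: confirming that the RC-tree augmentation changes the computation distance by at most a constant factor and, in particular, does not introduce new write--read dependences that would enlarge the set of affected computations or violate the restricted-model hypotheses. The recorded rake identities and compression representatives are written exactly alongside the writes already performed by \textsc{DoRake} and \textsc{DoCompress}, so they create no dependences beyond those already present in Algorithm~\ref{alg:forest_contraction-process-inverted}; once this is checked, everything else is a direct substitution of the measures of Theorem~\ref{thm:tc-costs} into the change-propagation cost bounds.
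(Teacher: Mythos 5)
Your proposal is correct and follows essentially the same route as the paper: the paper's (very terse) justification is precisely that the RC-tree augmentation adds only a constant amount of information per cluster, so the bounds of Theorem~\ref{thm:tc-costs} carry over and can be plugged into Theorems~\ref{thm:generic_change_prop_bounds} and~\ref{thm:restricted-bounds}. Your write-up is a more careful version of this argument, in particular in explicitly verifying the restricted \roundbased{} hypotheses (constant reads/writes under the bounded-degree assumption, and reads occurring only in the round after the write) and in noting that the augmentation introduces no new write--read dependences.
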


\begin{figure}
  \centering
  \begin{subfigure}{0.75\columnwidth}
    \centering
    \includegraphics[width=0.9\columnwidth]{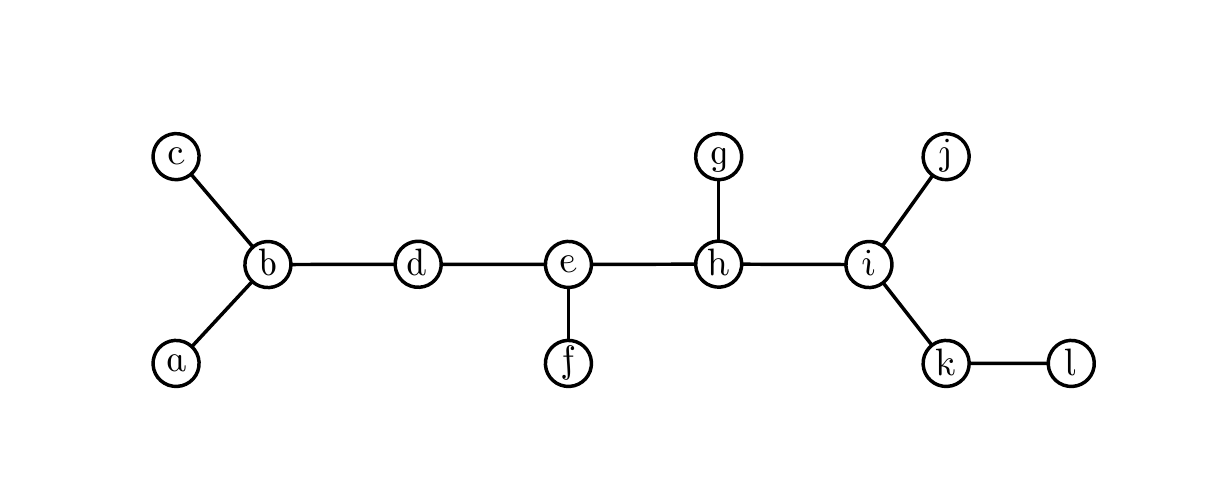}
    \caption{A tree}
  \end{subfigure}
  \begin{subfigure}{0.75\columnwidth}
    \centering
    \includegraphics[width=0.9\columnwidth]{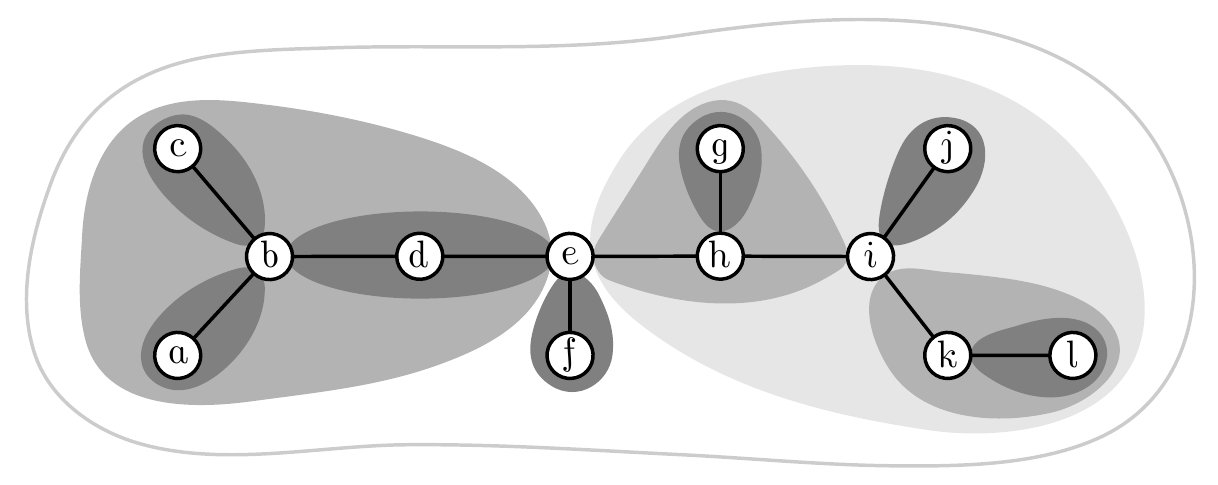}
    \caption{A recursive clustering of the tree produced by tree contraction. Clusters produced in earlier rounds are depicted in a darker color.}
  \end{subfigure}
  \begin{subfigure}{0.9\columnwidth}
    \bigskip
    \centering
    \includegraphics[width=\columnwidth]{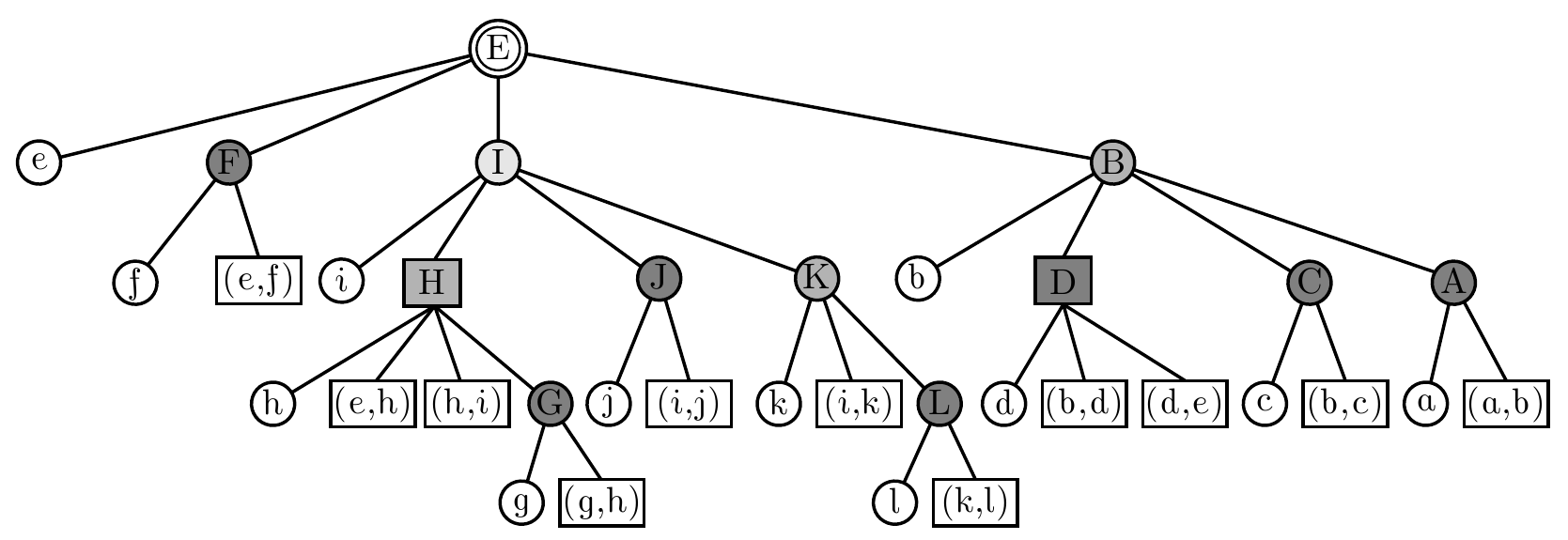}
    \caption{The corresponding RC tree. (Non-base) unary clusters are shown as circles, binary clusters as rectangles, and the finalize (nullary) cluster at the root with two concentric circles. The base clusters (the leaves) are labeled in lowercase, and the composite clusters are labeled with the uppercase of their representative.}
  \end{subfigure}
  \caption{A tree, a clustering, and the corresponding RC tree.}\label{fig:rc-tree}
\end{figure}

\subsection{Applications}

Most kinds of queries assume that the vertices
and/or edges of the input tree are annotated with data, such as weights or labels.
In order to support queries, each cluster is annotated with some additional
information.
The algorithm must then specify how to combine the data from multiple constituent clusters whenever
a set of clusters merge. These annotations are generated
during the tree contraction algorithm, and are therefore available for querying
immediately after performing an update.

Once the clusters are annotated with the necessary data, the queries themselves
typically perform a bottom-up or top-down traversal of the RC tree, or possibly in the 
case of more complicated queries, a combination of both. A variety of application queries is described in \cite{acar2005experimental}.

\myparagraph{Batch queries} For some applications,
we can also implement \emph{batch queries}, in which we answer $k$ queries simultaneously
in $O(k \log(1+n/k))$ work in expectation and $O(\log(n))$ span w.h.p. This improves upon the work bound of $O(k\log(n))$ obtained by simply running the queries in parallel naively. The general idea is to detect when multiple
bottom-up traversals would intersect, and to have only one of them proceed up the RC tree. Upon
reaching the root, the computation can backtrack down the tree in parallel and distribute
the answers to the query. The most obvious query for which this technique is applicable is
finding a representative vertex of the connected component containing a vertex. When traversing
upwards, if multiple query paths intersect, then only one proceeds up the tree and brings the answer
back down for the other ones.
The following theorem formalizes the advantage of performing batch queries. The proof appears in Appendix~\ref{appendix:rc-batch-queries}

\begin{theorem}\label{thm:rc-batch-queries}
  Given a tree on $n$ vertices and a corresponding $RC$ tree, $k$ root-to-leaf
  paths in the RC tree touch $O(k \log(1 + n/k))$ distinct RC tree nodes in expectation.
\end{theorem}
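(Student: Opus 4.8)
The plan is to stratify the RC tree by \emph{contraction round} and to count, round by round, how many distinct nodes the $k$ paths can reach. Every non-base cluster has a representative vertex $v$, which is deleted on exactly one round $D[v]$; I assign that cluster to level $D[v]$, and assign each base cluster to level $0$. The first fact I would establish is that along any single root-to-leaf path the levels strictly increase as we move from the leaf toward the root. Indeed, the parent of the cluster represented by $v$ is the cluster represented by the boundary vertex $w$ of $C_v$ whose deletion absorbs $C_v$; since $w$ is a neighbour of $v$ that is still alive after round $D[v]$, we have $D[w] > D[v]$ (the same holds for edge base clusters, whose parent is formed when one endpoint is deleted). Consequently each path touches at most one node on each level, so it suffices to bound, for every $i$, the number $X_i$ of distinct level-$i$ nodes lying on at least one of the $k$ paths.

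For the per-level bound I would use two elementary caps. On one hand $X_i \le k$, since there are only $k$ paths and each contributes at most one node to level $i$. On the other hand $X_i \le N_i$, where $N_i$ is the number of clusters formed on round $i$, i.e.\ the number of vertices deleted on round $i$; this is at most $n_i := |V^i|$, the number of vertices alive at the start of round $i$. Hence $X_i \le \min(k, n_i)$ deterministically, and the total number of touched nodes is at most $k + \sum_{i \ge 1} X_i$, where the leading $k$ accounts for the base clusters at the leaves of the paths.

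The quantitative heart of the argument is the geometric decay of $n_i$. From the Miller--Reif analysis applied to bounded-degree forests, a constant fraction of the alive vertices contract on each round in expectation, giving $\expct{n_{i+1} \mid F^i} \le \alpha\, n_i$ for some constant $\alpha \in (0,1)$, and hence $\expct{n_i} \le \alpha^i n$ by iterating. Since $x \mapsto \min(k,x)$ is concave, Jensen's inequality gives $\expct{X_i} \le \expct{\min(k, n_i)} \le \min(k, \expct{n_i}) \le \min(k, \alpha^i n)$, and summing by linearity of expectation bounds the total by $k + \sum_{i \ge 1}\min(k, \alpha^i n)$. I would finish by splitting this sum at the threshold $i^\star = \log_{1/\alpha}(n/k)$: the $O(\log(n/k))$ terms with $i \le i^\star$ each contribute $k$, for a total of $O(k\log(n/k))$, while the terms with $i > i^\star$ form a geometric series summing to $O(k)$; together these give $O(k\log(1 + n/k))$. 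Note this convergent sum makes the w.h.p.\ bound on the number of rounds unnecessary for the expectation.

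I expect the main obstacle to be the per-round expected-decay statement $\expct{n_{i+1} \mid F^i} \le \alpha\, n_i$. Miller and Reif phrase their result as an $O(\log n)$ round bound w.h.p., whereas here I need a clean per-configuration expected contraction rate. Establishing it rests on the combinatorial fact that in a bounded-degree tree the leaves together with the compressible degree-two vertices form a constant fraction of all vertices---using $\sum_v (d_v - 2) = -2$ to bound the number of branching vertices by the number of leaves---combined with the observations that raking removes at least half the leaves and that compression removes each eligible degree-two vertex with constant probability. The remaining steps, namely the strict monotonicity of levels along a path and the Jensen/summation bookkeeping, are routine once this decay estimate is in hand.
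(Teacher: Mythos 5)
Your proposal is correct and follows essentially the same route as the paper's proof: both assign each RC tree node to the round in which its representative is deleted, observe that a root-to-leaf path touches at most one node per round, and split the rounds at $\log_{1/\beta}(1+n/k)$, bounding the early rounds by $k$ nodes per round and the late rounds by the geometrically decaying count of live vertices (the paper's Lemma on forest shrinkage, which is exactly the per-round expected-decay estimate you flag as the main obstacle). Your additions---the explicit monotonicity of $D[\cdot]$ along a path and the $\min(k,n_i)$ bookkeeping---are just more detailed write-ups of steps the paper asserts directly.
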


  \section{Conclusion}

In this paper we showed that we can obtain work-efficient parallel batch-dynamic
algorithms by applying an algorithmic dynamization technique to corresponding
static algorithms. Using this technique, we obtained
the first work-efficient parallel algorithm for batch-dynamic trees that supports
more than just subtree queries. It can handle path queries and non-local queries such as centers and medians. Our framework also demonstrates the broad benefits of algorithmic dynamization; much of the complexity of designing parallel
batch-dynamic algorithms by hand is removed, since the static algorithms are usually
simpler than their dynamic counterparts.

We note that although the \roundBased{} model captures a very broad
class of algorithms, the breadth of algorithms suitable for
dynamization is less clear. To be suitable for dynamization, an
algorithm additionally needs to have small computational distance
between small input changes. As some evidence of broad applicability,
however, the practical systems mentioned in the introduction have been applied broadly and successfully---again
without any theoretical justification, yet.

\section*{Acknowledgements}

This work was supported in part by NSF grants CCF-1408940 and CCF-1629444. The authors
would like to thank Ticha Sethapakdi for helping with the RC tree figures.

  \bibliographystyle{abbrv}
  \bibliography{ref,umut}

  \clearpage
  \appendix

\section{Map-reduce-based Computations}\label{sec:map-reduce}

To illustrate the framework, we describe a simple, yet powerful
technique that we can implement and analyze. This is the so-called
\emph{map-reduce} technique. A map-reduce algorithm takes as input
a sequence $a_0, a_1, ..., a_{n-2}, a_{n-1}$, a unary function $f$, and an associative
binary operator $\oplus$, and computes the value of
\begin{equation}
f(a_0) \oplus f(a_1) \oplus ... \oplus f(a_{n-2}) \oplus f(a_{n-1})
\end{equation}

Although simple, this technique encompasses a wide range of applications,
from computing sums, where $f$ is the identity function
and $\oplus$ is addition, to more complicated examples such as the Rabin-Karp
string hashing algorithm, where $f$ computes the hash value of a character,
and $\oplus$ computes the hash corresponding to the the concatenation of two
hash values.

An implementation of the map-reduce technique in our framework is shown
in Algorithm~\ref{alg:map_reduce}. For simplicity, assume that the input
size $n$ is a power of two, that the input is stored in $A[0...n-1]$, and the
initial set of processes is $0 ... n-1$. The algorithm proceeds in a bottom-up merging fashion, combining
each adjacent pair of elements with the $\oplus$ operator. When the algorithm
terminates, the result will be stored in $V[R][0]$, where $R$ is the index of
the final round.

\begin{algorithm}[h]
  \caption{Map reduce algorithm}
  \label{alg:map_reduce}
  \scriptsize
  \begin{algorithmic}[1]
    \Procedure{ComputeRound}{$r, p$}
    \If{$r = 0$}
      \State \algwritemod{$V[0][p]$}{$f(A[p])$}
    \Else
      \Read{$V[r-1][2p+1]$, $V[r-1][2p+2]$}{$s_1, s_2$}
      \State \algwritemod{$V[r][p]$}{$s_1 \oplus s_2$}
      \EndRead
    \EndIf
    \If{$p \geq n - 2^r$}
    \algretire
    \EndIf
    \EndProcedure
  \end{algorithmic}
\end{algorithm}

\myparagraph{Application to range queries}
Since the intermediate results of the computation are also preserved, it is possible, using standard techniques,
to use this information to perform range queries on any range of the input. That is, the resulting output of the sum algorithm could be used
to compute range sums, and the output of the Rabin-Karp algorithm could be used to compute the hash of any substring of the input string.

\myparagraph{Analysis} We analyze the initial work, round complexity, and span of the algorithm. We also analyze
the computation induced by dynamic updates.

\begin{theorem}
  Given a sequence of length $n$, and a map function $f$ and an associative operation $\oplus$, both taking $O(1)$ time to compute, the initial work of the map reduce algorithm is $O(n)$, the round complexity and span is $O(\log(n))$, and the computation distance of a dynamic update to $k$ elements is $O(k\log(1+n/k))$.
\end{theorem}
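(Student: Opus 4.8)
The plan is to identify the execution of Algorithm~\ref{alg:map_reduce} with a balanced binary reduction tree $T$ on $n$ leaves and to read off the first three bounds directly from its shape, leaving the computation-distance bound as the only substantive step. Write $V[r][\cdot]$ for the values produced in round $r$: round $0$ produces the $n$ leaf values $f(A[0]),\ldots,f(A[n-1])$, and each subsequent round combines adjacent results in pairs, so round $r$ has $\lceil n/2^r\rceil$ active processes and the computation is a complete binary tree of depth $R=\lceil\log_2 n\rceil$ whose node $(r,p)$ performs a single $O(1)$ operation $s_1\oplus s_2$ (or one evaluation of $f$ at the leaves).

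From this shape the first three claims are immediate. The total number of round computations is $\sum_{r=0}^{R}\lceil n/2^r\rceil = O(n)$, and since each does $O(1)$ work (as $f$ and $\oplus$ cost $O(1)$) the initial work is $O(n)$. The number of rounds is $R=O(\log n)$, giving the round complexity. Because every round computation costs $O(1)$, the span---the sum over rounds of the maximum per-round cost---is $O(R)=O(\log n)$.

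For the computation distance, I would first characterize the affected computations. A dynamic update that changes $k$ input values leaves the process set, and hence the tree $T$, unchanged, so no computation is affected through clause~(1) of the definition of affected computation; a computation is affected exactly when it transitively reads a changed value. Changing $A[i]$ changes only $V[0][i]=f(A[i])$, whose effect propagates upward to precisely the ancestors of leaf $i$ in $T$. Hence the affected computations are exactly the nodes lying on the $k$ root-to-leaf paths determined by the $k$ modified leaves, and (each node doing $O(1)$ work, counted over both configurations) the computation distance is within a constant factor of the number of such distinct nodes.

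It therefore remains to bound the number of distinct nodes on $k$ root-to-leaf paths in a balanced binary tree with $n$ leaves, which is the one combinatorial step and the crux of the argument. I would argue level by level: level $\ell$ (counting $\ell=0$ at the leaves) contains $\lceil n/2^\ell\rceil$ nodes, and the $k$ paths can touch at most $\min(k,\lceil n/2^\ell\rceil)$ of them, so the total is at most $\sum_{\ell=0}^{R}\min(k, \lceil n/2^\ell\rceil)$. Splitting the sum at the threshold $\ell^\star=\log_2(n/k)$ where the two terms cross, the $O(\log(1+n/k))$ levels below it each contribute at most $k$, totaling $O(k\log(1+n/k))$, while the levels above contribute a geometric series $\sum_{\ell>\ell^\star}\lceil n/2^\ell\rceil=O(k)$. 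Adding these gives $O(k\log(1+n/k))$, where the ``$1+$'' absorbs the regime $k=\Theta(n)$ in which all $\Theta(n)$ nodes are touched. This deterministic count is the balanced-tree analogue of Theorem~\ref{thm:rc-batch-queries}; the only care needed is the clean handling of the crossover level and the boundary cases $k=1$ and $k=n$, so this level-counting estimate is where I expect essentially all of the work to lie.
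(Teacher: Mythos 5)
Your proposal is correct and follows essentially the same route as the paper: identify the affected computations with (at most) the ancestors of the $k$ modified leaves in the balanced reduction tree, then split the rounds at the threshold $\log(1+n/k)$, charging $O(k)$ per round below it and bounding the rounds above it by the geometrically decreasing total number of live computations, which sums to $O(k)$. The only cosmetic difference is that you phrase the per-level count as $\min(k,\lceil n/2^\ell\rceil)$ and say the ancestors are ``exactly'' the affected computations where the paper more carefully says ``in the worst case'' (an ancestor need not be affected if the recombined value happens to coincide), but since both arguments only use this as an upper bound, nothing changes.
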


\begin{proof}
  First note that since $f$ and $\oplus$ take constant work, each computation
  performs constant work. In round zero, the work
  is therefore $O(n)$. At each round, half of the processes retire, and therefore
  the total work is at most $O\left( n + n/2 + ... \right) = O(n)$
  as desired.
  Since at each round, half of the processes retire, the total number
  of rounds and the span will be $O(\log(n))$.
  
  We now analyze the computation distance of a dynamic update. The affected computations
  can be thought of as a divide-and-conquer tree, a tree in which each computation at round $r > 0$
  has two children, the computations at round $r - 1$ that wrote the values that it read and
  combined. Updating $k$ elements causes $k$ computations at $r = 0$ to become affected, as well as,
  in the worst case, all ancestors of those computations. 
  
  Consider first, all affected computations that occur in rounds $r < \log(1 + n/k)$. Since there
  are $k$ affected computations at $r = 0$ and each can affect at most one computation in the next
  round, there are at most $O(k \log(1+n/k))$ affected computations in these rounds. Now, consider
  the rounds $r \geq \log(1 + n/k)$. Since the number of live computations halves in each round,
  the number of computations (affected or otherwise) at this round is at most
  \begin{equation}
  \begin{split}
    \frac{n}{2^{\log(1+n/k)}} = O\left( \frac{n}{1+n/k} \right) = O\left( \frac{k}{1 + k/n} \right) = O(k).
  \end{split}
  \end{equation}
  Since the number of live computations continues to halve in each round, the total number
  of computations (affected or otherwise) in all rounds $r \geq \log(1+n/k)$ is $O(k)$. Therefore,
  the total number of affected computations across all rounds is at most
  \begin{equation}
  O(k) + O\left(k\log(1 + n/k)\right) = O(k\log(1+n/k)).
  \end{equation}
  Since each affected computation performs constant work, we can conclude that the computation
  distance of a dynamic update to $k$ elements is $O(k\log(1+n/k))$.
\end{proof}

This completes the analysis of the map-reduce technique. Although simple,
the technique is both common and serves as a useful illustrative example of
the framework, and the steps involved in designing and analyzing an algorithm.
That is, we must first
define the input to the algorithm, and the computations that will be performed
at each round. Then, we must analyze the complexity of the algorithm, which
consists of analyzing the initial work, the round complexity and span, then
finally, and most interestingly, the computation distance induced by
dynamic updates to the input. Importantly, the technique of analyzing
computation distance by splitting the rounds at some threshold, often
$O(\log(1+n/k))$), and then bounding the work done before and after the
threshold is very useful, and is used is both analyses of our other two
applications, which are significantly more complex and technically 
challenging.

\newcommand{\lcbuild}{{\bfseries\textsc{Build}}}
\newcommand{\lcbatchsplit}{{\bfseries\textsc{BatchSplit}}}
\newcommand{\lcbatchjoin}{{\bfseries\textsc{BatchJoin}}}
\newcommand{\lcbatchsetvalue}{{\bfseries\textsc{BatchUpdateValue}}}
\newcommand{\lcgetvalue}{{\bfseries\textsc{QueryValue}}}
\newcommand{\lcbatchgetvalue}{{\bfseries\textsc{BatchQueryValue}}}

\section{Dynamizing List Contraction}\label{sec:list-contraction}

List contraction is a fundamental problem in the study of parallel algorithms \cite{karp1990parallel,jaja1992introduction}. In addition to serving as a canonical solution to the list ranking problem (locating an element in a linked list), it is often considered independently as a classic example of a pointer-based algorithm. In this section, we show how the classic parallel list contraction algorithm can be algorithmically dynamized. By dynamizing parallel list contraction, we obtain a canonical dynamic sequence data structure which supports the same set of operations as a classic data structure, the skip list. Our resulting work bounds match the best known hand-crafted parallel batch-dynamic skip lists in the CRCW PRAM model \cite{tseng2018batch}. Lastly, the data structure can be augmented to support queries with respect to a given associative function.

\newcommand{\spliceout}{splice out}
\newcommand{\splicesout}{splices out}
\newcommand{\splicedout}{spliced out}

\myparagraph{List contraction} The list contraction process takes as input a sequence of nodes that form a collection of linked lists, and progressively contracts each list into a single node.   The contraction process operates in rounds, each of which
\splicesout{} an independent set of nodes from the list. When a node is isolated (has no left or right neighbour), it \emph{finalizes}. To select an independent set of nodes to \spliceout{}, we use the random mate technique, in which each node flips an unbiased coin, and is \splicedout{} if it flips heads but its right neighbour flipped tails.

\myparagraph{The static algorithm} The algorithm produces a \emph{contraction data structure}, which records the contraction process and maintains the information necessary to perform queries. This data structure is encoded as a tuple $(L, R, D)$, where $L[i][u]$ and $R[i][u]$ are the left and right neighbours of $u$ at round $i$, and $D[u]$ is the number of rounds that $u$ remained alive (i.e.\ the round number at which it is deleted).

For coin flips, we assume a function $\textsc{Heads}(i,u)$ which
indicates whether or not node $u$ flipped heaps on round $i$. 
It is important that $\textsc{Heads}(i,u)$ is a function of both the node and the round number, as coin flips must be repeatable for change propagation to be correct.

\myparagraph{Updates}\label{subsec:sequences}
We consider update operations that implement the interface of a batch-dynamic
sequence data structure. This includes operations for \emph{joining}
two sequences together and \emph{splitting} a sequence at a given
element. We specify the operations formally supported by batch-dynamic sequences below.

\myparagraph{Augmented dynamic sequences} The list contraction algorithm can be augmented with support for queries with respect to some associative operator $f : D^2 \to D$. This can be achieved by recording,
for each live node $u$, the sum (with respect to $f$) of the values between $u$ and its current right neighbour. This value is updated whenever a node is spliced
out by summing the values recorded on the two adjacent vertices. Queries between nodes $u$ and $v$ can then be performed by walking up the contraction data structure
until $u$ and $v$ meet, summing the values of the adjacent nodes as they go.


\subsection{Interface for Dynamic Sequences}\label{appendix:seq-ops}

Formally, a batch-dynamic sequence supports the following operations.

\begin{itemize}[leftmargin=12pt]
  \item \lcbatchjoin$(\set{(u_1, v_1), \dots, (u_k, v_k)})$
    takes an array of tuples where the $i$-th tuple is a pointer to
    the last element $u_i$ of one sequence and a pointer to the first
    element $v_i$ of a second sequence. For each tuple, the first
    sequence is concatenated with the second sequence. For any
    distinct tuples $(u_i, v_i)$ and $(u_j, v_j)$ in the input, we
    must have $u_i \not= u_j$ and $v_i \not= v_j$.

  \item \lcbatchsplit$(\set{u_1, \ldots, u_k})$ takes
    an array of pointers to sequence elements and, for each element
    $u_i$, breaks the sequence immediately after $u_i$.
\end{itemize}
Optionally, the following can be included to facilitate augmented
value queries with respect to an associative function $f$:
\begin{itemize}[leftmargin=12pt]
  \item \lcbatchsetvalue$(\set{(u_1, a_1), \dots, (u_k, a_k)})$
    takes an array of tuples where the $i$-th tuple contains a pointer
    to an element $u_i$ and a new value $a_i \in D$ for the element.
    The value for $u_i$ is set to $a_i$ in the sequence.

  \item \lcbatchgetvalue$(\set{(u_1, v_1), \dots, (u_k, v_k)})$
    takes an array of pairs of sequence elements. The return value is
    an array where the $i$-th entry holds the value of $f$ applied
    over the subsequence between $u_i$ and $v_i$, inclusive. For all
    pairs, $u_i$ and $v_i$ must be elements in the same sequence, and
    $v_i$ must appear after $u_i$ in the sequence.
\end{itemize}

\noindent An implementation of the high-level interface for updates in terms of the contraction
data structure is illustrated in Algorithm \ref{alg:list_contraction_interface}.

\begin{algorithm}[h]
  \caption{Dynamic sequence operations}
  \label{alg:list_contraction_interface}
  \scriptsize
  \begin{algorithmic}[1]
    \Procedure{Build}{$S$}
    \ParallelFor{$u \algassign 0$ \algto $n-1$}
    \State \algwritemod{$L[0][u]$}{$S[u]$.\textit{prev}}
    \State \algwritemod{$R[0][u]$}{$S[u]$.\textit{next}}
    \State \algwritemod{$A[u]$}{$S[u]$.\textit{value}}
    \EndParallelFor
    \State \textsc{Run}($[n]$)
    \EndProcedure
    \State
    \Procedure{BatchSplit}{$U = \{ u_1, ..., u_k\}$}
    \ParallelFor{\algeach $u \in U$}
    \State \algwritemod{$L[0][R[0][u]]$}{\algnull}
    \State \algwritemod{$R[0][u]$}{\algnull}
    \EndParallelFor
    \State $M = \cup_{u \in U} (L[0][R[0][u]] \cup R[0][u])$
    \State \textsc{Propagate}($M, \emptyset, \emptyset$)
    \EndProcedure
    \State
    \Procedure{BatchJoin}{$U = \{ (u_1, v_1), ..., (u_k, v_k) \}$}
    \ParallelFor{\algeach $(u,v) \in U$}
    \State \algwritemod{$R[0][u]$}{$v$}
    \State \algwritemod{$L[0][v]$}{$u$}
    \EndParallelFor
    \State $M = \cup_{(u,v) \in U} (L[0][v] \cup R[0][v])$
    \State \textsc{Propagate}($M, \emptyset, \emptyset$)
    \EndProcedure
  \end{algorithmic}
\end{algorithm}

\subsection{Algorithm}

An implementation of the list contraction algorithm in our framework is shown in Algorithm~\ref{alg:list_contraction}. 

\begin{algorithm}[h]
  \caption{List contraction algorithm}
  \label{alg:list_contraction}
  \scriptsize
  \begin{algorithmic}[1]		
    \Procedure{ComputeRound}{$i, u$}
      \Read{$L[i][u], R[i][u]$}{$\ell$, $r$}
        \If{$r \neq\ \algnull$}
          \If{\textsc{Heads}($i, u$) $\land\ \lnot$ \textsc{Heads}($i, r$)} \algcomment{Splice out $u$}
            \State \algwritemod{$L[i+1][r]$}{$\ell$}
            \If{$\ell \neq\ \algnull$} \algwritemod{$R[i+1][\ell]$}{$r$}
            \EndIf
            \State \algwritemod{$D[u]$}{$i$}
            \State \algretire
          \Else\ \textproc{StayAlive}($i, u, \ell, r$) \algcomment{$u$ stays alive}
          \EndIf
        \ElsIf{$\ell =\ \algnull$}  \algcomment{Finalize $u$}
          \State \algwritemod{$D[u]$}{$i$}
          \State \algretire
        \Else\ 
          \textproc{StayAlive}($i, u, l, r$)  \algcomment{$u$ stays alive}
        \EndIf
      \EndRead
    \EndProcedure
    
    \State
    
    \Procedure{StayAlive}{$i, u, \ell, r$}
      \If{$r \neq\ \algnull$}
        \algwritemod{$L[i+1][r]$}{$u$}
      \Else\ 
        \algwritemod{$R[i+1][u]$}{\algnull}
      \EndIf
      \If{$\ell \neq\ \algnull$}
        \algwritemod{$R[i+1][\ell]$}{$u$}
      \Else\ 
        \algwritemod{$L[i+1][u]$}{\algnull}
      \EndIf
    \EndProcedure
  \end{algorithmic}
\end{algorithm}

\subsection{Analysis}

We analyse the initial work, round, complexity, span, and computation distance of the list contraction algorithm to obtain bounds for building and updating a parallel batch-dynamic sequence data structure..

\begin{theorem}\label{thm:lc_bounds}
  Given a sequence of length $n$, the initial work of list contraction is $O(n)$ in expectation, the round complexity and span are $O(\log(n))$ w.h.p., and the computation distance of the changes induced by $k$ modifications is $O(k \log(1+n/k))$ in expectation.
\end{theorem}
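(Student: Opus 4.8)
The plan is to treat the four quantities in two groups: the static measures (initial work, round complexity, span), which follow from the classical analysis of random-mate list contraction, and the computation distance, which is the novel part and mirrors the ``split the rounds at a threshold'' argument used for map-reduce in Appendix~\ref{sec:map-reduce}.

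For the static measures I would first observe that each invocation of \textsc{ComputeRound} performs $O(1)$ work, so the initial work is proportional to the total number of (node, round) pairs for which the node is alive, and the span is proportional to the round complexity. A live node $u$ with a right neighbour $r$ is spliced out whenever $\textsc{Heads}(i,u) \land \lnot\textsc{Heads}(i,r)$, an event of probability $1/4$ for fair independent coins; moreover the spliced-out nodes form an independent set (if $r$ flipped tails it does not itself splice out), so each live non-isolated node is removed with probability at least a constant in every round. This gives $\expct{\text{live nodes at round } r} = O(n \cdot c^{\,r})$ for some constant $c < 1$, and summing the geometric series yields $O(n)$ expected initial work. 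For the w.h.p.\ round bound I would invoke the standard concentration analysis of random-mate contraction, which shows every list contracts within $O(\log n)$ rounds w.h.p.; since each round costs $O(1)$ per node, the span is then $O(\log n)$ w.h.p.\ as well.

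For the computation distance I would view the dependence structure of the contraction as a randomized, skip-list-like tree of height $O(\log n)$: the computation for node $u$ at round $i$ reads $L[i][u]$ and $R[i][u]$, which were produced by the round-$(i-1)$ computations of $u$ and its neighbours. A batch of $k$ modifications changes a constant number of round-$0$ cells per modification, so $O(k)$ computations are affected at round $0$. I would then establish two facts: (a) the expected number of affected computations in any single round is $O(k)$, and (b) an affected computation at round $i$ is necessarily alive at round $i$. Given these, I split at a threshold $t = \Theta(\log(1 + n/k))$ chosen so that $\expct{\text{live}_t} = O(k)$. Summing (a) over the rounds $r < t$ contributes $O(k \log(1 + n/k))$, while for $r \geq t$ I bound the affected computations using (b) and the geometric decay from the static analysis, giving $\sum_{r \geq t} \expct{\text{live}_r} = O(\expct{\text{live}_t}) = O(k)$. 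Adding the two pieces yields the claimed $O(k\log(1+n/k))$ bound.

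The main obstacle is establishing fact (a). Each modification creates a ``frontier'' of affected nodes that can advance to a neighbour whenever a node's splice-out decision flips; but since a node stays alive regardless of its right neighbour whenever it flips tails, this propagation is absorbed with constant probability at each step. The delicate point is to argue that the interior of an already-propagated region stabilises, so that each modification contributes only $O(1)$ affected computations per round rather than a growing interval, and that the contributions of the $k$ modifications still sum to $O(k)$ even as their frontiers move and possibly merge. I expect this per-round bound, together with making the coin-flip dependencies rigorous in the concentration step, to be where the real work lies.
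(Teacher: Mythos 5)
Your overall skeleton matches the paper's: constant work per round computation, geometric decay of the live set giving $O(n)$ expected work and $O(\log n)$ rounds w.h.p.\ (the paper gets the w.h.p.\ bound simply by applying Markov's inequality to $\expct{|S^r|} \leq \beta^r n$ at $r = (c+1)\log_{1/\beta}(n)$, so no heavier concentration machinery is needed), and, for the computation distance, a split at threshold $\log_{1/\beta}(1+n/k)$ with a per-round $O(k)$ bound before the threshold and the live-set decay after it. The gap is that your fact (a) --- $\expct{|A^i|} = O(k)$ for every round $i$ --- is the entire novel content of the theorem, and the mechanism you sketch for it (``propagation is absorbed with constant probability at each step'') is not the argument that works. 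There is no absorption: the paper conservatively assumes that once a node becomes affected it stays affected until it retires, so each modification seeds a contiguous affected interval that grows deterministically at its ends; what keeps the set small is not that the frontier stops, but that already-affected nodes are spliced out at the usual rate.

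Concretely, the correct argument has two independent ingredients. First, a purely combinatorial spreading bound (Lemma~\ref{lem:lc-affected-spread}): a computation reads only cells owned by itself and its immediate neighbours, so affectedness spreads only across the two endpoints of each contiguous affected interval; with at most $k$ initially affected nodes this gives at most $2k$ \emph{newly} affected nodes per round, with no probabilistic reasoning at all. Second, affected non-tail nodes are spliced out with probability $1/4$ per round just like everyone else, giving the recurrence $\expct{|A^i_S|} \leq \frac{3}{4}\expct{|A^{i-1}_S|} + 2k$, whose solution is bounded by $8k$ (Lemma~\ref{lem:lc-num-affected}); tails must be handled separately since they are removed deterministically rather than by coin flips, but affectedness cannot cross between disjoint sequences, so at most $k$ tails are ever affected. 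Finally, your fact (b) needs the refinement that an affected node is alive in at least one of the two configurations $S$, $S'$, so both live sets must be counted and the per-round $O(k)$ bound must be proved for each of $A^i_S$ and $A^i_{S'}$. With these pieces in place your threshold split goes through exactly as you describe.
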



\noindent  For the analysis, we denote by $\ell^i_S(u)$, the left neighbour of $u$ at round $i$, and similarly by $r^i_S(u)$, the right neighbour of $u$ at round $i$. We denote the absence of a neighbour by $\algnull$. The sequence of nodes that are alive (have not been spliced out) at round $i$ is denoted by $S^i$, e.g.\ $S^0 = S$.

\subsubsection{Analysis of initial construction}

\begin{lemma}\label{lem:list_shrink}
  For any sequence $S$, there exists $\beta \in (0,1)$ such that $\expct{|S^i|} \leq \beta^i |S|$.
\end{lemma}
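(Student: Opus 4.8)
The plan is to establish a one-step conditional contraction bound, $\expct{|S^{i+1}| \mid S^i} \le \beta |S^i|$ for a fixed constant $\beta \in (0,1)$, and then iterate it. The reason the conditioning is clean is that the configuration $S^i$ is a function only of the coin flips $\textsc{Heads}(j, \cdot)$ for $j < i$, whereas the splices performed in round $i$ depend only on the fresh coins $\textsc{Heads}(i, \cdot)$. Hence, conditioned on any fixed configuration $S^i$, the round-$i$ coins are independent uniform bits, and the expected number of survivors can be computed directly.

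For the one-step bound I would partition $S^i$ into its maximal lists and analyze each by length using linearity of expectation. A node $u$ with a right neighbour $r$ is spliced out precisely when $\textsc{Heads}(i,u)$ and $\lnot\textsc{Heads}(i,r)$, an event of probability exactly $1/4$; a node with no neighbours finalizes. Thus a singleton list always loses its unique node, contributing zero expected survivors, while in a list of length $L \ge 2$ each of the $L-1$ non-last nodes is removed with probability $1/4$, giving $(L-1)/4$ expected removals and hence $(3L+1)/4$ expected survivors. I do not need the spliced nodes to form an independent set for this count; linearity of expectation suffices, so only the marginal probability $1/4$ matters.

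The key inequality is then $(3L+1)/4 \le (7/8)\,L$ for every $L \ge 2$ (tightest at $L = 2$), together with the fact that singletons contribute no survivors. Summing over all lists gives $\expct{|S^{i+1}| \mid S^i} \le (7/8)|S^i|$, so $\beta = 7/8$ works; any constant strictly below $1$ would suffice. Finally I would take total expectations and induct: by the tower property $\expct{|S^{i}|} = \expct{\,\expct{|S^i| \mid S^{i-1}}\,} \le \beta\,\expct{|S^{i-1}|}$, which unrolls to $\expct{|S^i|} \le \beta^i |S|$.

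The main obstacle is purely the boundary bookkeeping: the last node of each list cannot be spliced, and each singleton must be accounted for as a finalization rather than a splice. It is exactly these boundary nodes that prevent the naive per-node $3/4$ survival estimate from closing on its own, and the case $L = 2$ is the worst case that pins the constant at $7/8$. Getting the list-length case analysis right so that the contraction factor provably stays strictly below $1$ is the only real content.
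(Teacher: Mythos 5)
Your proposal is correct and takes essentially the same route as the paper: both arguments bound the expected number of splices per round by observing that every non-tail, non-isolated node is removed with probability exactly $1/4$, handle tails and singletons as the boundary cases, and arrive at the same contraction factor $\beta = 7/8$ before iterating. Your version is slightly more explicit about the conditioning on fresh round-$i$ coins and organizes the count per maximal list rather than per sequence of length $n\ge 2$, but the content is identical.
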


\begin{proof}
  Consider an node $u$ of $S^i$ at round $i$. If $u$ is isolated, i.e.\ $\ell^i(u) = r^i(u) = \algnull$, then $u$ is spliced out with probability $1$. Otherwise, if $u$ is a tail, i.e.\ $\ell^i(u) \neq \algnull$ and $r^i(u) = \algnull$, then $u$ is spliced out with probability $0$. In any other case, $u$ is spliced out if it flips heads and its right neighbour flips tails, which happens with probability $1/4$. Therefore in a sequence of $n \geq 2$ nodes, the expected number of nodes that splice out is
  \begin{equation}
  \frac{1}{4}(n-1) = \frac{1}{4}n - \frac{1}{4} = \frac{1}{8}n + \frac{1}{8}n - \frac{1}{4} \geq \frac{1}{8}n.
  \end{equation}
  Therefore, since the only node in a sequence of $1$ node is spliced out with probability $1$, we have
  \begin{equation}
  \expct{|S^{i+1}|} \leq |S^i| - \frac{1}{8}|S^i| = \frac{7}{8}|S^i|.
  \end{equation}
  By induction, we can conclude that
  \begin{equation}
  \expct{|S^i|} \leq \beta^i |S|,
  \end{equation}
  with $\beta = 7/8$.
\end{proof}

\begin{lemma}\label{lem:num-rounds-list}
  In a sequence beginning with $n$ nodes, after $O(\log n)$ rounds, there
  are no nodes remaining w.h.p.
\end{lemma}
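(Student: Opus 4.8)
The plan is to derive the high-probability emptiness bound directly from the expected shrinkage bound of Lemma~\ref{lem:list_shrink}, rather than attempting a fresh concentration argument. First I would recall that Lemma~\ref{lem:list_shrink} gives $\expct{|S^i|} \leq \beta^i n$ with $\beta = 7/8 \in (0,1)$, where the expectation is taken over all coin flips through round $i$. The crucial observation is that $|S^i|$ is a nonnegative, integer-valued random variable, so the event that any nodes remain after round $i$ is exactly the event $|S^i| \geq 1$.

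Next I would set the round index to $i = c \log n$ for a constant $c$ to be fixed, so that $\expct{|S^i|} \leq \beta^{c \log n} n = n^{1 + c \log \beta}$. Since $\beta < 1$ we have $\log \beta < 0$, and hence by choosing $c$ sufficiently large (depending only on the desired failure exponent) we can force $1 + c \log \beta \leq -d$ for any target constant $d > 0$, giving $\expct{|S^i|} \leq n^{-d}$.

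Finally, applying Markov's inequality to the nonnegative integer-valued variable $|S^i|$ yields $\probb{|S^i| \geq 1} \leq \expct{|S^i|} \leq n^{-d}$. Thus after $O(\log n)$ rounds no nodes remain with probability at least $1 - n^{-d}$, which is precisely the claimed w.h.p.\ statement; because $d$ can be made arbitrarily large at the cost of a larger constant $c$, this matches the usual high-probability convention.

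The main subtlety—rather than a genuine obstacle—is that the per-round splicing events are dependent across rounds, which would complicate any direct Chernoff-style analysis of the tail of $|S^i|$. The trick that sidesteps this is to push all of the inter-round dependence into the single clean expectation bound already established in Lemma~\ref{lem:list_shrink}, and then to exploit \emph{integrality}: for an integer-valued variable, Markov's inequality on $\expct{|S^i|}$ controls $\probb{|S^i| \geq 1}$ exactly, so a per-round expectation that decays polynomially in $n$ immediately upgrades to a w.h.p.\ emptiness guarantee. The only thing I would verify carefully is that $c$ can be chosen independently of $n$, which holds because $\log \beta$ is an absolute constant.
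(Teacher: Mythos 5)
Your proposal is correct and follows essentially the same route as the paper: both apply Lemma~\ref{lem:list_shrink} to get $\expct{\setsize{S^r}} \leq \beta^r n$ at a round $r = \Theta(\log n)$ chosen so this expectation is $n^{-c}$, and then use Markov's inequality on the nonnegative integer-valued $\setsize{S^r}$ to bound $\probb{\setsize{S^r} \geq 1}$. The only cosmetic difference is that the paper parameterizes the round directly as $r = (c+1)\log_{1/\beta}(n)$ rather than solving for the constant afterward.
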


\begin{proof}
  For any $c > 0$, consider round $r = (c + 1) \cdot \log_{1/\beta}(n)$. By Lemma~\ref{lem:list_shrink} and Markov's inequality, we have
  \begin{equation}
  \probb{\setsize{S^r} \geq 1} \leq \beta^r n = n^{-c}.
  \end{equation}
\end{proof}

\myparagraph{Proof of initial work, rounds, and span in Theorem~\ref{thm:lc_bounds}}

\begin{proof}
  At each round, the construction algorithm performs $O\left(\setsize{S^i}\right)$ work,
  and so the total cost is $O\left(\sum_i \expct{\setsize{S^i}}\right)$ in
  expectation. By Lemma \ref{lem:list_shrink}, this is $O(\setsize{S}) = O(n)$. The
  round complexity and span bounds follow from Lemma~\ref{lem:num-rounds-list}.
\end{proof}

\subsubsection{Analysis of dynamic updates}

\myparagraph{Affected nodes} Recall the notation of an \emph{affected computation}, that is, a computation $(i, u)$ that must be re-executed after a dynamic update either because a value that it read was modified, or because it retired at a different time. We call an node $u$ affected at round $i$ if the computation $(i,u)$ is affected. We make the simplifying assumption that a computation that becomes affected remains affected until it retires. This actually over counts the number of affected computations.

\myparagraph{Bounding the number of affected nodes} For change propagation to be efficient, we must show that the number of affected computations is small at each round. Intuitively, at each round, each affected node may affect its neighbours, which might suggest that the number of affected nodes grows geometrically. However, because an node must have become affected by one of its neighbours, that neighbour is already affected, and hence only at most two additional nodes can become affected per contiguous range of affected nodes, so the growth is only linear in the number of initially affected nodes. Then, since a constant fraction of nodes are spliced out in each round, the number of affected nodes should shrink geometrically, which should dominate the growth of the affected set. We say that an affected node \emph{spreads} to a neighbouring node in round $i$, if that neighbouring node is not affected in round $i$, but is affected in round $i + 1$.

When considering the spread of affected nodes, we must analyse separately the tails of each sequence, since tails are spliced out deterministically (they are spliced out when they are the last remaining node of their sequence), while all other nodes are spliced out randomly. Let $A^i$ denote the set of affected nodes at round $i$. Let $A^i_S$ and $A^i_{S'}$ denote the set of affected non tail nodes at round $i$ that are alive (have not been spliced out) in $S$ and $S'$ respectively.

\begin{lemma}\label{lem:lc-initially-affected}
  Consider a set of $k$ modifications to the input data, i.e.\ $k$ changes to $L$ or $R$ at round $0$. Then $|A^0| \leq k$.
\end{lemma}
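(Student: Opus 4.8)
The plan is to show that $A^0$ is exactly the set of nodes whose round-$0$ reads changed, and that each of the $k$ modifications can make at most one such node affected. First I would observe that a modification to $L$ or $R$ at round $0$ only alters the contents of shared memory and does not add or remove any initial process, so the initial process set $[n]$ is identical in both configurations. Consequently, every node runs in round $0$ in both executions, and clause (1) of the definition of an affected computation contributes nothing at round $0$: a node $u$ is affected at round $0$ precisely when it reads, during round $0$, some shared location whose value differs between the two input configurations.

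Next I would pin down exactly which locations node $u$ reads in round $0$. Inspecting \textsc{ComputeRound} in Algorithm~\ref{alg:list_contraction}, the only shared reads performed by process $u$ in round $0$ are $L[0][u]$ and $R[0][u]$ (the tests inside the branches reuse the already-bound values $\ell$ and $r$, and introduce no new round-$0$ reads of differing locations). Hence $u \in A^0$ if and only if at least one of $L[0][u]$ or $R[0][u]$ was modified; that is, $A^0 = \{\, u : L[0][u] \text{ or } R[0][u] \text{ differs between the two configurations} \,\}$.

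Finally I would conclude with a straightforward injection/counting argument. Each of the $k$ modifications targets a single location of the form $L[0][u]$ or $R[0][u]$, and every such location belongs to exactly one node $u$, so the location sets $\{L[0][u], R[0][u]\}$ are pairwise disjoint across distinct nodes. For each $u \in A^0$, select one modified location among $\{L[0][u], R[0][u]\}$; this assigns to every affected node a distinct modification, yielding an injection from $A^0$ into the set of $k$ modifications, and therefore $|A^0| \le k$. There is essentially no obstacle here beyond the locality observation; the only point requiring care is confirming that the modifications do not change the set of processes active in round $0$, which is exactly what lets us discard clause (1) of the affected-computation definition and reduce the claim to counting changed round-$0$ reads.
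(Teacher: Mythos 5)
Your proof is correct and rests on the same observation as the paper's: $L[0][u]$ and $R[0][u]$ are read only by process $u$ in round $0$, so each of the $k$ modified locations can affect at most one node. The paper states this in two sentences; your version merely adds the (valid) bookkeeping that the initial process set is unchanged, so only the changed-read clause of the affected-computation definition is in play.
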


\begin{proof}
  The values of $L$ and $R$ are only read by the node that owns them. Hence there are at most $k$ affected nodes at round $0$.
\end{proof}

\begin{lemma}\label{lem:lc-affected-spread}
  Under a set of $k$ modifications to the input data, at most $2k$ new nodes become affected each round.
\end{lemma}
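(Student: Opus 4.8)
The plan is to exploit the fact that affected nodes form contiguous ranges in the list, and that the number of such ranges can never increase beyond the initial count $k$.

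First I would establish a \emph{locality of spreading} property: a node $u$ that is newly affected at round $i+1$ (affected at round $i+1$ but not at round $i$) must be adjacent, in the round-$i$ list of one of the two configurations, to a node that is already affected at round $i$. This follows by inspecting Algorithm~\ref{alg:list_contraction}: the only shared-memory locations read by \textsc{ComputeRound}$(i+1,u)$ are $L[i+1][u]$ and $R[i+1][u]$, and during round $i$ these are written only by $u$ itself or by $u$'s immediate neighbours (a node writes the left/right pointer of its neighbour when it stays alive or splices out). If neither $u$ nor any neighbour of $u$ was affected at round $i$, then these values are identical in the two configurations and $u$ retires at the same time, so $u$ cannot be newly affected. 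Hence every new affected node sits immediately adjacent to an old one.

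Next I would argue that, within each configuration's list, the affected nodes occupy a set of maximal contiguous ranges, and that the number of ranges is non-increasing across rounds. The key points are: (i) by the simplifying assumption that an affected node remains affected until it retires, a range can never be broken from the inside by a node ``becoming unaffected''; and (ii) when an interior node of a range splices out, its two list-neighbours become adjacent in the next round, so the range stays contiguous (merely shrinking). Ranges may \emph{merge}---either because the nodes separating two ranges all splice out, or because two ranges spread toward one another---but no mechanism creates a new range. Combined with Lemma~\ref{lem:lc-initially-affected}, which gives at most $k$ affected nodes and hence at most $k$ ranges at round $0$, this shows the number of ranges is at most $k$ at every round.

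Finally, by the locality property, each range can acquire at most one new affected node at each of its two ends per round: the node just past an endpoint may become affected, but that node's other neighbour was not affected at round $i$, so spreading cannot jump two nodes in a single round. Thus each of the $\le k$ ranges contributes at most $2$ new affected nodes, yielding at most $2k$ new affected nodes per round. The main obstacle will be making the ``contiguous range'' bookkeeping rigorous in the presence of two distinct lists (those of $S$ and $S'$, which differ precisely because of the update) and of tails, which splice out deterministically rather than randomly and are therefore tracked separately via $A^i_S$ and $A^i_{S'}$. I would handle this by running the range argument inside each configuration's list independently and summing the two contributions, and by verifying that the merge-only, never-split invariant persists even when a range endpoint is a tail; care is also needed to confirm that a merge does not transiently increase the range count at the exact round when two ranges first touch, so that the bound of $k$ ranges is preserved throughout.
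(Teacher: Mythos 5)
Your proposal is correct and is essentially the argument the paper gives: the paper's proof also rests on locality of spreading (computations only read/write their own and their neighbours' pointers) plus the observation that each of the at most $k$ initially affected nodes seeds a region that can grow by at most one node at each end per round. Your version simply makes explicit the contiguous-range bookkeeping (merge-only, never-split; tails; the two lists) that the paper leaves implicit.
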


\begin{proof}
  Since computations only read/write their own values and those corresponding to their neighbours, affectation can only spread to neighbouring nodes. Each initially affected node can therefore spread to its neighbours, and its neighbours to their other neighbour and so on. By Lemma~\ref{lem:lc-initially-affected} there are at most $k$ initially affected nodes, hence at most $2k$ new nodes become affected each round.
\end{proof}

\begin{lemma}\label{lem:lc-affected-tails}
  Under a set of $k$ modifications to the input data, the number of affected tail nodes at any point is at most $k$.
\end{lemma}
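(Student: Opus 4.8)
The plan is to injectively charge each affected tail node to one of the $k$ input modifications. The foundation is the deterministic behaviour of tails: inspecting Algorithm~\ref{alg:list_contraction}, a node with $r = \algnull$ and $\ell \neq \algnull$ always takes the \textsc{StayAlive} branch, so a tail never splices out and instead persists as the rightmost node of its list until it becomes isolated and finalizes. Consequently a node is a tail at some round in $S$ if and only if $R[0][\cdot] = \algnull$ in $S$, and likewise for $S'$; in particular the tail of each list is a fixed node, and each list contributes exactly one tail at any round.

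First I would split the affected tails into two groups. A tail whose tail-status differs between the two executions must have had its own right pointer $R[0][\cdot]$ changed, since round-$0$ tail-status is exactly the predicate $R[0][\cdot] = \algnull$ and this status is preserved across rounds; I charge each such tail to its own modified entry of $R$, and these charges are distinct because they live at distinct nodes. For a tail $t$ that is genuinely a tail in both $S$ and $S'$ yet is affected, I trace its affectedness backwards: by Lemma~\ref{lem:lc-initially-affected} and the neighbour-only spreading used in Lemma~\ref{lem:lc-affected-spread}, affectedness originates solely at the $k$ initially modified nodes and propagates only between list neighbours, so some modification lying in the list-suffix to the left of $t$ is responsible for affecting it. Because, within a single execution, the lists partition the node set, the suffixes ending at distinct such tails are disjoint, so distinct tails of this second group receive distinct modifications.

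The hard part is arguing that the two families of charges are \emph{jointly} injective, i.e.\ that no single input modification is charged by two different affected tails. The subtlety is that affectedness is defined relative to $S$ and $S'$ simultaneously, and the two executions induce different list structures, so the suffix used to charge a tail in one execution need not match its role in the other. I would resolve this using the fact that a modification charged by a tail-status-change tail is that tail's own $R$-entry (hence never lies in another tail's suffix), together with the determinism of tails and the disjointness of suffixes within each execution, to rule out double counting and conclude that at any round at most $k$ tails are affected. A cleaner alternative, matching the contiguous-range picture developed just before Lemma~\ref{lem:lc-affected-spread}, is to observe that the affected nodes occupy at most $k$ maximal contiguous blocks in list order: there are at most $k$ at round $0$ by Lemma~\ref{lem:lc-initially-affected}, and blocks can only merge, never split, since splicing out an interior node makes its affected neighbours adjacent while new affectedness only extends existing blocks. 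Since each block lies in one list and only its rightmost element can be that list's tail, each block contributes at most one affected tail, again yielding the bound $k$.
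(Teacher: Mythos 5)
Your proof is correct and rests on the same core observation as the paper's: there are at most $k$ initially affected nodes (Lemma~\ref{lem:lc-initially-affected}), affectedness spreads only between list neighbours and hence stays within a single sequence, and each sequence has exactly one tail. The paper's proof is a three-line version of exactly this; your additional machinery (the round-$0$ invariance of tail status, the explicit charging scheme, and the contiguous-blocks variant) makes explicit---though does not fully resolve---the $S$-versus-$S'$ ambiguity that the paper silently glosses over, and which in the worst case would only change the constant from $k$ to $2k$ without affecting the $O(k\log(1+n/k))$ bound downstream.
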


\begin{proof}
  Since computations only read/write their own values and those corresponding to their neighbours, affectation can only spread to nodes in the same connected sequence, and since each sequence has one tail, by Lemma~\ref{lem:lc-initially-affected}, at most $k$ tails can become affected.
\end{proof}

\begin{lemma}\label{lem:lc-num-affected}
  Under a set of $k$ modifications to the input data,
  \begin{equation}
  \expct{|A^i_S|} \leq 8k,
  \end{equation}
  and similarly for $A^i_{S'}$.
\end{lemma}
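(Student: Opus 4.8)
The plan is to set up a one-step recurrence for $a_i := \expct{|A^i_S|}$ that trades a geometric shrinkage against a bounded additive growth, and then solve it. Informally, each affected non-tail node alive in $S$ is spliced out with constant probability as the round advances, while Lemma~\ref{lem:lc-affected-spread} caps the number of freshly affected nodes entering the set; balancing these two effects pins the expected size at its fixed point.

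First I would establish the shrinkage factor. Every non-tail node $u$ alive at round $i$ is either isolated---in which case it finalizes and is removed with probability $1$---or has a right neighbour, in which case it is spliced out exactly when $\textsc{Heads}(i,u) \land \lnot\textsc{Heads}(i, r^i_S(u))$, i.e.\ with probability $1/4$ (cf.\ the case analysis in the proof of Lemma~\ref{lem:list_shrink}). Hence every node of $A^i_S$ survives to round $i+1$ with probability at most $3/4$. The delicate point is that $A^i_S$ is itself a random set, so I cannot naively multiply probabilities; instead I would reveal the coins round by round and condition on the $\sigma$-algebra $\mathcal{F}_i$ generated by all coin flips of rounds $0,\ldots,i-1$. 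The set $A^i_S$ is $\mathcal{F}_i$-measurable---whether a node is alive, a tail, or affected at round $i$ is determined by the state entering round $i$, hence by coins of earlier rounds together with the fixed modifications---whereas each node's splice event at round $i$ depends only on the independent round-$i$ coins. By the tower property and linearity, $\expct{|\{u \in A^i_S : u \text{ survives}\}|} \le \tfrac34 \expct{|A^i_S|}$.

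Next I would account for the growth. A non-tail node stays a non-tail for its whole life (splicing never grants a node a new right neighbour, and the rightmost node of a list is never spliced out until it is the last one), so no tail ever migrates into $A^i_S$, and the only way $A^{i+1}_S$ can gain a node not inherited from $A^i_S$ is for that node to become newly affected. Using the simplifying assumption that a node stays affected once affected, we get $A^{i+1}_S \subseteq \{u \in A^i_S : u \text{ survives}\} \cup \{\text{newly affected non-tails}\}$, and Lemma~\ref{lem:lc-affected-spread} bounds the second set by $2k$, yielding the recurrence $a_{i+1} \le \tfrac34 a_i + 2k$. With the base case $a_0 = \expct{|A^0_S|} \le |A^0| \le k$ from Lemma~\ref{lem:lc-initially-affected}, a one-line induction gives $a_i \le 8k$: indeed $a_i \le 8k$ implies $a_{i+1} \le \tfrac34(8k) + 2k = 8k$, and $8k$ is exactly the fixed point of $a = \tfrac34 a + 2k$. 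The identical argument with the roles of $S$ and $S'$ exchanged bounds $\expct{|A^i_{S'}|}$.

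The step I expect to be the main obstacle is the conditioning argument justifying the $3/4$ factor: one must argue carefully that affectedness at round $i$ is measurable with respect to the earlier coins, so that it is independent of the round-$i$ splice events, and that the separate (deterministic) behaviour of tails---excluded here and handled instead by Lemma~\ref{lem:lc-affected-tails}---does not leak into $A^i_S$. Once the recurrence is in place the remaining algebra is routine.
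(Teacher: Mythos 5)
Your proposal is correct and follows essentially the same route as the paper: the recurrence $\expct{|A^{i}_S|} \leq \tfrac{3}{4}\expct{|A^{i-1}_S|} + 2k$ from the $1/4$ splice-out probability and Lemma~\ref{lem:lc-affected-spread}, with base case from Lemma~\ref{lem:lc-initially-affected}, solved to the fixed point $8k$. The only difference is that you spell out the measurability/conditioning justification for the $3/4$ factor and the fact that non-tails never become tails, details the paper's proof leaves implicit; both are correct.
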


\begin{proof}
  By Lemma~\ref{lem:lc-initially-affected}, $|A^0_S| \leq k$. Non-tail nodes are spliced out whenever they flip heads and their right neighbour flips tail, and hence they are spliced out with probability $1/4$. By Lemma~\ref{lem:lc-affected-spread}, at most $2k$ new nodes become affected in each round, and hence we can write
  \begin{equation}
  \expct{|A^i_S|} \leq \frac{3}{4}\expct{|A^{i-1}_S|} + 2k.
  \end{equation}
  Solving this recurrence, we obtain the bound
  \begin{equation}
  \expct{|A^i_S|} \leq k\left(\frac{3}{4}\right)^i + \sum_{j=0}^{i-1} 2k \left(\frac{3}{4}\right)^j \leq 8k.
  \end{equation}
  The same argument shows that $\expct{|A^i_{S'}|} \leq 8k$.
\end{proof}

\begin{lemma}\label{lem:lc-affected-computations}
  Under a set of $k$ modifications to the input data, $\expct{|A^i|} \leq 17k$
\end{lemma}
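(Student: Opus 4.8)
The plan is to decompose the set of affected nodes $A^i$ at round $i$ into three categories and bound each separately using the lemmas already established, since this statement is essentially a bookkeeping combination of the preceding results. The key structural observation is that every affected node must be alive at round $i$ in at least one of the two configurations $S$ and $S'$; otherwise its computation would not run in either execution and it could not be affected. Each such node is either a tail of its sequence or a non-tail node, so I would split $A^i$ along these two criteria.

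First I would handle the non-tail nodes. A non-tail affected node alive at round $i$ in $S$ belongs to $A^i_S$, and one alive in $S'$ belongs to $A^i_{S'}$; since every non-tail affected node is alive in $S$ or in $S'$, the non-tail portion of $A^i$ is covered by $A^i_S \cup A^i_{S'}$, giving $|A^i_S| + |A^i_{S'}|$ as an upper bound (with harmless double counting for nodes alive in both). I would then invoke Lemma~\ref{lem:lc-num-affected} to bound each of these in expectation by $8k$. For the tail nodes I would apply Lemma~\ref{lem:lc-affected-tails}, which bounds the number of affected tails by $k$ deterministically.

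Combining the two, every affected node is counted in $A^i_S$, in $A^i_{S'}$, or among the at most $k$ affected tails, so $|A^i| \le |A^i_S| + |A^i_{S'}| + k$. Taking expectations and substituting the bounds yields $\expct{|A^i|} \le 8k + 8k + k = 17k$, as desired.

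The step most deserving of care is the decomposition itself rather than any calculation: I must verify that every affected node falls into at least one of the three buckets, which requires noting that a node whose tail/non-tail status or whose aliveness differs between $S$ and $S'$ is still captured by one of the categories (it suffices that it is counted at least once). Because the claim is an inequality, any overlap among the three categories only strengthens the upper bound, so no inclusion--exclusion correction is needed and the remaining arithmetic is immediate.
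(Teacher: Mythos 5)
Your proposal is correct and follows essentially the same route as the paper, which likewise derives the bound from $|A^i| \le |A^i_S| + |A^i_{S'}| + k$ together with Lemmas~\ref{lem:lc-num-affected} and~\ref{lem:lc-affected-tails}; your version simply spells out the decomposition argument that the paper leaves implicit.
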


\begin{proof}
  Follows from Lemma~\ref{lem:lc-affected-tails}, Lemma~\ref{lem:lc-num-affected}, and the fact that $|A^i| \leq |A^i_S| + |A^i_{S'}| + k$.
\end{proof}

\myparagraph{Proof of computation distance in Theorem~\ref{thm:lc_bounds}}

\begin{proof}
  Consider round $r = \log_{1/\beta}(1 + n/k)$, and split the rounds into two groups, those before $r$, and those after $r$. Consider the rounds before $r$. By Lemma~\ref{lem:lc-affected-computations}, there are $O(k)$ affected computations, and since each computation takes $O(1)$ time, the computation distance is
  \begin{equation}
  O(rk) = O\left(k\log\left(1 + \frac{n}{k}\right)\right),
  \end{equation}
  in expectation. For the rounds after $r$, we assume that all computations are affected and apply Lemma~\ref{lem:list_shrink} to deduce that the computation distance is at most
  \begin{equation}
  \sum_{i \geq r} \beta^i |S| = O\left( \frac{n}{1 + n/k} \right) = O(k),
  \end{equation}
  in expectation. Combining these, we find that the total computation distance is $O(k \log(1 + n/k))$ in expectation. 
\end{proof}

  \section{Interface for Dynamic Trees}\label{appendix:tree-ops}

Formally, batch-dynamic trees support the following operations:
\begin{itemize}[leftmargin=12pt]
  \item \tcbatchinsert$(\set{(u_1, v_1), \ldots, (u_k, v_k)})$ takes a batch of edges and adds them to $F$.  The edges must not create a cycle.

  \item \tcbatchdelete$(\set{(u_1, v_1), \ldots, (u_k, v_k)})$ takes a batch of edges and removes them from
    the forest $F$.
\end{itemize}
It is trivial for us to also support adding and deleting vertices from the forest. Optionally, we can also support queries, such as connectivity queries:
\begin{itemize}[leftmargin=12pt]
  \item \tcbatchisconnected$(\set{\set{u_1,v_1}, \ldots, \set{u_k, v_k}})$ takes an array of tuples representing
    queries. The output is an array where the $i$-th entry returns
    whether vertices $u_i$ and $v_i$ are connected by a path in $F$.
\end{itemize}

\noindent An implementation of the high-level interface for updates in terms of the contraction data structure is depicted in Algorithm~\ref{alg:forest_contraction_meta}.

\begin{algorithm}[h]
  \caption{Dynamic tree operations}
  \label{alg:forest_contraction_meta}
  \scriptsize
  \begin{algorithmic}[1]
    \Procedure{Build}{$V,E$}
    \ParallelFor{\algeach vertex $v \in V$}
    \State \algwritemod{$A[0][v]$}{$\{ u : (u, v) \in E \}$}
    \State \algwritemod{leaf$[0][v]$}{($|A[0][v]| = 1$)}
    \EndParallelFor
    \State \textsc{Run}($|V|$)
    \EndProcedure
    \State
    \Procedure{BatchLink}{$E^+ = \{ (u_1,v_1), ... (u_k, v_k) \}$}
    \State \alglocal $U \algassign \cup_{(u,v) \in E^+} \{ u, v \}$
    \ParallelFor{\algeach vertex $u \in U$}
    \State \algwritemod{$A[0][u]$}{$A[0][u] \cup \{ v : (u,v) \in E^+ \}$}
    \State \algwritemod{leaf$[0][u]$}{($|A[0][u]| = 1$)}
    \EndParallelFor
    \State \alglocal $M = \cup_{u \in U} A[0][u] \cup \{ \text{leaf}[0][u]\ |\ \text{leaf}[0][u] \text{changed} \}$
    \State \textsc{Propagate}($M, \emptyset, \emptyset$)
    \EndProcedure
    \State
    \Procedure{BatchCut}{$E^- = \{ (u_1,v_1), ... (u_k, v_k) \}$}
    \State \alglocal $U \algassign \cup_{(u,v) \in E^-} \{ u, v \}$
    \ParallelFor{\algeach vertex $u \in U$}
    \State \algwritemod{$A[0][v]$}{$A[0][v] \setminus \{ u : (u,v) \in E^- \}$}
    \State \algwritemod{leaf$[0][u]$}{($|A[0][u]| = 1$)}
    \EndParallelFor
    \State \alglocal $M = \cup_{u \in U} A[0][u] \cup \{ \text{leaf}[0][u]\ |\ \text{leaf}[0][u] \text{changed} \}$
    \State \textsc{Propagate}($M, \emptyset, \emptyset$)
    \EndProcedure
  \end{algorithmic}
\end{algorithm}

  \section{Analysis of Dynamized Tree Contraction}\label{appendix:tree-contraction-proofs}

Let $F = (V,E)$ be the set of initial vertices and edges of the input tree,
and denote by $F^i = (V^i, E^i)$, the set of remaining (alive) vertices
and edges at round $i$. We use the term ``at round $i$'' to denote the
beginning of round $i$, and ``in round $i$'' to denote an event that
occurs during round $i$.

For some vertex $v$ at round $i$,
we denote the set of its adjacent vertices by $A^i(v)$,
and its degree with $\delta^i(v) = \left|A^i(v)\right|$.
A vertex is \emph{isolated} at round $i$ if $\delta^i(v) = 0$.
When multiple forests are in play, it will be necessary to disambiguate which
is in focus. For this, we will use subscripts: for example, $\delta^i_F(v)$
is the degree of $v$ in the forest $F^i$, and $E^i_F$ is the set of edges in the
forest $F^i$.

\subsection{Analysis of construction}

We first show that the static tree contraction algorithm is efficient.

\begin{lemma}\label{lem:forest-shrink}
  For any forest $(V,E)$, there exists $\beta \in (0,1)$ such that
  $\expct{\setsize{V^i}} \leq \beta^i \setsize{V}$,
  where $V^i$ is the set of vertices remaining after $i$ rounds of contraction.
\end{lemma}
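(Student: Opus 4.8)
The plan is to mirror the structure of the list-contraction bound (Lemma~\ref{lem:list_shrink}): it suffices to exhibit a constant $\beta \in (0,1)$ with $\expct{\setsize{V^{i+1}} \mid F^i} \leq \beta \setsize{V^i}$ for \emph{every} realization of the forest $F^i$ at round $i$; taking expectations over $F^i$ and inducting on $i$ then yields $\expct{\setsize{V^i}} \leq \beta^i \setsize{V}$. So I would fix a round and an arbitrary alive forest on $m$ vertices and lower bound by $\Omega(m)$ the expected number of vertices it deletes in that round. Since each vertex's fate depends only on its own connected component, I would argue component by component: isolated vertices are singleton components and finalize with probability $1$, so they are deleted deterministically, while for tree components of size $\geq 2$ I would use a structural lemma.

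The structural lemma says that in a tree on $m' \geq 2$ vertices, writing $L$ for the number of leaves, $I_2$ for the number of degree-$2$ vertices, and $I_{\geq 3}$ for the rest, we have $L + I_2 \geq m'/2$. I would prove this from the handshake identity $\sum_v \delta(v) = 2(m'-1)$, which rearranges to $\sum_{\delta(v) \geq 3} \delta(v) = L + 2 I_{\geq 3} - 2$; since every such degree is at least $3$, this forces $I_{\geq 3} \leq L - 2$, and hence $m' = L + I_2 + I_{\geq 3} \leq 2L + I_2$, giving the claim.

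Next I would convert this abundance of low-degree vertices into expected deletions. Leaves rake deterministically: every leaf with a non-leaf neighbour rakes, and each adjacent leaf pair (which can occur only as a two-vertex component) contributes one rake, so the number of rakes is at least $L/2$. For degree-$2$ vertices I would split them into $I_2^{\text{leaf}}$ (those adjacent to a leaf) and $I_2^{\text{free}}$ (those not), so $I_2 = I_2^{\text{leaf}} + I_2^{\text{free}}$. A vertex counted in $I_2^{\text{free}}$ has both neighbours of degree $\geq 2$, so by the compression rule it is deleted exactly when it flips heads and both neighbours flip tails; by linearity of expectation (no independence across vertices is needed) the expected number of compressions is at least $I_2^{\text{free}}/8$. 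Finally, since a leaf has a unique neighbour, assigning to each vertex of $I_2^{\text{leaf}}$ an adjacent leaf is injective, so $I_2^{\text{leaf}} \leq L$ and therefore $I_2^{\text{free}} \geq I_2 - L$.

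Combining, the expected number of deletions in the component is at least $L/2 + I_2^{\text{free}}/8$, and a short case analysis finishes it: if $I_2 \leq L$ then $L + I_2 \leq 2L$ gives $L \geq m'/4$ and $L/2 \geq m'/8$; if $I_2 > L$ then $L/2 + (I_2 - L)/8 = 3L/8 + I_2/8 \geq (L+I_2)/8 \geq m'/16$. Hence at least $m'/16$ in either case, and summing the deterministic finalizations over singleton components with these per-tree bounds gives an expected deletion count of at least $m/16$ over the whole forest, so $\beta = 15/16$ works. I expect the main obstacle to be exactly this last pair of steps: degree-$2$ vertices adjacent to a leaf cannot compress, so the genuinely useful degree-$2$ vertices must be isolated and the deficit $I_2^{\text{leaf}}$ charged against the leaf count, after which the leaf and compression contributions must be balanced in a single case analysis to recover a clean constant fraction.
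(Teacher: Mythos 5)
Your proof is correct and follows essentially the same route as the paper's: partition the vertices of each tree component into leaves, degree-$2$ vertices with and without leaf neighbours, and degree-$\geq 3$ vertices, show leaves rake at rate at least $1/2$ and leaf-free degree-$2$ vertices compress with probability $1/8$, and charge the non-deletable classes against the leaf count to get a constant fraction deleted per round. The only differences are cosmetic: you make the ``$|H| < |L|$'' bound explicit via the handshake identity where the paper cites it as standard, and your final case analysis yields $\beta = 15/16$ rather than the paper's $7/8$, which is immaterial since the lemma only asks for some $\beta \in (0,1)$.
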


\begin{proof}
  We begin by considering trees, and then extend the argument to forests. Given a tree $(V,E)$, consider the set $V'$ of vertices after one round of contraction. We would like to show there exists $\beta \in (0,1)$ such that $\expct{\setsize{V'}} \leq \beta \setsize{V}$. If $|V| = 1$, then this is trivial since the vertex finalizes (it is deleted with probability $1$). For $|V| \geq 2$, Consider the following sets, which partition the vertex set:
  \begin{enumerate}[]
    \item $H = \{ v : \delta(v) \geq 3 \}$
    \item $L = \{ v : \delta(v) = 1 \}$
    \item $C = \{ v : \delta(v) = 2 \land \forall u \in A(v), u \notin L \}$
    \item $C' = \{ v : \delta(v) = 2 \} \setminus C$
  \end{enumerate}
  
  \noindent Note that at least half of the vertices in $L$ must be deleted, since all leaves are deleted, except those that are adjacent to another leaf, in which case exactly one of the two is deleted. Also, in expectation, $1/8$ of the vertices in $C$ are deleted. Vertices in $H$ and $C'$ necessarily do not get deleted.
  
  Now, observe that $|C'| \leq |L|$, since each vertex in $C'$ is adjacent to a distinct leaf. Finally, we also have $|H| < |L|$, which follows from standard arguments about compact trees. Therefore in expectation,
  \begin{equation}
  \frac{1}{2}|L| + \frac{1}{8} |C| \geq \frac{1}{4} |L| + \frac{1}{8} |H| + \frac{1}{8} |C'| + \frac{1}{8} |C| \geq \frac{1}{8} |V|
  \end{equation}
  vertices are deleted, and hence
  \begin{equation}
  \expct{\setsize{V'}} \leq \frac{7}{8} \expct{\setsize{V}}.
  \end{equation}
  
  Equivalently, for $\beta = \frac{7}{8}$, for every $i$, we have
  $\expct{\setsize{V^{i+1}}} \leq \beta \setsize{V^i}$,
  where $V^i$ is the set of vertices after $i$ rounds of contraction. Therefore
  $\expct{\setsize{V^{i+1}}} \leq \beta \expct{\setsize{V^i}}$.
  Expanding this recurrence, we have $\expct{\setsize{V^i}} \leq \beta^i \setsize{V}$.
  
  To extend the proof to forests, simply partition the forest into its
  constituent trees and apply the same argument to each tree individually.
  Due to linearity of expectation, summing over all trees yields the desired
  bounds.
\end{proof}

\begin{lemma}\label{lem:num-rounds}
  On a forest of $n$ vertices, after $O(\log n)$ rounds of contraction, there
  are no vertices remaining w.h.p.
\end{lemma}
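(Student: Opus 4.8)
The plan is to mirror exactly the list-contraction argument used in Lemma~\ref{lem:num-rounds-list}, since the only ingredient that argument needed was a geometric decay bound on the expected number of surviving elements, and Lemma~\ref{lem:forest-shrink} supplies precisely that for forests with $\beta = 7/8$. So this lemma should follow immediately from Lemma~\ref{lem:forest-shrink} together with Markov's inequality; I do not expect any genuinely new work here beyond plugging in the forest shrink bound.

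Concretely, first I would fix a constant $c > 0$ and set the target round $r = (c+1)\log_{1/\beta}(n)$, which is $O(\log n)$ since $\beta$ is a fixed constant in $(0,1)$. The event ``no vertices remaining after $r$ rounds'' is exactly $\setsize{V^r} = 0$, and since $\setsize{V^r}$ is a nonnegative integer random variable, its complement is $\setsize{V^r} \geq 1$. I would then apply Markov's inequality at threshold $1$, using the expectation bound from Lemma~\ref{lem:forest-shrink}:
\begin{equation}
\probb{\setsize{V^r} \geq 1} \leq \expct{\setsize{V^r}} \leq \beta^r n = n^{-c}.
\end{equation}
This gives that the forest is fully contracted after $O(\log n)$ rounds with probability at least $1 - n^{-c}$, which is the desired w.h.p.\ statement for an arbitrary constant $c$.

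The one point worth flagging is that Lemma~\ref{lem:forest-shrink} only controls the \emph{expectation} of $\setsize{V^i}$, not its concentration, so I would deliberately route the argument through Markov on the integer count rather than attempting a Chernoff-style bound on per-round survival. This is exactly why the threshold-$1$ trick is the right move: for an integer-valued random variable, bounding $\probb{\setsize{V^r} \geq 1}$ by its mean is tight enough to drive the failure probability down polynomially once the mean is pushed below $n^{-c}$, and no stronger tail information about the contraction process is required. Hence I expect the ``hard part'' to be essentially nonexistent; the real content was already packaged into the shrink lemma, and this statement is the routine w.h.p.\ corollary.
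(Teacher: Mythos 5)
Your proof is correct and is essentially identical to the paper's own argument: the paper likewise sets $r = (c+1)\log_{1/\beta}(n)$ and applies Markov's inequality at threshold $1$ to the expectation bound from Lemma~\ref{lem:forest-shrink}. No differences worth noting.
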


\begin{proof}
  For any $c > 0$, consider round $r = {(c+1)\cdot\log_{1/\beta}(n)}$. By Lemma
  \ref{lem:forest-shrink} and Markov's inequality, we have
  \begin{equation}
  \probb{\setsize{V^r} \geq 1} \leq \beta^r n = n^{-c}.
  \end{equation}
\end{proof}

\myparagraph{Proof of initial work, rounds, and span in Theorem~\ref{thm:tc-costs}}

\begin{proof}
  At each round, the construction algorithm performs $O\left(\setsize{V^i}\right)$ work,
  and so the total work is $O\left(\sum_i \expct{\setsize{V^i}}\right)$ in
  expectation. By Lemma \ref{lem:forest-shrink}, this is $O(\setsize{V}) = O(n)$.
  The round complexity and the span follow from Lemma~\ref{lem:num-rounds}.
\end{proof}

\subsection{Analysis of dynamic updates}

Intuitively, tree contraction is efficiently dynamizable due to the observation
that, when a vertex locally makes a choice about whether or not to delete,
it only needs to know who its neighbors are, and whether or not its
neighbours are leaves. This motivates the definition of the \emph{configuration} of a vertex $v$ at
round $i$, denoted $\kappa^i_F(v)$, defined as
\[ \kappa^i_F(v) = \begin{cases}
(\{ (u, \ell^i_F(u)) : u \in A^i_F(v) \}), & \text{if $v \in V^i_F$} \\
\text{dead}, &\text{if $v \not\in V^i_F$},
\end{cases}
\]
where $\ell^i_F(u)$ indicates whether $\delta^i_F(u) = 1$ (the
\emph{leaf status} of $u$).

Consider some input forest $F=(V,E)$, and let
$F' = (V, (E \setminus E^-) \cup E^+)$ be the newly desired input after a 
batch-cut with edges $E^-$ and/or a batch-link with edges $E^+$.
We say that a vertex $v$ is \emph{affected} at round $i$ if
$\kappa^i_F(v) \neq \kappa^i_{F'}(v)$.

\begin{lemma}\label{lem:affected_computations_are_affected_vertices}
  The execution in the tree contraction algorithm of process $p$ at round $r$ is an affected computation if and only if $p$ is an affected vertex at round $r$.
\end{lemma}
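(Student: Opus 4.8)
The plan is to prove the biconditional by showing that the reads performed by process $p$ during \textsc{ComputeRound}$(r,p)$ encode exactly the information contained in the configuration $\kappa^r_F(p)$, so that the clause ``reads a differing value'' becomes equivalent to ``has a differing configuration,'' while the clause ``runs on one input but not the other'' corresponds exactly to being alive in one configuration and dead in the other.

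First I would inspect Algorithm~\ref{alg:forest_contraction-process-inverted} to pin down precisely which shared locations $p$ reads in each branch. In every branch in which $p$ is alive, $p$ unconditionally reads $A[r][p]$ and leaf$[r][p]$, and then reads leaf$[r][v]$ for its neighbors: in the leaf branch it reads the single neighbor's leaf status, in the degree-two branch it reads both neighbors' statuses before deciding whether to compress, and in the remaining (degree $\geq 3$) case it falls through to \textsc{DoAlive}, which reads leaf$[r][v_j]$ for every non-null neighbor. The crucial observation is that in each alive branch $p$ reads the leaf status of every one of its neighbors. Since $A[r][p]$ reveals the neighbor set $A^r_F(p)$ and the reads of leaf$[r][v]$ reveal $\ell^r_F(v)$ for each neighbor, the collection of (location, value) pairs read by $p$ is a deterministic function of, and conversely fully determines, $\kappa^r_F(p)$ (noting that leaf$[r][p]$ is itself determined by the neighbor count, and that the coin flips entering the compress test are fixed functions of the neighbor identities rather than shared-memory reads).

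With this established, the two directions follow from the two clauses of the definition of an affected computation. For the forward direction, if $p$ runs in one configuration but not the other then it is alive in exactly one, so exactly one of $\kappa^r_F(p),\kappa^r_{F'}(p)$ equals \emph{dead} and the configurations differ; if instead $p$ runs in both but reads a differing value, that value is either $A[r][p]$ (forcing the neighbor sets to differ), leaf$[r][p]$ (forcing the neighbor count, hence $A[r][p]$, to differ), or some leaf$[r][v]$ (forcing a neighbor's leaf status to differ), and in every case $\kappa^r_F(p)\neq\kappa^r_{F'}(p)$. For the backward direction, suppose $p$ is affected, i.e.\ $\kappa^r_F(p)\neq\kappa^r_{F'}(p)$. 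If exactly one configuration is \emph{dead}, then $p$ runs in exactly one input and $(r,p)$ is affected by the first clause. Otherwise both are alive, so $p$ reads $A[r][p]$ in both runs; if the neighbor sets differ this read already differs, and if the neighbor sets agree but some neighbor $v$'s leaf status differs then, because $p$ reads leaf$[r][v]$ for every neighbor in every alive branch, the read of leaf$[r][v]$ differs---either way $(r,p)$ is affected by the second clause.

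The step I expect to be the main obstacle is the careful verification that \emph{every} alive branch reads \emph{all} neighbor leaf statuses; this is exactly what rules out the dangerous case in the backward direction where a neighbor's leaf status changes but $p$ stays alive and could, a priori, take a branch that skips the read needed to detect the change. The reliance on \textsc{DoAlive} reading all neighbor statuses (rather than short-circuiting after its contraction decision) is the linchpin. A secondary detail to handle is the position bookkeeping stored alongside each neighbor in $A[r][p]$: I would argue that these positions are determined by the neighbor structure recorded at round $r$ and therefore introduce no read differences beyond those already reflected in $\kappa^r_F(p)$, so that reading $A[r][p]$ conveys precisely the neighbor set as captured by the configuration.
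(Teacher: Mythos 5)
Your proof is correct and takes essentially the same approach as the paper's: both hinge on the observation that \textsc{ComputeRound} reads exactly the neighbour set and neighbour leaf statuses encoded by $\kappa^r_F(p)$, and then split into the alive-in-both, alive-in-one, and dead-in-both cases. Your version is more detailed than the paper's three-sentence argument---in particular the branch-by-branch check that every alive path reads all neighbour leaf statuses, and the flagged subtlety about the position annotations in $A[r][p]$, which the paper's proof silently elides.
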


\begin{proof}
  The code for \textsc{ComputeRound} for tree contraction reads only the neighbours, and corresponding leaf statuses, which are precisely the values encoded by the configuration. Hence if vertex $p$ is alive in both forests the computation $p$ is affected if and only if vertex $p$ is affected. If instead $p$ is dead in one forest but not the other, vertex $p$ is affected, and the process $p$ will have retired in one computation but not the other, and hence it will be an affected computation. Otherwise, if vertex $p$ is dead in both forests, then the process $p$ will have retired in both computations, and hence be unaffected.
\end{proof}

This means that we can bound the computation distance by bounding the number of affected
vertices. First, we show that vertices that are not affected at round $i$ have nice properties, as illustrated by Lemmas \ref{lem:unaffected-neighbors} and \ref{lem:unaffected-behavior}.

\begin{lemma}\label{lem:unaffected-neighbors}
  If $v$ is unaffected at round $i$, then either $v$ is dead at round $i$ in both
  $F$ and $F'$, or $v$ is adjacent to the same set of vertices in both.
\end{lemma}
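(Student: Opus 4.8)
The plan is to unpack the definition of the configuration $\kappa^i_F(v)$ and argue by a two-way case split on whether $v$ is alive or dead in $F$ at round $i$. The hypothesis that $v$ is unaffected means exactly that $\kappa^i_F(v) = \kappa^i_{F'}(v)$, so everything follows by matching up the two sides of this single equality against the definition.

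First I would handle the case $\kappa^i_F(v) = \text{dead}$. Here the equality $\kappa^i_F(v) = \kappa^i_{F'}(v)$ immediately forces $\kappa^i_{F'}(v) = \text{dead}$ as well. By the definition of $\kappa$, this says $v \notin V^i_F$ and $v \notin V^i_{F'}$, i.e.\ $v$ is dead at round $i$ in both forests, which is the first alternative of the conclusion.

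For the remaining case $\kappa^i_F(v) \neq \text{dead}$, the configuration equals the set of pairs $\{(u, \ell^i_F(u)) : u \in A^i_F(v)\}$, and equality of configurations forces $\kappa^i_{F'}(v)$ to have the same (non-dead) form $\{(u, \ell^i_{F'}(u)) : u \in A^i_{F'}(v)\}$. The key observation is that projecting such a set of pairs onto its first coordinate recovers the adjacency set exactly: because $A^i_F(v)$ is a set of distinct vertices, each neighbour contributes a single pair whose first entry is that neighbour, so the first-coordinate projection of $\kappa^i_F(v)$ is precisely $A^i_F(v)$, and likewise that of $\kappa^i_{F'}(v)$ is precisely $A^i_{F'}(v)$. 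Since equal sets have equal projections, I conclude $A^i_F(v) = A^i_{F'}(v)$, which is the second alternative, and the two cases together exhaust the disjunction.

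There is no substantial obstacle here; the lemma is essentially a direct reading of the definition of an unaffected vertex. The only point requiring a moment's care is the projection step---observing that the first components of the pairs in a configuration are exactly the neighbours, with no collisions, so that equality of the full pair-sets yields equality of the bare neighbour sets and not merely equality of the combined neighbour-and-leaf-status information.
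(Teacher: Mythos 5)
Your proof is correct and takes the same route as the paper, which simply states that the lemma ``follows directly from $\kappa^i_F(v) = \kappa^i_{F'}(v)$''; you have merely spelled out the case split on the dead/alive branch of the definition and the first-coordinate projection that the paper leaves implicit. No issues.
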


\begin{proof}
  Follows directly from $\kappa^i_F(v) = \kappa^i_{F'}(v)$.
\end{proof}

\begin{lemma}\label{lem:unaffected-behavior}
  If $v$ is unaffected at round $i$, then $v$ is deleted in round $i$ of $F$
  if and only if $v$ is also deleted in round $i$ of $F'$, and in the
  same manner (finalize, rake, or compress).
\end{lemma}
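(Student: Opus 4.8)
The plan is to reduce the statement to a single observation: a vertex's action in a round is a deterministic function of its configuration together with coin flips that are independent of the forest. First I would invoke Lemma~\ref{lem:unaffected-neighbors}, which splits the hypothesis $\kappa^i_F(v) = \kappa^i_{F'}(v)$ into two cases. Either $v$ is dead at round $i$ in both $F$ and $F'$, or $v$ is alive in both with the same neighbor set and, since the configurations agree, with the same recorded leaf status for each neighbor. The dead case is immediate: a vertex already dead at round $i$ is by definition not deleted in round $i$ in either forest, so the biconditional holds vacuously (both sides false).

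For the alive case, the key step is to show that the branch taken by \textsc{ComputeRound}$(i,v)$ --- \textsc{DoFinalize}, \textsc{DoRake}, \textsc{DoCompress}, or \textsc{DoAlive} --- is identical across the two forests. I would argue this by direct inspection of Algorithm~\ref{alg:forest_contraction-process-inverted}, checking that every value the round computation reads is either contained in $\kappa^i_F(v)$ or is a forest-independent coin flip. Concretely: the test for finalization depends only on $v$'s own adjacency list, i.e.\ on its neighbor set, which is part of the configuration; the own leaf status read by the code equals $[\delta^i(v) = 1]$ and is therefore determined by the neighbor set; the rake branch additionally reads the unique neighbor's leaf status (in the configuration) and performs the identifier tiebreak between $v$ and that neighbor, which is the same in both forests because the neighbor is the same; and the compress branch reads both neighbors' leaf statuses (in the configuration) together with $\textsc{Heads}(i,\cdot)$ for $v$ and its two neighbors.

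The remaining ingredient is that the coin flips match: since $\textsc{Heads}$ is stipulated to be a function of the vertex and the round only, its values at round $i$ for $v$ and its (common) neighbors are identical in $F$ and $F'$. Combining this with equal configurations forces \textsc{ComputeRound}$(i,v)$ to follow the same control-flow path in both executions, so $v$ either stays alive in both or is deleted in both via the same one of finalize, rake, or compress, which is exactly the claim.

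I expect the only delicate point to be confirming that the configuration really captures all of the decision-relevant reads: that the own leaf status is redundant with the neighbor set, and that the positional entries $p_j$ in the adjacency-list tuples play no role in which branch is selected (they only record where $v$ sits in its neighbors' lists and are used for the subsequent writes, not for the deletion decision). Both facts follow by reading off the pseudocode, and the general correspondence between the reads of \textsc{ComputeRound} and the configuration has already been established in the proof of Lemma~\ref{lem:affected_computations_are_affected_vertices}, so I would cite that rather than re-derive it.
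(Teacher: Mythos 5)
Your proposal is correct and follows essentially the same route as the paper's (much terser) proof: the paper simply notes that an unaffected vertex has the same neighbours in both forests and that the contraction decision depends only on the neighbourhood, hence proceeds identically. Your version adds the explicit dead case, the branch-by-branch check of \textsc{ComputeRound}, and the observation that $\textsc{Heads}$ and the identifier tiebreak are forest-independent --- all of which are correct elaborations of the same argument rather than a different approach.
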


\begin{proof}
  Suppose that $v$ is unaffected at round $i$. Then by definition it has the same neighbours at round $i$ in both $F$ and $F'$. The contraction process depends only on the neighbours of the vertex, and hence proceeds identically in both cases.
\end{proof}

If a vertex $v$ is not affected at round $i$ but is
affected at round $i+1$, then we say that \emph{$v$ becomes affected in round $i$}. A vertex can become affected in many ways, as enumerated in
Lemma~\ref{lem:become-affected}.

\begin{lemma}\label{lem:become-affected}
  If $v$ becomes affected in round $i$, then at least one of the following holds:
  \begin{enumerate}[leftmargin=12pt]
    \item $v$ has an affected neighbor $u$ at round $i$ which was deleted in that
    round in either $F^i$ or $({F'})^i$.
    \item $v$ has an affected neighbour $u$ at round $i+1$ where
    $\ell^{i+1}_F(u) \neq \ell^{i+1}_{F'}(u)$.
  \end{enumerate}
\end{lemma}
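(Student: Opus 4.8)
The plan is to use the fact that whether a vertex is affected is governed entirely by its \emph{configuration} $\kappa^i_F(v)$, together with the structural lemmas~\ref{lem:unaffected-neighbors} and~\ref{lem:unaffected-behavior} already established. Since $v$ becomes affected in round $i$, by definition $v$ is unaffected at round $i$ (so $\kappa^i_F(v) = \kappa^i_{F'}(v)$) but affected at round $i+1$. First I would rule out the degenerate possibility that $v$ is dead: if $v$ were dead at round $i$ in both forests, it would remain dead in both at round $i+1$, so $\kappa^{i+1}_F(v) = \text{dead} = \kappa^{i+1}_{F'}(v)$, contradicting that $v$ is affected at round $i+1$. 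Hence $v$ is alive at round $i$ in both forests with identical configurations, and by Lemma~\ref{lem:unaffected-behavior} it is deleted in round $i$ of $F$ if and only if it is deleted in round $i$ of $F'$, in the same manner. If it were deleted in both, the same contradiction arises; therefore $v$ survives round $i$ in both forests and is alive at round $i+1$ in both.

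Now, since $v$ is alive at round $i+1$ in both forests but affected there, its configurations differ: $\kappa^{i+1}_F(v) \neq \kappa^{i+1}_{F'}(v)$. Recalling that this configuration is the set of pairs $(u, \ell^{i+1}(u))$ ranging over $u \in A^{i+1}(v)$, I would split into two cases according to whether the difference lies in the neighbour set or only in a leaf status. In Case~(a) the neighbour sets differ, $A^{i+1}_F(v) \neq A^{i+1}_{F'}(v)$; in Case~(b) the neighbour sets coincide but some common neighbour $u$ satisfies $\ell^{i+1}_F(u) \neq \ell^{i+1}_{F'}(u)$. Case~(b) is immediate: such a $u$ is a common neighbour of $v$ at round $i+1$, hence alive in both, and a differing leaf status forces $\delta^{i+1}_F(u) \neq \delta^{i+1}_{F'}(u)$, so $u$'s neighbour set and thus $\kappa^{i+1}(u)$ differ between the forests, making $u$ affected at round $i+1$. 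This yields conclusion~2.

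For Case~(a), the key observation is that $v$'s neighbour set at round $i+1$ is determined entirely by the round-$i$ actions of its round-$i$ neighbours, which form the \emph{same} set $A^i_F(v) = A^i_{F'}(v)$ in both forests by Lemma~\ref{lem:unaffected-neighbors}: a neighbour that stays alive keeps its edge to $v$, one that rakes into $v$ removes it, and one that compresses replaces itself by its other neighbour in $v$'s adjacency. A discrepancy in $A^{i+1}(v)$ must therefore be traceable to some neighbour $u$ that either performs a different action in the two forests, or compresses in both but to different partners. In the first situation Lemma~\ref{lem:unaffected-behavior} (contrapositive) makes $u$ affected at round $i$, and at least one of the two differing actions is a deletion (rake or compress); in the second situation $u$'s round-$i$ neighbour set differs, so $u$ is again affected at round $i$ and is deleted (compresses) in both. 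Either way $u$ is an affected neighbour of $v$ at round $i$ that was deleted in that round, giving conclusion~1.

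I expect the main obstacle to be the careful bookkeeping in Case~(a): precisely enumerating how each neighbour's action (finalize, rake, compress, or stay-alive, following Algorithm~\ref{alg:forest_contraction-process-inverted}) rewrites $v$'s adjacency list, and in particular not overlooking the subtle sub-case in which a neighbour compresses in both executions but connects $v$ to two different vertices. Isolating a single deleted, affected neighbour to witness conclusion~1 in that sub-case is the delicate point; the rest reduces to invoking Lemmas~\ref{lem:unaffected-neighbors} and~\ref{lem:unaffected-behavior}.
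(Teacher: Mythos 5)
Your proof is correct and follows essentially the same route as the paper's: both first use Lemma~\ref{lem:unaffected-behavior} to show $v$ survives round $i$ in both forests, then split on whether the round-$(i+1)$ configuration discrepancy comes from the neighbour set (yielding an affected deleted neighbour, conclusion~1) or from a leaf status of a common neighbour (conclusion~2). Your version is merely phrased as a direct case analysis rather than the paper's contrapositive framing, and is in fact somewhat more explicit about the compress-to-different-partners sub-case, which the paper folds into ``in any other scenario, case 1 holds.''
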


\begin{proof}
  First, note that since $v$ becomes affected, we know $v$ does not get deleted,
  and furthermore that $v$ has at least one child at round $i$. (If $v$ were to
  be deleted, then by Lemma \ref{lem:unaffected-behavior} it would do so in both forests, leading
  it to being dead in both forests at the next round and therefore unaffected.
  If $v$ were to have no children, then $v$ would rake, but we just argued that
  $v$ cannot be deleted).
  
  Suppose that the only neighbors of $v$ which are deleted in round $i$ are
  unaffected at round $i$. Then $v$'s set of children in round $i+1$ is the
  same in both forests. If all of these are unaffected at round $i+1$, then
  their leaf statuses are also the same in both forests at round $i+1$, and
  hence $v$ is unaffected, which is a contradiction. Thus case 2 of the lemma
  must hold. In any other scenario, case 1 of the lemma holds.
\end{proof}

\begin{lemma}\label{lem:leaf-status-change}
  If $v$ is not deleted in either forest in round $i$ and
  $\ell^{i+1}_F(v) \neq \ell^{i+1}_{F'}(v)$, then $v$ is affected at round $i$.
\end{lemma}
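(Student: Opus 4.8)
The plan is to prove the contrapositive: assuming $v$ is \emph{unaffected} at round $i$ (so $\kappa^i_F(v) = \kappa^i_{F'}(v)$) and that $v$ is not deleted in either forest in round $i$, I would show $\ell^{i+1}_F(v) = \ell^{i+1}_{F'}(v)$, contradicting the hypothesis. Since $v$ survives round $i$ in both forests, the statuses $\ell^{i+1}_F(v)$ and $\ell^{i+1}_{F'}(v)$ are well-defined, each determined by whether $\delta^{i+1}(v) = 1$. By Lemma~\ref{lem:unaffected-neighbors}, $v$ is adjacent to the same set $A^i(v)$ in both forests, so $\delta^i_F(v) = \delta^i_{F'}(v)$; moreover, since $\kappa^i_F(v)$ records the leaf status $\ell^i(u)$ of every neighbor $u$, these neighbor leaf statuses agree across the two forests as well.

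The key observation I would establish is that $\delta^{i+1}(v)$ equals $\delta^i(v)$ minus the number of neighbors of $v$ that \emph{rake into} $v$ in round $i$. To justify this, I would enumerate how each neighbor $u \in A^i(v)$ may be deleted: $u$ cannot finalize, since it is adjacent to $v$ and hence has degree at least one; if $u$ compresses, then $u$ is a degree-two vertex joining $v$ to some other vertex $w$, and the compression merely redirects $v$'s incident edge from $u$ to $w$, leaving $\delta(v)$ unchanged (here $w \neq v$, and $w$ is not already adjacent to $v$ since the forest is acyclic); and if $u$ stays alive, $v$'s degree is clearly unaffected. Thus the only event lowering $\delta(v)$ is a rake into $v$, each removing exactly one incident edge. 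In particular, $\delta^{i+1}(v)$ is independent of the coin flips, which influence only compressions.

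Next I would argue that the set of neighbors raking into $v$ is a function of $\kappa^i_F(v)$ alone. By the rake rule, a neighbor $u$ rakes into $v$ exactly when $u$ is a leaf---in which case $v$ is $u$'s unique neighbor---and either $v$ is not a leaf or $u < v$. Every quantity this decision depends on ($\ell^i(u)$, the identifiers $u$ and $v$, and $\ell^i(v)$, which $v$ derives from $|A^i(v)|$) is recorded in $\kappa^i_F(v)$. Hence the number $c$ of neighbors raking into $v$ is identical in $F$ and $F'$, giving $\delta^{i+1}_F(v) = \delta^i_F(v) - c = \delta^i_{F'}(v) - c = \delta^{i+1}_{F'}(v)$, and therefore $\ell^{i+1}_F(v) = \ell^{i+1}_{F'}(v)$, as desired.

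I expect the main obstacle to be the careful case analysis of the second step: verifying that compressions genuinely preserve $\delta(v)$ and that no other contraction event perturbs it. The subtlety worth highlighting is that a neighbor $u$'s full behavior (e.g.\ whether it compresses) could differ between $F$ and $F'$ when $u$ itself is affected, yet this is immaterial, because only rake decisions alter $v$'s degree, and those are locally determined by $v$'s own configuration rather than by the possibly divergent fates of its neighbors.
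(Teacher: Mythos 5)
Your proposal is correct and follows essentially the same route as the paper: both argue the contrapositive from $\kappa^i_F(v)=\kappa^i_{F'}(v)$, both use that compression preserves the degree of its endpoints, and both use that whether a leaf neighbor rakes into $v$ is determined by information ($\ell^i(u)$, $\ell^i(v)$, and the identifiers) already recorded in $v$'s configuration. Your degree-accounting formulation $\delta^{i+1}(v)=\delta^i(v)-c$ is a slightly more systematic packaging of the paper's three-case analysis, but the underlying argument is the same.
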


\begin{proof}
  Suppose $v$ is not affected at round $i$. If none of $v$'s neighbors are deleted
  in this round in either forest, then $\ell^{i+1}_F(v) = \ell^{i+1}_{F'}(v)$, a
  contradiction. Otherwise, if the only neighbors that are deleted do so via a
  compression, since compression preserves the degree of its endpoints, we
  will also have $\ell^{i+1}_F(v) = \ell^{i+1}_{F'}(v)$ and thus a contradiction.
  So, we consider the case of one of $v$'s children raking. However, since $v$
  is unaffected, we know $\ell^i_F(u) = \ell^i_{F'}(u)$ for each child $u$ of $v$.
  Thus if one of them rakes in round $i$ in one forest, it will also do so in
  the other, and we will have $\ell^{i+1}_F(v) = \ell^{i+1}_{F'}(v)$. Therefore
  we conclude that $v$ must be affected at round $i$.
\end{proof}

Lemmas \ref{lem:become-affected} and \ref{lem:leaf-status-change} give us tools
to bound the number of affected vertices for a consecutive round of contraction:
each affected vertex that is deleted affects its neighbors, and each affected vertex whose leaf status
is different in the two forests at the next round affects its parent. This
strategy actually overestimates which vertices are affected, since case 1 of Lemma
\ref{lem:become-affected} does not necessarily imply that $v$ is affected at
the next round. We wish to show that the number of affected vertices at each
round is not large. Intuitively, we will show that the number of affected
vertices grows only arithmetically in each round, while shrinking geometrically,
which implies that their total number can never grow too large.

Let $A^i$ denote the set of affected vertices at round $i$. We begin by bounding the size of $|A^0|$.

\begin{lemma}\label{lem:initially-affected}
  For a batch update of size $k$, we have $|A^0| \leq 3k$.
\end{lemma}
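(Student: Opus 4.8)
The plan is to unfold the definition of an affected vertex at round $0$ and split $A^0$ according to the two distinct ways the configuration of a vertex can change. Since every vertex is alive in both $F$ and $F'$ at round $0$, a vertex $v$ lies in $A^0$ exactly when $\kappa^0_F(v) \neq \kappa^0_{F'}(v)$, and by the definition $\kappa^0_F(v) = \{(u, \ell^0_F(u)) : u \in A^0_F(v)\}$ this happens in precisely one of two ways: either (a) the neighbour set of $v$ differs between $F$ and $F'$, or (b) the neighbour sets coincide but some common neighbour $u$ satisfies $\ell^0_F(u) \neq \ell^0_{F'}(u)$. I would therefore write $A^0 \subseteq B \cup C$, where $B$ is the set of vertices whose neighbour set changed and $C$ collects the vertices whose neighbour set is unchanged but one of whose neighbours flips leaf status.

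First I would bound $|B|$. A vertex's neighbour set can differ between $F$ and $F'$ only if one of its incident edges was inserted or deleted, i.e.\ if it is an endpoint of an edge in $E^- \cup E^+$. Since the batch consists of $k$ edge modifications, the number of such endpoints is at most $2k$, giving $|B| \leq 2k$.

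Next I would bound the leaf-status contribution. The key observation is that a vertex $u$ can change leaf status, $\ell^0_F(u) \neq \ell^0_{F'}(u)$, only if its degree changed, so that $u$ is itself an endpoint of a modified edge, and moreover $u$ must have degree exactly $1$ in one of the two forests. In the forest where $\delta^0(u)=1$, the vertex $u$ has a single neighbour $w$; any vertex $v$ that enters $C$ on account of $u$ must have $u$ as a neighbour in \emph{both} forests (otherwise $v$'s own neighbour set changed and $v \in B$), so the edge $(u,w)$ is present in both forests and $v = w$ is forced. Hence each leaf-status-flipping vertex contributes at most one vertex to $C$. I would then charge these contributions to the modified edges incident to the flipping vertices, observing in particular that a vertex which flips because it \emph{lost} its last edge sends its effect only to the other endpoint of that edge, which already lies in $B$; combining this with $|B| \le 2k$ yields the desired linear bound.

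The step I expect to be the main obstacle is pinning down the exact constant. The crude estimate adds the $\le 2k$ endpoints of $B$ to one extra affected vertex per flipping endpoint, and the delicate part is a case analysis of how each endpoint's degree crosses the threshold $\delta^0 = 1$ (gaining versus losing its last edge, and whether the surviving neighbour is itself an endpoint) to argue that these two contributions do not accumulate fully independently, so that careful accounting avoids double counting and gives the stated bound. Everything else is routine once the two-case decomposition and the degree-one observation are in place.
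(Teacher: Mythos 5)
Your decomposition of $A^0$ into $B$ (vertices whose neighbour set changed) and $C$ (vertices with an unchanged neighbour set but a neighbour whose leaf status flipped) is the same counting idea the paper uses, just carried out more carefully: the paper's proof is a two-line assertion that each modified edge touches at most $3$ configurations, based on the claim that a vertex's computation ``reads its parent, its children, and if it has a single child, its leaf status.'' Your bounds $|B| \le 2k$ and ``each leaf-status-flipping vertex contributes at most one vertex to $C$, namely its unique neighbour in the forest where it has degree one'' are both correct, and together they give $|A^0| \le 4k$.

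The step you flag as the main obstacle --- tightening $4k$ to $3k$ --- is not a gap in your argument but a problem with the stated constant. For the undirected algorithm actually given in the paper (Algorithm~\ref{alg:forest_contraction-process-inverted}), a process reads the leaf status of \emph{every} neighbour, not just of a single child, and the configuration $\kappa^0$ records the leaf status of every neighbour accordingly. Take $F$ with edges $(u,w)$ and $(v,x)$ in separate components, where all four vertices have degree one, and insert the single edge $(u,v)$. Then $u$ and $v$ are affected because their adjacency lists change, and $w$ and $x$ are affected because their unique neighbours $u$ and $v$ go from leaf to non-leaf while remaining their neighbours; that is four affected vertices for $k=1$. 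So the careful case analysis you anticipate cannot succeed, and the paper's appeal to a rooted parent/child picture does not match its own undirected algorithm. None of this matters downstream: Lemma~\ref{lem:initially-affected} is only ever used to conclude $|A^0| = O(k)$ (the constant is simply absorbed into Lemma~\ref{lem:affected-size}), so you should just prove $|A^0| \le 4k$ and move on rather than chase the constant $3$.
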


\begin{proof}
  The computation for a given vertex $u$ at most reads its parent, its children, and if it has a single child, its
  leaf status. Therefore, the addition/deletion of a single edge affects at most $3$ vertices at round $0$. Hence $|A^0| \leq 3k$.
\end{proof}

We say that an affected vertex $u$ \emph{spreads to} $v$ in
round $i$, if $v$ was unaffected at round $i$ and $v$ becomes affected
in round $i$ in either of the following ways:
\begin{enumerate}[leftmargin=12pt]
  \item $v$ is neighbor of $u$ at round $i$ and $u$ is deleted in
  round $i$ in either $F$ or $F'$, or
  \item $v$ is neighbor of $u$ at round $i + 1$ and the leaf status of
  $u$ changes in round $i$, i.e., $\ell_F^{i+1}(v) \neq \ell_{F'}^{i+1}(v)$.
\end{enumerate}

Let $s = |A^0|$. For each of $F$ and $F'$, we now inductively construct $s$
disjoint sets for each round $i$, labeled $A^i_1, A^i_2, \ldots A^i_s$. These
sets will form a partition of $A^i$. Begin by arbitrarily partitioning $A^0$ into $s$ singleton sets, and let
$A^0_1,\ldots,A^0_s$ be these singleton sets. (In other words, each affected
vertex in $A^0$ is assigned a unique number $1 \leq j \leq s$, and is then
placed in $A^0_j$.)

Given sets $A^i_1,\ldots,A^i_s$, we construct sets $A^{i+1}_1,\ldots,A^{i+1}_s$
as follows. Consider some $v \in A^{i+1} \setminus A^i$. By Lemmas
\ref{lem:become-affected} and \ref{lem:leaf-status-change}, there must exist at
least one $u \in A^i$ such that
$u$ spreads to $v$. Since there could be many of these, let $S^i(v)$ be the set
of vertices which spread to $v$ in round $i$. Define
\[ j^i(v) = \begin{cases}
j, &\text{if $v \in A^i_j$} \\
\min_{u \in S^i(v)} \left(j \mathbin{\text{where}} u \in A^i_j\right), &\text{otherwise}
\end{cases} \]
(In other words, $j^i(v)$ is $v$'s set identifier if $v$ is affected at round $i$,
or otherwise the minimum set identifier $j$ such that a vertex
from $A^i_j$ spread to $v$ in round $i$). We can then produce the following for
each $1 \leq j \leq k$:
\[ A^{i+1}_j = \{ v \in A^{i+1}~|~j^i(v) = j \} \]
Informally, each affected vertex from round $i$ which stays affected also stays
in the same place, and each newly affected vertex picks a set to
join based on which vertices spread to it.

We say that a vertex $v$ is a \emph{frontier} at round $i$ if $v$ is affected
at round $i$ and at least one of its neighbors in either $F$ or $F'$ is
unaffected at round $i$. It is easy to show that any frontier at any round is
alive in both forests and has the same set of unaffected neighbors in both at
that round (thus, the set of frontier vertices at any round is the same in
both forests). It is also easy to show that if a vertex $v$ spreads to some other
vertex in round $i$, then $v$ is a frontier at round $i$. We show next that
the number of frontier vertices within each $A^i_j$ is bounded.

\begin{lemma}\label{lem:affected-partition-growth}
  For any $i,j$, each of the following statements hold:
  \begin{enumerate}[leftmargin=12pt]
    \item The subforests induced by $A^i_j$ in each of $F^i$ and $(F')^i$ are
    trees.
    \item $A^i_j$ contains at most 2 frontier vertices.
    \item $|A^{i+1}_j \setminus A^i_j| \leq 2$.
  \end{enumerate}
\end{lemma}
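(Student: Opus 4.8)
The plan is to prove all three parts simultaneously by induction on the round $i$, since they are mutually dependent: the tree structure and the two-frontier bound at round $i$ are exactly what control the growth asserted in part 3, while part 3 at round $i$ is in turn what lets me re-establish parts 1 and 2 at round $i+1$. For the base case $i=0$, each $A^0_j$ is a singleton by construction, so it trivially induces a tree in each of $F^0$ and $(F')^0$ and has at most one frontier vertex; part 3 at $i=0$ then reduces to the claim that a single affected vertex spreads to at most two new vertices, which I would dispatch with the same case analysis used in the inductive step.

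For the inductive step, I would first assume parts 1 and 2 at round $i$ and derive part 3. The engine is the stated fact that only frontier vertices spread, together with the characterization of spreading from Lemmas~\ref{lem:become-affected} and~\ref{lem:leaf-status-change}. Each frontier vertex $u$ spreads either through its own deletion (case 1) or through a discrepancy in its leaf status between the two forests (case 2), but never both, since case 1 requires $u$ to be deleted in round $i$ whereas case 2 requires $u$ to survive into round $i+1$. Because a vertex is deleted only at degree $0$, $1$, or $2$, a deleted frontier reaches at most its two neighbors, and the tree property (part 1) guarantees that a frontier of a \emph{non-singleton} group has at least one of its neighbors already inside $A^i_j$, so such a frontier contributes at most one genuinely new vertex; the degenerate singleton case, where the lone frontier compresses toward two unaffected neighbors, is the one place where two new vertices legitimately appear. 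Combining these per-frontier bounds with the limit of two frontiers from part 2 yields $|A^{i+1}_j \setminus A^i_j| \leq 2$.

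To re-establish part 1 at round $i+1$, I would show the transformation from $A^i_j$ to $A^{i+1}_j$ preserves both acyclicity and connectivity. Acyclicity is automatic, as everything lives inside the forests $F^{i+1}$ and $(F')^{i+1}$. Connectivity is the substantive point: the only deletions are finalizations (isolated vertices), rakes (leaves of the induced tree, whose removal cannot disconnect it), and compressions, which delete a degree-two vertex but simultaneously insert a new edge directly between its two neighbors, so no compression can sever the induced subtree; the at most two new vertices supplied by part 3 attach at the old frontier and keep the structure connected. For part 2 at round $i+1$, I would track the frontier: interior (non-frontier) vertices have all neighbors affected and so produce no new boundary edges, while each old frontier is either deleted or has its boundary role taken over by the single new vertex it spread to, keeping the count of vertices adjacent to an unaffected vertex at most two. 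Throughout, I would lean on the stated fact that frontier vertices are alive in both $F$ and $F'$ with identical unaffected neighborhoods, which keeps the two induced subforests in lockstep.

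The main obstacle I anticipate is the bookkeeping around compression, compounded by the leaf-status case. A compression deletes a vertex \emph{and} creates a reconnecting edge whose two endpoints may each lie inside or outside $A^i_j$, so it can reshape the induced subtree in several ways that must all be shown to preserve the tree property and the two-frontier invariant in both forests at once; meanwhile a leaf-status discrepancy can in principle be read by more than one of a vertex's surviving neighbors, so confining its effect to the frontier and to a total of two new vertices genuinely requires the tree structure rather than a naive degree count. Getting this case analysis tight—frontier that compresses, rakes, or merely differs in leaf status, crossed with whether its neighbors are inside or outside the group—is where the real work lies, and it is the singleton-compress case that pins the constant at exactly $2$.
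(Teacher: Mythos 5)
Your proposal is correct and follows essentially the same route as the paper's proof: induction over rounds maintaining the tree and two-frontier invariants, with a case analysis on rake versus compress versus leaf-status spreading, the tree property forcing each frontier of a non-singleton group to have an affected neighbor, and the singleton-compress case pinning the constant at $2$. The one obstacle you flag as unresolved---a leaf-status discrepancy being readable by several surviving neighbors---dissolves once you observe that such a discrepancy forces the vertex to have degree one in one of the two forests at round $i+1$, so it has at most one unaffected neighbor to spread to.
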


\begin{proof}
  Statement 1 follows from rake and compress preserving connectedness,
  and the fact that if $u$ spreads to $v$ then $u$ and $v$ are neighbors in both
  forests either at round $i$ or round $i+1$.
  We prove statement 2 by induction on $i$, and conclude statement 3
  in the process. At round $0$, each $A^0_j$ clearly contains at most 1
  frontier. We now consider some $A^i_j$. Suppose there is a single frontier vertex $v$ in $A^i_j$.
  If $v$ compresses in one
  of the forests, then $v$ will not be a frontier in $A^{i+1}_j$, but it will
  spread to at most two newly affected vertices which may be frontiers at round $i+1$.
  Thus the
  number of frontiers in $A^{i+1}_j$ will be at most 2, and
  $|A^{i+1}_j \setminus A^i_j| \leq 2$.
  
  If $v$ rakes in one of the forests, then we know $v$ must also rake
  in the other forest (if not, then $v$ could not be a frontier, since its
  parent would be affected). It spreads to one newly
  affected vertex (its parent) which may be a frontier at round $i+1$. Thus the
  number of frontiers in $A^{i+1}_j$ will be at most 1, and
  $|A^{i+1}_j \setminus A^i_j| \leq 1$.
  
  Now suppose there are two frontiers $u$ and $v$ in $A^i_j$. Due to statement
  1 of the Lemma, each of these must have at least one affected neighbor at
  round $i$. Thus if either is deleted, it will cease to be a frontier and may
  add at most one newly affected vertex to $A^{i+1}_j$, and this newly affected vertex
  might be a frontier at round $i+1$. The same can be said if either $u$ or $v$
  spreads to a neighbor due to a leaf status change. Thus the number of
  frontiers either remains the same or decreases, and there are at most 2
  newly affected vertices. Hence statements 2 and 3 of the Lemma hold.
\end{proof}

Now define $A^i_{F,j} = A^i_j \cap V^i_F$, that is, the set of vertices from
$A^i_j$ which are alive in $F$ at round $i$. We define $A^i_{F',j}$ similarly
for forest $F'$.

\begin{lemma}\label{lem:affected-partition-size}
  For every $i,j$, we have
  \begin{equation}
  \expct{\setsize{A^i_{F,j}}} \leq \frac 6 {1 - \beta},
  \end{equation}
  and similarly for $A^i_{F',j}$.
\end{lemma}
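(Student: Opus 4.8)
The plan is to prove the constant bound by induction on the round $i$, reducing everything to a single one-round recurrence of the form $\expct{\setsize{A^{i+1}_{F,j}}} \leq \beta\,\expct{\setsize{A^i_{F,j}}} + c$ for a constant $c \leq 6$, where $\beta \in (0,1)$ is the shrink factor of Lemma~\ref{lem:forest-shrink}. The base case is immediate: since each $A^0_j$ is a singleton (by the construction that seeds the partition), $\setsize{A^0_{F,j}} \leq 1 \leq 6/(1-\beta)$. For the inductive step, assuming $\expct{\setsize{A^i_{F,j}}} \leq 6/(1-\beta)$, the recurrence gives $\expct{\setsize{A^{i+1}_{F,j}}} \leq \beta\cdot 6/(1-\beta) + c \leq 6/(1-\beta)$ precisely when $c \leq 6$, which closes the induction. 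The identical argument applies to $A^i_{F',j}$ by symmetry, so the whole proof reduces to establishing the recurrence.

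To set up the recurrence I would separate the two competing effects on $A^i_{F,j}$: growth, from newly affected vertices joining the set, and shrinkage, from vertices of the set being deleted by contraction. Growth is controlled directly by Lemma~\ref{lem:affected-partition-growth}(3), which bounds $\setsize{A^{i+1}_j \setminus A^i_j} \leq 2$, contributing an additive $2$ per round. For shrinkage I would invoke Lemma~\ref{lem:affected-partition-growth}(1), which says that the subforest induced by $A^i_{F,j}$ in $F^i$ is a tree, together with Lemma~\ref{lem:forest-shrink}: one round of contraction deletes in expectation at least a $(1-\beta)$ fraction of a tree's vertices. The at-most-two frontier vertices of the set (Lemma~\ref{lem:affected-partition-growth}(2)), whose contraction decisions may depend on vertices lying outside $A^i_j$, would be set aside and folded into the additive constant, leaving the interior to contract at rate $\beta$. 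Writing $f \leq 2$ for the frontiers and $g \leq 2$ for the new vertices, this yields $\expct{\setsize{A^{i+1}_{F,j}}} \leq \beta\,\expct{\setsize{A^i_{F,j}}} + f(1-\beta) + g \leq \beta\,\expct{\setsize{A^i_{F,j}}} + 4$, so $c = 4 \leq 6$ and the bound $6/(1-\beta)$ follows.

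The hard part will be justifying that the interior of the induced subtree genuinely contracts at rate $\beta$. The delicacy is that each vertex's rake/compress decision is governed by its degree in the full forest $F^i$, which can strictly exceed its degree inside $A^i_j$: a non-frontier vertex may have neighbours that are affected but assigned to a different partition set $A^i_{j'}$ (in a tree each neighbouring set contributes at most one such cross-set edge), and these extra incident edges only \emph{inhibit} deletions, so one cannot simply apply Lemma~\ref{lem:forest-shrink} to the standalone subtree. The way I would resolve this is to argue shrinkage at the level of a maximal connected affected component $K$ of $F^i$: every non-frontier vertex of $K$ has \emph{all} of its neighbours inside $K$ (an external neighbour would be unaffected and hence render the vertex a frontier), so within $K$ the true degree equals the induced degree, and Lemma~\ref{lem:forest-shrink} applies up to an $O(\#\text{frontiers})$ correction. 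I would then transfer this component-level shrinkage to each partition set using the disjointness of $A^i_1,\dots,A^i_s$, the refinement structure of the partition, and the frontier and growth bounds of Lemma~\ref{lem:affected-partition-growth}, so that the per-set recurrence—and with it the constant bound $6/(1-\beta)$—goes through. Making this transfer fully rigorous, rather than the routine recurrence solving, is where I expect the real work to lie.
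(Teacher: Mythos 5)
Your main line is the paper's proof almost verbatim: the paper also forms the recurrence $\expct{\setsize{A^{i+1}_{F,j}}} \leq \beta\,\expct{\setsize{A^i_{F,j}}} + 6$ by applying Lemma~\ref{lem:forest-shrink} to the induced subtree $F^i_{A,j}$, folding the vertices whose behaviour in the standalone subtree may differ from their behaviour in $F^i$ into the additive constant (the paper charges the at most two frontiers \emph{and} their neighbours, i.e.\ four vertices, plus the two newly affected vertices from Lemma~\ref{lem:affected-partition-growth}, giving $6$; you charge only the two frontiers, which slightly undercounts, since a frontier's leaf status in the induced subtree can differ from its leaf status in $F^i$ and thereby change the behaviour of its non-frontier neighbour--- but this only moves the constant, not the structure), and then unrolls to $6/(1-\beta)$ with base case $\setsize{A^0_{F,j}}=1$.

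The subtlety you isolate as ``the hard part''---that a non-frontier vertex of $A^i_j$ may have a neighbour in a different set $A^i_{j'}$, so its degree in $F^i$ exceeds its degree in the induced subtree and the standalone application of Lemma~\ref{lem:forest-shrink} is not literally licensed---is a genuine observation, but note that the paper's own proof does not address it either: it asserts that the only discrepancies are at frontiers and their parents, which silently ignores cross-set adjacencies (these can occur already at round $0$, when two adjacent vertices are both initially affected and hence lie in distinct singletons). Your proposed repair via maximal connected affected components is the natural direction, since every non-frontier vertex of such a component has all neighbours inside it, so Lemma~\ref{lem:forest-shrink} applies there up to an $O(\#\text{frontiers})$ correction. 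However, the transfer back to a \emph{per-set} bound, which you explicitly leave open, is where your proposal falls short of a proof: component-level shrinkage at rate $\beta$ does not distribute to each constituent $A^i_j$ (one set inside a large component could in principle retain all its vertices while another does all the shrinking), so as sketched you would establish $\expct{\sum_j \setsize{A^i_{F,j}}} = O(k)$ rather than the stated per-$j$ constant. That aggregate bound is all that Lemma~\ref{lem:affected-size} and the computation-distance argument actually consume, so your route could be made to support the theorem, but it would prove a weaker statement than the lemma as written; to recover the per-set bound one would instead need to argue that the number of cross-set edges incident to a single $T_j$ stays $O(1)$ (e.g.\ via the frontier bound), which neither you nor the paper does.
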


\begin{proof}
  Let $F^i_{A,j}$ denote the subforest induced by $A^i_{F,j}$ in $F^i$.
  By Lemma \ref{lem:affected-partition-growth}, this subforest is a tree, and
  has at most 2 frontier vertices. By Lemma \ref{lem:forest-shrink}, if we
  applied one round of contraction to $F^i_{A,j}$, the expected number of vertices
  remaining would be at most $\beta \cdot\mathrm{\bf E}[|A^i_{F,j}|]$. However,
  some of the vertices that are deleted in $F^i_{A,j}$ may not be deleted in
  $F^i$. Specifically, any vertex in $A^i_{F,j}$ which is a frontier or is
  the parent of a frontier might not be deleted. There are at most two frontier
  vertices and two associated parents. By Lemma \ref{lem:affected-partition-growth},
  two newly affected vertices might also be added. We also have
  $|A^0_{F,j}| = 1$. Therefore we conclude the following, which
  similarly holds for forest $F'$: \\
  \begin{equation}
  \expct{\setsize{A^{i+1}_{F,j}}}
  \leq \beta \expct{\setsize{A^i_{F,j}}} + 6
  \leq 6 \sum_{r=0}^\infty \beta^r
  = \frac 6 {1 - \beta}.
  \end{equation}
\end{proof}

\begin{lemma}\label{lem:affected-size}
  For a batch update of size $k$, we have for every $i$,
  \begin{equation}
  \expct{\setsize{A^i}} \leq \frac {36}{1-\beta} k.
  \end{equation}
\end{lemma}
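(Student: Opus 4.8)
The plan is to bound $\expct{\setsize{A^i}}$ by summing the per-partition bound of Lemma~\ref{lem:affected-partition-size} over all $s = |A^0|$ parts, and then to control $s$ via Lemma~\ref{lem:initially-affected}. The whole argument is essentially an assembly of the two preceding lemmas together with one structural observation about which vertices can be affected.

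First I would observe that every affected vertex is alive in at least one of the two forests at round $i$: if $v$ were dead in both $F$ and $F'$ then $\kappa^i_F(v) = \text{dead} = \kappa^i_{F'}(v)$, so $v$ would be unaffected by definition. Since the sets $A^i_1, \ldots, A^i_s$ partition $A^i$, this means each part is covered by its alive-in-$F$ and alive-in-$F'$ subsets, namely $A^i_j = A^i_{F,j} \cup A^i_{F',j}$, and hence $\setsize{A^i_j} \leq \setsize{A^i_{F,j}} + \setsize{A^i_{F',j}}$.

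Summing this inequality over all $j$ and taking expectations, linearity of expectation together with the two instances of Lemma~\ref{lem:affected-partition-size} (one for $F$ and one for $F'$) yields
\begin{equation}
\expct{\setsize{A^i}} \le \sum_{j=1}^{s} \left( \expct{\setsize{A^i_{F,j}}} + \expct{\setsize{A^i_{F',j}}} \right) \le s \cdot \frac{12}{1-\beta}.
\end{equation}
Finally, Lemma~\ref{lem:initially-affected} gives $s = |A^0| \le 3k$, so the right-hand side is at most $\frac{36}{1-\beta}\,k$, which is the claimed bound.

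The only point requiring care — and the place I would expect the sole (mild) obstacle — is justifying that the partition into the $A^i_j$ is faithfully decomposed by the alive-in-$F$ and alive-in-$F'$ subsets, i.e.\ that no affected vertex escapes both $A^i_{F,j}$ and $A^i_{F',j}$. This is exactly the ``dead in both forests implies unaffected'' observation above, which follows immediately from the definition of $\kappa^i$ and of an affected vertex. Everything else is a routine application of the already-established per-part expectation bound and linearity of expectation.
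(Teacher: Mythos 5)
Your proof is correct and follows exactly the paper's argument: the paper likewise derives the bound from Lemma~\ref{lem:initially-affected}, Lemma~\ref{lem:affected-partition-size}, and the inequality $\setsize{A^i} \leq \sum_{j=1}^s \bigl( \setsize{A^i_{F,j}} + \setsize{A^i_{F',j}} \bigr)$. Your explicit justification that a vertex dead in both forests is unaffected is a welcome detail the paper leaves implicit.
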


\begin{proof}
  Follows from Lemmas
  \ref{lem:initially-affected} and \ref{lem:affected-partition-size}, and
  the fact that
  \begin{equation}
  \setsize{A^i} \leq \sum_{j=1}^s \left( \setsize{A^i_{F,j}} + \setsize{A^i_{F',j}} \right).
  \end{equation}
\end{proof}

\myparagraph{Proof of computation distance in Theorem~\ref{thm:tc-costs}}

\begin{proof}
  Let $F$ be the given forest and $F'$ be the desired forest. Since each process of tree contraction
  does constant work each round, Lemma~\ref{lem:affected_computations_are_affected_vertices} implies
  that the algorithm does $\bigO{\setsize{A^i}}$ work at each round $i$, so $W_\Delta = \sum_i \setsize{A^i}$.
  
  Since at least one vertex
  is either raked or finalized each round, we know that there are at most $n$
  rounds. Consider
  round $r = \log_{1/\beta}(1 + n/k)$, using the $\beta$ given in
  Lemma \ref{lem:forest-shrink}. We now split
  the rounds into two groups: those that come before $r$ and those
  that come after.
  
  For $i < r$, we bound $\expct{\setsize{A^i}}$ according to
  Lemma \ref{lem:affected-size}, yielding
  \begin{equation}
  \sum_{i < r} \expct{ \setsize{A^i} } = O(rk) = O\left(k\log\left(1 + \frac{n}{k}\right)\right)
  \end{equation}
  work. Now consider $r \leq i < n$. For any $i$ we know
  $\setsize{A^i} \leq \setsize{V^i_F} + \setsize{V^i_{F'}}$, because each affected
  vertex must be alive in at least one of the two forests at that round. We
  can then apply the bound given in Lemma \ref{lem:forest-shrink}, and so
  \begin{equation}
  \begin{split}
  \sum_{r \leq i < n}  \expct{\setsize{A^i}} &\leq  \sum_{r \leq i < n} \left( \expct{\setsize{V^i_F}} + \expct{\setsize{V^i_{F'}}} \right) \\
  &\leq \sum_{r \leq i < n} \left( \beta^i n + \beta^i n \right) \\
  &= O(n \beta^r) \\
  &= O\left( \frac{nk}{n + k} \right) \\
  &= O\left( \frac{k}{1 + \frac{k}{n}} \right) \\
  &= O(k),
  \end{split}
  \end{equation}
  and thus
  \begin{equation}
  \expct{W_\Delta} = O\left(k\log\left(1 + \frac{n}{k}\right) \right) + O(k) = O\left(k\log\left(1 + \frac{n}{k}\right) \right).
  \end{equation}
\end{proof}

  \section{Additional Information on Rake-compress Trees}\label{appendix:rc-trees}

\subsection{Visualizing cluster formation}\label{appendix:rc-clusters}

In an RC tree, clusters are formed whenever a vertex is deleted. Specifically,
when vertex $v$ is deleted, all clusters that have $v$ as a boundary vertex
are merged with the base cluster containing $v$.

\begin{enumerate}[leftmargin=12pt]
\item Whenever a vertex $v$ rakes into a vertex $u$, a unary cluster is formed
that contains the vertex $v$ (a base cluster), the cluster corresponding to
the edge $(u,v)$ (formally, the binary cluster with boundaries $u$ and $v$),
and the clusters corresponding to all of the rakes of vertices $c_1, c_2, ...$ that raked into $v$ (formally, all unary clusters whose boundary is the vertex $v$). The cluster's representative is the vertex $v$, and its boundary is the vertex $u$.

\begin{figure}[H]
  \centering
  \includegraphics[width=0.45\columnwidth]{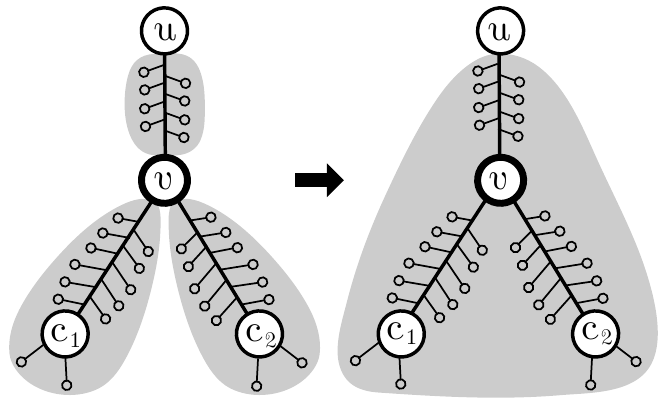}
\end{figure}

\item When a vertex $v$ is compressed between the vertices $u$ and $w$, a binary
cluster is formed that contains the vertex $v$ (a base cluster), the clusters
corresponding to the edges $(u,v)$ and $(v,w)$ (formally, the binary clusters with boundaries $u$ and $v$ and $v$ and $w$ respectively), and the clusters corresponding to all of the rakes of vertices $c_1, c_2, ...$ that raked into $v$ (formally, all unary clusters whose boundary is the vertex $v$). The cluster's representative is the vertex $v$, and its boundaries are the vertices $u$ and $w$.

\begin{figure}[H]
  \centering
  \includegraphics[width=0.65\columnwidth]{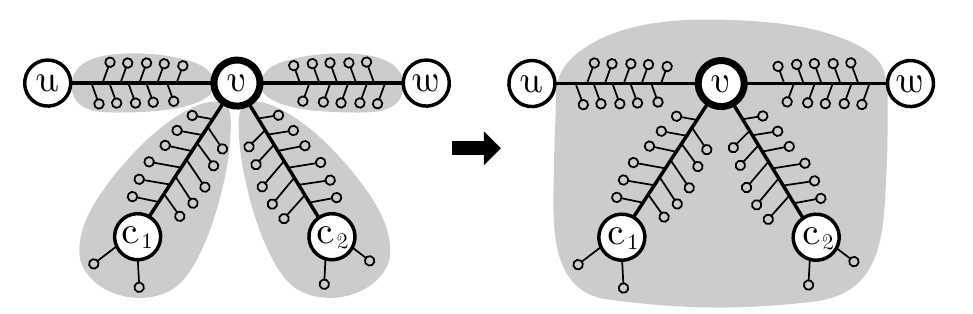}
\end{figure}

\item When a vertex $v$ finalizes, a nullary cluster is formed that contains the
vertex $v$ (a base cluster), and the clusters corresponding to all of the rakes of vertices $c_1, c_2, ...$ that raked into $v$ (formally, all unary clusters whose boundary is the vertex $v$). The cluster's representative is the vertex $v$.

\begin{figure}[H]
  \centering
  \includegraphics[width=0.55\columnwidth]{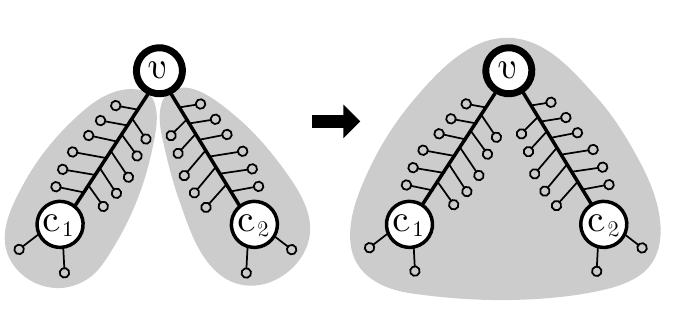}
\end{figure}

\end{enumerate}

\subsection{Proof of Theorem~\ref{thm:rc-batch-queries}}\label{appendix:rc-batch-queries}

\begin{proof}
  The proof is almost identical to that of Theorem~\ref{thm:tc-costs}. First,
  note that we can associate RC tree nodes to rounds by assigning them to
  the round in which their representative was deleted. Consider
  round $r = \log_{1/\beta}(1 + n/k)$, using the $\beta$ given in
  Lemma \ref{lem:forest-shrink}. We bound the number of RC tree nodes touched before
  and after round $r$. Since each root-to-leaf path touches at most one node
  per round, there are at most $O(k \log(1 + n/k))$ touched nodes before round $r$.
  After round $r$, we apply the bound given in Lemma~\ref{lem:forest-shrink} to
  show that the total number of vertices remaining and hence the total number
  of RC tree nodes is at most
  \begin{equation}
  \begin{split}
  \expct{\sum_{i \geq r} |V^i|} &\leq \sum_{i \geq r} \beta_i n  \\
   &= O\left( \frac{nk}{n + k} \right) \\
  &= O\left( \frac{k}{1 + \frac{k}{n}} \right) \\
  &= O(k).
  \end{split}
  \end{equation}
  Therefore, the total number of touched nodes is at most
  \begin{equation}
  O\left(k\log\left(1 + \frac{n}{k}\right) \right) + O(k) = O\left(k\log\left(1 + \frac{n}{k}\right) \right),
  \end{equation}
  as desired.
\end{proof}

\end{document}